\newcommand{\longsquiggly}{\xymatrix{{}\ar@{~>}[r]&{}}}
\newcommand{\oneRound}
{\xymatrix@C=1.5em{{}\ar@{~>}[r]&{}}}
\newcommand{\oneRoundNoCrash}
{\overset{\text{no crash}}
{\xymatrix@C=3.5em{{}\ar@{~>}[r]&{}}}}
\newcommand{\oneRoundSecCrash}
{\overset{\text{sec. crash}}
{\xymatrix@C=4.5em{{}\ar@{~>}[r]&{}}}}
\newcommand{\oneRoundMainCrash}
{\overset{\text{main crash}}
{\xymatrix@C=4.5em{{}\ar@{~>}[r]&{}}}}
\newcommand{\manyRounds}{\overset{\text{*}}{\xymatrix@C=1.5em{{}\ar@{~>}[r]&{}}}}
\newcommand{\ie}{\emph{i.e., \xspace}}
\newcommand{\rot}{\texttt{rot}}
\newcommand{\vtarget}{\ensuremath{v_{\mathit{target}}}\xspace}
\newcommand{\confAxis}[4][]{
\draw[#1] ({360/\n * #2 - 90}:\radius+#3) --  ({360/\n * (\n/2 + #2)-90}:\radius+#4);
}
\newcommand{\confAngle}[1]{
360/\n * #1  -90
}
\newcommand{\confRing}{
\draw (\confAngle{0}:\radius) arc (\confAngle{0}:\confAngle{\n}:\radius);
\foreach \s in {0,...,\n}
{
  \node[draw, fill=white, circle,inner sep=0.07cm] (r-\s) at (\confAngle{\s}:\radius) {};
}
}
\newcommand{\confRobot}[2][]{
\node[robot,#1]  at (\confAngle{#2}:\radius) {};
}
\newcommand{\minipageFigure}[6]{
\begin{minipage}{#1\textwidth}\centering
%\begin{center}
\begin{tikzpicture}
\tikzstyle{robot}=[circle,fill=black!100,inner sep=0.05cm]
  
\def\n{#2}
\def\radius{#3}

\confRing

#6

\end{tikzpicture}
\captionof{figure}{#4} #5
%\end{center}
\end{minipage}
}
\newcommand{\NE}{\ensuremath{\mathit{NE}}\xspace}
\newcommand{\LTwo}{\ensuremath{\mathit{L2}}\xspace}
\newcommand{\LThree}{\ensuremath{\mathit{L3}}\xspace}
\newcommand{\LFour}{\ensuremath{\mathit{L4}}\xspace}
\newcommand{\LFive}{\ensuremath{\mathit{L5}}\xspace}
\newcommand{\QNE}{\ensuremath{\mathit{QNE}}\xspace}
\newcommand{\Periodic}{\ensuremath{\mathit{P}}\xspace}
\newcommand\MoveMain{\texttt{M$_{\texttt{main}}$}\xspace}
\newcommand\MoveSecondary{\texttt{M$_{\texttt{secondary}}$}\xspace}
\newcommand\MoveLTwo{\texttt{M$_{\LTwo}$}\xspace}
\newcommand\MoveOpposite{\texttt{M$_{\texttt{opp}}$}\xspace}
\newcommand\MoveSame{\texttt{M$_{\texttt{same}}$}\xspace}
\newcommand{\LThreeFigure}{
\begin{tikzpicture}
    \draw (-120:2) arc (-120:-60:2);
    \node[draw, fill=white, circle,inner sep=0.07cm] (r1) at (-110:2) {};
    \node[draw, fill=black, circle,inner sep=0.05cm] (r1) at (-110:2) {};
    \node[draw, fill=white, circle,inner sep=0.07cm] (r1) at (-100:2) {};
    %\node[draw, fill=black, circle,inner sep=0.05cm] (r1) at (-100:2) {};
    \node[draw, fill=white, circle,inner sep=0.07cm] (r1) at (-90:2) {};
    \node[draw, fill=black, circle,inner sep=0.05cm] (r1) at (-90:2) {};
    \node[draw, fill=white, circle,inner sep=0.07cm] (r1) at (-80:2) {};
    %\node[draw, fill=black, circle,inner sep=0.05cm] (r1) at (-80:2) {};
    \node[draw, fill=white, circle,inner sep=0.07cm] (r1) at (-70:2) {};
    \node[draw, fill=black, circle,inner sep=0.05cm] (r1) at (-70:2) {};
\end{tikzpicture}
}
\newcommand{\LThreeFigureP}{
\begin{tikzpicture}
    \draw (-120:2) arc (-120:-60:2);
    \node[draw, fill=white, circle,inner sep=0.07cm] (r1) at (-110:2) {};
    \node[draw, fill=black, circle,inner sep=0.05cm] (r1) at (-110:2) {};
    \node[draw, fill=white, circle,inner sep=0.07cm] (r1) at (-100:2) {};
    %\node[draw, fill=black, circle,inner sep=0.05cm] (r1) at (-100:2) {};
    \node[draw, fill=white, circle,inner sep=0.07cm] (r1) at (-90:2) {};
    \node[draw, fill=black, circle,inner sep=0.05cm] (r1) at (-90:2) {};
    \node[draw, fill=white, circle,inner sep=0.07cm] (r1) at (-80:2) {};
    \node[draw, fill=black, circle,inner sep=0.05cm] (r1) at (-80:2) {};
    \node[draw, fill=white, circle,inner sep=0.07cm] (r1) at (-70:2) {};
    %\node[draw, fill=black, circle,inner sep=0.05cm] (r1) at (-70:2) {};
\end{tikzpicture}
}
\newcommand{\LThreeFigurePS}{
\begin{tikzpicture}
    \draw (-120:2) arc (-120:-60:2);
    %\node[draw, fill=white, circle,inner sep=0.0cm] (r1) at (-110:2) {};
    \node[draw, fill=white, circle,inner sep=0.07cm] (r1) at (-110:2) {};
    \node[draw, fill=white, circle,inner sep=0.07cm] (r1) at (-100:2) {};
    \node[draw, fill=black, circle,inner sep=0.05cm] (r1) at (-100:2) {};
    \node[draw, fill=white, circle,inner sep=0.07cm] (r1) at (-90:2) {};
    \node[draw, fill=black, circle,inner sep=0.05cm] (r1) at (-90:2) {};
    \node[draw, fill=white, circle,inner sep=0.07cm] (r1) at (-80:2) {};
    \node[draw, fill=black, circle,inner sep=0.05cm] (r1) at (-80:2) {};
    \node[draw, fill=white, circle,inner sep=0.07cm] (r1) at (-70:2) {};
    %\node[draw, fill=black, circle,inner sep=0.05cm] (r1) at (-70:2) {};
\end{tikzpicture}
}
\newcommand{\LFourFigure}{
\begin{tikzpicture}
    \draw (-120:2) arc (-120:-60:2);
    \node[draw, fill=white, circle,inner sep=0.07cm] (r1) at (-110:2) {};
    \node[draw, fill=black, circle,inner sep=0.05cm] (r1) at (-110:2) {};
    \node[draw, fill=white, circle,inner sep=0.07cm] (r1) at (-100:2) {};
    \node[draw, fill=black, circle,inner sep=0.05cm] (r1) at (-100:2) {};
    \node[draw, fill=white, circle,inner sep=0.07cm] (r1) at (-90:2) {};
    %\node[draw, fill=black, circle,inner sep=0.05cm] (r1) at (-90:2) {};
    \node[draw, fill=white, circle,inner sep=0.07cm] (r1) at (-80:2) {};
    \node[draw, fill=black, circle,inner sep=0.05cm] (r1) at (-80:2) {};
    \node[draw, fill=white, circle,inner sep=0.07cm] (r1) at (-70:2) {};
    \node[draw, fill=black, circle,inner sep=0.05cm] (r1) at (-70:2) {};
\end{tikzpicture}
}
\newcommand{\LFourPrimeFigure}{
\begin{tikzpicture}
    \draw (-120:2) arc (-120:-60:2);
    \node[draw, fill=white, circle,inner sep=0.07cm] (r1) at (-110:2) {};
    \node[draw, fill=black, circle,inner sep=0.05cm] (r1) at (-110:2) {};
    \node[draw, fill=white, circle,inner sep=0.07cm] (r1) at (-100:2) {};
    \node[draw, fill=black, circle,inner sep=0.05cm] (r1) at (-100:2) {};
    \node[draw, fill=white, circle,inner sep=0.07cm] (r1) at (-90:2) {};
    \node[draw, fill=black, circle,inner sep=0.05cm] (r1) at (-90:2) {};
    \node[draw, fill=white, circle,inner sep=0.07cm] (r1) at (-80:2) {};
    %\node[draw, fill=black, circle,inner sep=0.05cm] (r1) at (-80:2) {};
    \node[draw, fill=white, circle,inner sep=0.07cm] (r1) at (-70:2) {};
    \node[draw, fill=black, circle,inner sep=0.05cm] (r1) at (-70:2) {};
\end{tikzpicture}
}
\newcommand{\LFiveFigure}{
\begin{tikzpicture}
    \draw (-120:2) arc (-120:-60:2);
    \node[draw, fill=white, circle,inner sep=0.07cm] (r1) at (-110:2) {};
    \node[draw, fill=black, circle,inner sep=0.05cm] (r1) at (-110:2) {};
    \node[draw, fill=white, circle,inner sep=0.07cm] (r1) at (-100:2) {};
    \node[draw, fill=black, circle,inner sep=0.05cm] (r1) at (-100:2) {};
    \node[draw, fill=white, circle,inner sep=0.07cm] (r1) at (-90:2) {};
    \node[draw, fill=black, circle,inner sep=0.05cm] (r1) at (-90:2) {};
    \node[draw, fill=white, circle,inner sep=0.07cm] (r1) at (-80:2) {};
    \node[draw, fill=black, circle,inner sep=0.05cm] (r1) at (-80:2) {};
    \node[draw, fill=white, circle,inner sep=0.07cm] (r1) at (-70:2) {};
    \node[draw, fill=black, circle,inner sep=0.05cm] (r1) at (-70:2) {};
\end{tikzpicture}
}
\newcommand{\LTwoFigure}{
\begin{tikzpicture}
    \draw (-120:2) arc (-120:-60:2);
    \node[draw, fill=white, circle,inner sep=0.07cm] (r1) at (-110:2) {};
    %\node[draw, fill=black, circle,inner sep=0.05cm] (r1) at (-110:2) {};
    \node[draw, fill=white, circle,inner sep=0.07cm] (r1) at (-100:2) {};
    \node[draw, fill=black, circle,inner sep=0.05cm] (r1) at (-100:2) {};
    \node[draw, fill=white, circle,inner sep=0.07cm] (r1) at (-90:2) {};
    \node[draw, fill=black, circle,inner sep=0.05cm] (r1) at (-90:2) {};
    \node[draw, fill=white, circle,inner sep=0.07cm] (r1) at (-80:2) {};
    %\node[draw, fill=black, circle,inner sep=0.05cm] (r1) at (-80:2) {};
    \node[draw, fill=white, circle,inner sep=0.07cm] (r1) at (-70:2) {};
    %\node[draw, fill=black, circle,inner sep=0.05cm] (r1) at (-70:2) {};
\end{tikzpicture}
}
\newcommand{\SK}[1]{\noindent\textcolor{pink}{{\fontfamily{phv}\selectfont SK-NOTE: #1}}}
\def\qed{\hfill$\Box$\par\vskip1em}
\begin{document}
\title{Stand-Up Indulgent Gathering on Rings\thanks{This paper was supported by JSPS KAKENHI No. 23H03347, 23K11059, and ANR project SAPPORO (Ref. 2019-CE25-0005-1).}}
\titlerunning{Stand-Up Indulgent Gathering on Rings}

\author{Quentin Bramas\inst{1}%\orcidID{0000-0003-0612-5616} 
\and Sayaka Kamei\inst{2}%\orcidID{0000-0003-1716-3028}
\and Anissa Lamani\inst{1}%\orcidID{0000-0001-7774-8402} 
\and\\ 
S\'ebastien Tixeuil \inst{3}%\orcidID{0000-0002-0948-7172}
}

\institute{
University of Strasbourg, ICube, CNRS, France. 
\and
Hiroshima University, Japan.
\and
Sorbonne University, CNRS, LIP6, IUF, France. }

\maketitle

\begin{abstract}
We consider a collection of $k \geq 2$ robots that evolve in a ring-shaped network without common orientation, and address a variant of the crash-tolerant gathering problem called the \emph{Stand-Up Indulgent Gathering} (SUIG): given a collection of robots, if no robot crashes, robots have to meet at the same arbitrary location, not known beforehand, in finite time; if one robot or more robots crash on the same location, the remaining correct robots gather at the location of the crashed robots. We aim at characterizing the solvability of the SUIG problem without multiplicity detection capability. 
\keywords{Stand-Up Indulgent Gathering, oblivious robots,  fault-tolerance, discrete universe.} 
\end{abstract}

%\newpage
\section{Introduction}

Mobile robotic swarms have recently gained significant interest within the Distributed Computing scientific community. For over two decades, researchers have aimed to characterize the precise assumptions that enable basic problems to be solved for robots represented as disoriented (each robot has its own coordinate system), oblivious (robots cannot remember past actions), and dimensionless points evolving in Euclidean space. One key assumption is the scheduling assumption~\cite{PGN2019}, where robots can execute their protocol fully synchronized (FSYNC), completely asynchronously (ASYNC), or with a fairly chosen subset of robots scheduled for synchronous execution (SSYNC).

The \emph{gathering} problem~\cite{SY1999} serves as a benchmark due to its simplicity of expression (robots must gather at an unknown location in finite time) and its computational tractability (two SSYNC-scheduled robots cannot gather without additional assumptions).

As the number of robots increases, so does the probability of at least one failure. However, relatively few works consider the possibility of robot failures. One simple failure is the \emph{crash} fault, where a robot unpredictably stops executing its protocol. In gathering, one must prescribe the expected behavior in the presence of crash failures. Two variants have been studied: \emph{weak} gathering expects all correct (non-crashed) robots to gather regardless of crashed robot positions, while \emph{strong} gathering (also known as stand-up indulgent gathering -- SUIG) expects correct robots to gather at the unique crash location.
In continuous Euclidean space, weak gathering is solvable in the SSYNC model~\cite{ND2006,ZSS2013,QS2015,XMP2020}, while SUIG (and its two-robot variant, stand-up indulgent rendezvous -- SUIR) is only solvable in the FSYNC model~\cite{QAS2020,QAS2021,BLT23j}.

A recent trend~\cite{PGN2019} has been to move from a continuous environment to a discrete one. In the discrete setting, robots can occupy a finite number of locations and move between neighboring locations. This neighborhood relation is conveniently represented by a graph whose nodes are locations, leading to the ``robots on graphs'' denomination. The discrete setting is better suited for describing constrained physical environments or environments where robot positioning is only available from discrete sensors~\cite{TPRLSX19}. From a computational perspective, the continuous and discrete settings are unrelated: on one hand, the number of possible configurations (i.e., robot positions) is much more constrained in the discrete setting than in the continuous setting (only a finite number of configurations exist in the discrete setting); on the other hand, the continuous setting offers algorithm designers more flexibility to solve problematic configurations (e.g., using arbitrarily small movements to break symmetry).
To our knowledge, the only previous work considering SUIR and SUIG in a discrete setting is due to Bramas et al.~\cite{techreport_line}. They provide a complete characterization of the problems solvability for line-shaped networks. More specifically, they show that the problem is unsolvable in SSYNC, and provide a solution in FSYNC.

In this paper, we consider the discrete setting and aim to further characterize the solvability of the SUIR and SUIG problems when the visibility radius of robots is unlimited. In a set of locations whose neighborhood relation is represented by a ring-shaped graph, robots must gather at a single unknown location. Furthermore, if one or more robots crash at the same location, all robots must gather at that location. With respect to previous work~\cite{techreport_line}, a ring topology induces a lot more possibility for symmetric situations, and a robot may no longer use the knowledge of being an edge robot (that is, being located at an extremal position with respect to all other robots on the line).
We present the following results. With two robots, the SUIR problem is unsolvable in SSYNC, and solvable in FSYNC only when the initial configuration is not periodic and node-node symmetric. 
With three robots and more, impossibility results for SSYNC SUIR and FSYNC gathering (from a periodic or edge-edge symmetric configuration) extend to SUIG. Also, we prove that three robots cannot solve SUIG in FSYNC when the size of the ring is multiple of $3$. 
On the positive side, we present a SUIG algorithm for more than three robots and an odd $n$ in FSYNC, assuming an initial node-edge symmetric configuration that is not periodic. Among other difficulties, the key technique enabling the solution is to have at least two robots on distinct locations move in any situation (to withstand a crash), that can be of independent interest for further developing stand-up indulgent solutions.

The sequel of the paper is organized as follows: Section~\ref{sec:model} formalizes our execution model, Section~\ref{sec:suir} presents results for the case of two robots, while Section~\ref{sec:suig} concentrates on the main case. Finally, concluding remarks are provided in Section~\ref{sec:conclusion}.

\section{Model}
\label{sec:model}
The rings we consider consist of $n$ unlabeled nodes $u_0, u_1, u_2, \dots, u_{n-1}$ such that $u_i$ is connected to both $u_{((i-1)\mod n)}$ and $u_{((i+1)\mod n)}$. 

Let $R=\{r_1, r_2, \dots, r_k\}$ denote the set of $k\geq 2$ autonomous robots.
Robots are assumed to be anonymous (i.e., they are indistinguishable), uniform (i.e., they all execute the same program, and do not make use of localized parameters), oblivious (i.e., they cannot remember their past actions), and disoriented (i.e., they have no common sense of direction).
We assume that robots do not know $k$ (the number of robots).
In addition, they are unable to communicate directly, however, they have the ability to sense their environment, including the position of the other robots. Based on the configuration resulting of the sensing,
they decide whether to move or to stay idle. Each robot $r$ executes cycles infinitely many times: \emph{(i)} first, $r$ takes a snapshot of its environment to see the position of the other robots (LOOK phase), \emph{(ii)} according to the snapshot, $r$ decides to move or not to move (COMPUTE phase), and \emph{(iii)} if $r$ decides to move, it moves to one of its neighbor nodes depending on the choice made during the COMPUTE phase (MOVE phase). We call such cycles \emph{LCM cycles}.
We consider the \emph{FSYNC} model in which at each time instant $t$, called \emph{round}, each robot executes a LCM cycle synchronously with all other robots, and the \emph{SSYNC} model, where a non-empty subset of robots chosen by an adversarial scheduler executes a LCM cycle synchronously, at each time instant $t$.

A node $u$ is \emph{occupied} if it hosts at least one robot, otherwise, $u$ is \emph{empty}. We say that there is a \textit{tower} (or a \textit{multiplicity}) on a node $u$ if $u$ hosts (strictly) more than one robot. The ability for the robots to identify nodes that host a tower is called \textit{multiplicity detection}. 
When robots can detect a tower, we say that they have a \textit{strong} multiplicity detection if they know the exact number of robots participating to the tower. Otherwise, they have a \textit{weak} multiplicity detection capability. 
In this work, we assume that the robots have no multiplicity detection, \emph{i.e.}, they cannot decide whether an occupied node is a tower.

During the execution, robots move and occupy some nodes, their positions form the \emph{configuration} $C_t=(d(u_0), d(u_1), \dots)$ of the system at time $t$, where $d(u_i) = 0$ if the node $u_i$ is empty and $d(u_i) = 1$ if $u_i$ is occupied.
Initially, we assume that the configuration is distinct, \emph{i.e.}, each node is occupied by at most one robot. 
For the analysis, a robot $r$ may also denotes the node occupied by $r$.

A sequence of consecutive occupied nodes is a \emph{block}.
Similarly, a sequence of consecutive empty nodes is a \emph{hole}.

The \emph{distance} between two nodes $u_i$ and $u_j$ is the number of edges in a shortest path connecting them. The distance between two robots $r_i$ and $r_j$ is the distance between the two nodes occupied by $r_i$ and $r_j$, respectively.
We denote the distance between $u_i$ and $u_j$ (resp. $r_i$ and $r_j$) $dist(u_i,u_j)$ (resp. $dist(r_i,r_j)$).
Two robots or two nodes are \emph{neighbors} if the distance between them is one.

An \emph{algorithm} $A$ is a function mapping the snapshot (obtained during the LOOK phase) to a neighbor node (or the same node should the robot decide not to move) destination to move to (during the MOVE phase).  
An \emph{execution} ${\cal E}=(C_0, C_1,\dots)$ of $A$ is a sequence of configurations, where $C_0$ is an initial configuration, and every configuration $C_{t+1}$ is obtained from $C_{t}$ by applying $A$. 

The \emph{angle} between two nodes $u_i$ and $u_j$ is determined by the number of nodes in the path that connects them in either a clockwise or counterclockwise direction. This angle can be expressed as either the distance or its $n$-complement. For our analysis, we assume a fixed direction that is unknown to the robots and can be arbitrary unless otherwise specified. The rotation of an angle $x$ is denoted by $\rot(x)$, which maps each node to the $x$-th node in a specific direction. It's important to note that $\rot(x)$ is equivalent to a rotation of angle $2x\pi/n$ radians in the same direction when mapping the $n$-node ring to a circle in Euclidean space with evenly spaced nodes. The image of node $u_i$ after rotation $\rot(x)$ is represented by $\rot(x)(u_i)$.

An automorphism is a bijection $f$ of nodes such that $f(u_i)$ and $f(u_j)$ are neighbors if and only if $u_i$ and $u_j$ are neighbors. In an $n$-ring, there are two types of automorphisms: rotational symmetries and reflections. These depend on whether the angles between nodes are preserved or not. The group of symmetries is the dihedral group $D_{n}$, which contains $n$ rotational symmetries and $n$ reflections. A reflection can be classified as an \emph{edge-edge symmetry}, \emph{node-edge symmetry}, or \emph{node-node symmetry} depending on whether it has no fixed points, exactly one fixed point, or exactly two fixed points, respectively. The axis of a reflection is the imaginary line that passes through the fixed points and/or fixed edges. This axis corresponds to the axis of symmetry when the ring is mapped to the Euclidean plane. In the remainder of this text, a symmetry always refers to a reflection.
 
We say that a set $S$ of nodes is \emph{symmetric}, or contains an axis of symmetry, if there is a reflection $f$ such that $f(S) = S$. We also say that a set $S$ of nodes is \emph{rotational symmetric} or \emph{periodic} if there is a rotation $f$ such that $f(S) = S$.

Configurations that have no tower are classified into three classes~\cite{REA2008}. Let $C$ be a configuration, and let $S$ be the set of occupied nodes in $C$. 
Then, $C$ is \emph{symmetric} if $S$ is symmetric. 
Also, $C$ is \emph{periodic} if there exists a non-trivial rotational symmetry $f$ such that $f(S)= S$ (it is represented by a configuration of at least two copies of a sub-sequence). Otherwise, $C$ is \emph{rigid}.

We define the view of a robot $r$ located on node $u_i$ as the pair $\{S^+,S^-\}$ ordered in the lexicographic order where $S^+(t) = d(u _i), d(u_{i+1}), \dots, d(u_{i+n-1})$ and $S^-(t) = d(u_i), d(u_{i-1}), \dots, d(u_{i-(n-1)}).$

\subsection*{Problem definition}
A robot is said to be \emph{crashed} at time $t$ if it is not activated at any time $t^\prime \geq t$.
That is, a crashed robot stops execution and remains at the same position indefinitely.
We call the node where robots crash \emph{crashed node}.
We assume that robots cannot distinguish the crashed node in their snapshots (i.e., they are able to see the crashed robots, but they remain unaware of their crashed status).
%Since we consider oblivious robots, we assume that the crashes, if any, occur at the start of the execution.
A crash, if any, can occur at any round of the execution. Furthermore, if more than one crash occur, all crashes occur at the same location. In our model, since robots do not have multiplicity detection capability, a location with a single crashed robot and with multiple crashed robots are indistinguishable, and are thus equivalent. In the sequel, for simplicity, we consider at most one crashed robot.

We consider the \emph{Stand Up Indulgent Gathering} (SUIG) problem.
An algorithm solves the SUIG problem if, for any initial configuration $C_0$ with up to one crashed node, and for any execution ${\cal E}=(C_0, C_1,\dots)$, there exists a round $t$ such that all robots (including the crashed one) gather at a single node, not known beforehand, for all $t^\prime\geq t$.
We call the special case with $k=2$, the \emph{Stand Up Indulgent Rendezvous} (SUIR) problem.
Observe that in the case of multiple crashed nodes, the SUIG problem becomes unsolvable. Hence, in the sequel, we assume that if several robots crash, they crash on the same node. So, overall, there is at most one crashed node.

%%%%%%%%%%%%%%%%%%%%%%
%
%           RING
%
%%%%%%%%%%%%%%%%%%%%%%%

%\section{Ring-shaped networks}
%\label{sec:ring}

%We consider in this section robots evolving in a ring-shaped network. As for Section \ref{sec:line}, we address both the rendezvous and the gathering problem. 

\section{Stand Up Indulgent Rendezvous}
\label{sec:suir}

In this section, we consider the case where two robots occupy initially distinct nodes. 
We first recall an impossibility result for gathering on rings as Theorem~\ref{thm:FSYNC-edge-ring-impossible}.
\begin{theorem}[\cite{REA2008}]\label{thm:FSYNC-edge-ring-impossible}
The gathering problem is unsolvable in FSYNC on ring networks starting from a periodic or an edge-edge symmetric configuration, even with strong multiplicity detection, even with two robots.
\end{theorem}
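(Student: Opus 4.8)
The plan is to exploit the classical FSYNC principle that an adversary controlling the robots' unknown, disoriented local coordinate systems can force any symmetry of the initial configuration to be preserved at every round, while a gathered configuration---a single occupied node---admits neither of the two symmetry types named in the statement. Combining these two facts contradicts the existence of any gathering algorithm, and since the simplest instances (two antipodal robots for the periodic case, two mirror-image robots for the edge-edge case) are covered by the general argument, the qualifier ``even with two robots'' comes for free.

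First I would treat the periodic case. Suppose $C_0$ is periodic, so there is a non-trivial rotation $f$ with $f(S_0)=S_0$, where $S_0$ is the set of occupied nodes. Because $f$ is orientation-preserving, any two robots in the same orbit under $f$ record identical snapshots, and strong multiplicity detection does not separate them since the multiplicities are themselves permuted by $f$. The adversary fixes the robots' local orientations once and for all so that they are consistent with $f$; then in FSYNC, where every robot is activated each round, robots of a common orbit compute and perform $f$-conjugate moves. I would prove by induction on the round $t$ that $f(S_t)=S_t$, the inductive step amounting to checking that $f$-conjugate moves out of an $f$-symmetric configuration land in an $f$-symmetric configuration. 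Since a non-trivial rotation of the ring has no fixed node, a single-node configuration is never fixed by $f$, so periodicity forbids gathering forever.

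Next I would handle the edge-edge symmetric case analogously, replacing the rotation $f$ by a reflection whose axis passes through two edge midpoints and hence fixes no node. Here the robots split into mirror pairs---with no robot lying on the axis, precisely because the symmetry is edge-edge---and the adversary assigns mirror-image orientations to the two robots of each pair (one robot's clockwise is the other's counterclockwise), so that their computed moves are mirror images of one another. The same induction yields $f(S_t)=S_t$ for all $t$, and as the axis contains no node, no single-node configuration is $f$-symmetric; thus gathering again cannot occur.

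The main obstacle, and the step deserving the most care, is the symmetry-preservation induction under FSYNC: one must argue that the adversary's single choice of orientations keeps symmetric robots indistinguishable across \emph{all} rounds, including after towers form (where the lack of multiplicity detection is immaterial, since even strong detection is symmetric), and that no robot can \emph{escape} onto a fixed node of the symmetry---impossible for a rotation, which has no fixed nodes, and for an edge-edge reflection, whose axis meets only edges. The remaining ingredient, that a gathered configuration is neither periodic nor edge-edge symmetric, is then immediate: its unique occupied node would have to be a fixed point of any stabilizing symmetry, which excludes non-trivial rotations and edge-edge reflections alike.
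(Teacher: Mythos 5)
The paper does not prove this statement itself: it is imported verbatim from the cited reference \cite{REA2008}, so there is no in-paper proof to compare against. Your symmetry-preservation argument is the standard proof of that result and is correct: the adversary's one-time choice of local orientations (orientation-consistent for a rotation, mirror-reversed for a reflection) makes symmetric robots' views identical even with strong multiplicity detection, the induction keeps $f(S_t)=S_t$ in FSYNC, and a single occupied node cannot be fixed by a non-trivial rotation or lie on an edge-edge axis. This matches the approach of the cited source, so nothing further is needed.
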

A direct consequence of this theorem is that SUIR is also impossible in the same setting (indeed, when there are no crashes, SUIR must guarantee rendezvous).  

Next, we consider initial configurations that are node-edge symmetric (with $n$ being odd). 
%When no fault occurs, the two robots have to move towards the hole with an odd size to achieve rendezvous. However, if one of them crashes, whenever the other robot moves, it must apply the same behavior (move toward the hole with an odd size), and hence move back to its previous position. Using this argument, we can show that:}
\begin{lemma}
%\begin{theoremEnd}[disc]{lemma}
\label{lem:FSYNC-edge-node-ring-impossible}
On rings, starting from a node-edge symmetric configuration, there exists no deterministic algorithm that solves the SUIR problem by two oblivious robots without additional hypothesis, even in FSYNC.
\end{lemma}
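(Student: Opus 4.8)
The plan is to exploit a structural fact that is specific to odd rings. Since $n$ is odd, any two robots split the ring into one arc of even length and one arc of odd length, so \emph{every} two-robot configuration is node-edge symmetric, with the unique fixed node sitting at the midpoint of the even arc (and the fixed edge at the midpoint of the odd arc). Consequently the entire execution, with or without a crash, lives among node-edge symmetric configurations, and I can reason about it uniformly. I would assume for contradiction that a deterministic algorithm $A$ solves SUIR in FSYNC from node-edge symmetric configurations, and work in FSYNC throughout (which is the strong form, since impossibility there subsumes SSYNC).

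First I would analyze the crash-free behaviour. In any two-robot symmetric configuration the two robots have mirror-image views $\{S^+,S^-\}$, so under $A$ they are prescribed mirror-symmetric moves; hence the configuration stays symmetric about the same axis, and the common destination of a gathering step must be a fixed point of the reflection, i.e. the unique fixed node. I then zoom in on an \emph{adjacent} configuration, two robots on neighbouring nodes $u_i,u_{i+1}$ (node-edge symmetric, axis through the edge $(u_i,u_{i+1})$). The only mirror-symmetric joint moves are: both stay, both cross (each steps onto the other's node), or both step apart. The first two leave the occupied set unchanged — in particular crossing returns the same configuration and creates no tower — so neither can ever achieve gathering. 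Because an adjacent configuration is itself a legal node-edge symmetric initial configuration, the crash-free requirement forces $A$ to prescribe \emph{both step apart} on every adjacent configuration.

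The contradiction then comes from a single crash. I start from the node-edge symmetric configuration $\{u_{i-1},u_{i+1}\}$ and let the adversary crash the robot on $u_{i+1}$ at round $0$; by the SUIR specification the correct robot must eventually gather on the crashed node $u_{i+1}$. Since robots move one step per round and the crashed robot is immobile, the only way the correct robot can \emph{first} occupy $u_{i+1}$ is to step onto it from a neighbour, i.e. from a configuration $\{u_i,u_{i+1}\}$ or $\{u_{i+1},u_{i+2}\}$, which is exactly an adjacent configuration. But on such a configuration $A$ is forced to move the (correct) robot \emph{away} from $u_{i+1}$, so the correct robot is repelled precisely when it reaches distance one and can never land on the crashed node. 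Hence gathering at $u_{i+1}$ never occurs, contradicting correctness.

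I expect the delicate step to be the second one: pinning down that "both step apart'' is genuinely forced on adjacent configurations. This requires (i) the model-level fact that two robots crossing a single edge end up swapped on the same two nodes, creating no multiplicity and no progress, and (ii) ruling out asymmetric joint moves via a clean argument that mirror views yield mirror decisions for a deterministic, oblivious, disoriented robot. I would also verify that the crash scenario is legitimate and that the "first arrival'' argument is airtight (the correct robot is never on $u_{i+1}$ at round $0$ and can only reach it through an adjacent configuration). Finally, it is worth noting why the argument is tight: for even $n$ the analogue of the final step is a distance-two, node-node symmetric configuration admitting an inward gathering move, which is exactly why node-node configurations escape this obstruction.
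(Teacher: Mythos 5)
Your proof is correct and follows essentially the same route as the paper's: on an odd ring every two-robot configuration is node-edge symmetric, and a crash forces the correct robot to be repelled exactly when it becomes adjacent to the crashed one, so it can never land on the crash node. Your version is in fact tighter than the paper's — the paper asserts that the surviving robot must always "move toward the odd hole" and hence oscillates, whereas you justify the key step (that stay/cross on an adjacent configuration make no progress, so step-apart is forced there by the crash-free specification) and localize the contradiction to the first arrival at the crashed node.
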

%\end{theoremEnd}
%\begin{proofEnd}
\begin{proof}
Assume for the purpose of contradiction that such an algorithm exists. 
As robots are oblivious and have no common sense of direction, they can either move towards each other or move away from each other.

Observe first that an even sized ring cannot be node-edge symmetric. So, we consider the case when the size of ring $n$ is odd (then, every configuration is node-edge symmetric). 
When no fault occurs, the robots have to move towards 
the hole with an odd size to achieve rendezvous.  
However, if one of them crashes, whenever the other robot moves, it must apply the same behavior (move toward the hole with an odd size), and hence move back to its previous position. That is, the non-crashed robot moves back and forth between two nodes forever. As a result, the two robots never meet, a contradiction. \qed
%\end{proofEnd}
\end{proof}

The remaining case to address is when the configuration is not periodic but node-node symmetric. Observe that such a symmetry only happens in even-sized rings. 
In this case, there is a simple strategy in FSYNC, which consists in making the robots move towards each other taking the shortest path (as the configuration is not periodic, there always exists a shortest path and a longest path between the two robots). 
So, if none of the two robots crashes, then the robots eventually meet on a single node (that is on the initial axis of symmetry). 
Otherwise, as a single robot moves, the smallest distance between the two robots becomes odd. However, the correct robot can deduce that the other has crashed, and hence continues to move towards the other robot through the shortest path. Hence, the distance between the two robots decreases at each round, and the two robots are eventually located on two adjacent nodes. Finally, to achieve the rendezvous, the non-crashed robot simply moves to the adjacent occupied node. 

In the SSYNC model, SUIR remains impossible, even starting from a node-node symmetric configuration that is not periodic. %\textcolor{red}{
Indeed, as both robots need to move at each time instant, if we assume a node-node symmetric configuration where a single robot $r$ is activated, whatever the direction chosen by $r$, a node-edge symmetric configuration is reached. By Theorem \ref{thm:FSYNC-edge-ring-impossible} and Lemma \ref{lem:FSYNC-edge-node-ring-impossible}, we can deduce: %}

\begin{lemma}
%\begin{theoremEnd}[disc]{lemma}
\label{lem:FSYNC-node-node-ring-impossible}
On rings, starting from a node-node symmetric distinct configuration that is not periodic, there exists no SUIR algorithm in SSYNC, without additional hypotheses.
%\end{theoremEnd}
\end{lemma}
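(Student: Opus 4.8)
The plan is to derive the impossibility in SSYNC by combining the two previously established impossibility results through a reachability argument: I want to show that from a node-node symmetric, non-periodic configuration, the adversarial SSYNC scheduler can force the system into a configuration that is already known to be unsolvable. Since SUIR requires both robots to participate in progress (two robots that never move cannot meet, and by the disorientation argument of Lemma~\ref{lem:FSYNC-edge-node-ring-impossible} any correct behaviour must eventually move a robot), I would first argue that in the node-node symmetric case every robot's prescribed action must be a genuine move: if the algorithm ever kept both robots idle on a node-node symmetric configuration, rendezvous could never be achieved, so on such configurations the algorithm moves at least one robot.

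Next I would let the adversary activate only a single robot~$r$ in that round. The key geometric observation is that the starting configuration has exactly two robots symmetric about a node-node axis, so moving one of them by a single edge breaks the two-fold node-node symmetry and produces a configuration whose only possible axis passes through a node and an edge, i.e.\ a node-edge symmetric configuration. (Because $n$ must be even for a node-node symmetry to exist, one must check that the resulting two-robot configuration is indeed node-edge symmetric rather than edge-edge symmetric or rigid; with two robots on an even ring this is forced once the inter-robot distance becomes odd.) This is exactly the hypothesis under which Lemma~\ref{lem:FSYNC-edge-node-ring-impossible} rules out any SUIR algorithm, even in FSYNC and hence a fortiori in SSYNC.

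I would then close the argument by invoking the two earlier results directly: the reached configuration is node-edge symmetric, so by Lemma~\ref{lem:FSYNC-edge-node-ring-impossible} no algorithm can solve SUIR from it; and should the single move instead have produced a periodic or edge-edge symmetric configuration, Theorem~\ref{thm:FSYNC-edge-ring-impossible} applies. Either way, the adversary steers the execution into a provably unsolvable configuration, contradicting the assumed existence of an SSYNC SUIR algorithm. This is precisely the reduction sketched in the paragraph preceding the statement, so the lemma follows.

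The main obstacle I anticipate is the case analysis verifying that \emph{every} choice the single activated robot can make lands in a configuration covered by the prior impossibilities. Since the robot is disoriented it may move toward or away from the other robot, and I must confirm that both one-edge displacements yield a node-edge (or otherwise unsolvable) symmetric configuration rather than a rigid one from which the earlier lemmas give no leverage. For two robots on a ring this reduces to a short distance-parity check, but it is the step where the node-node versus node-edge distinction genuinely matters, and it is where I would spend the care rather than on the high-level reduction itself.
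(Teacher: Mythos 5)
Your overall reduction is the same as the paper's: argue the algorithm cannot keep both robots idle, let the adversary activate a single robot, and observe that the one-step move lands in a configuration already covered by an earlier impossibility result. However, your ``key geometric observation'' is stated backwards, and it is exactly the step you yourself flag as the one needing care. A node-node symmetry exists only on an \emph{even} ring, while a node-edge symmetry exists only on an \emph{odd} ring (the paper notes this explicitly in the proof of Lemma~\ref{lem:FSYNC-edge-node-ring-impossible}); the two cases are mutually exclusive, so the configuration reached after the single activation can never be node-edge symmetric, and Lemma~\ref{lem:FSYNC-edge-node-ring-impossible} is not the tool to invoke. The correct parity check runs as follows: in a non-periodic node-node symmetric two-robot configuration both arc distances are even and distinct; after one robot moves one edge, both distances become odd; on an even ring two odd arc lengths force the axis through the midpoint of each arc, which is an edge, so the resulting configuration is edge-edge symmetric (or periodic, if the two odd distances happen to coincide). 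In either case Theorem~\ref{thm:FSYNC-edge-ring-impossible} applies directly, which is precisely how the paper concludes.

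Your closing hedge (``should the single move instead have produced a periodic or edge-edge symmetric configuration, Theorem~\ref{thm:FSYNC-edge-ring-impossible} applies'') happens to cover the case that actually occurs, so the argument is salvageable by simply deleting the node-edge branch and promoting the fallback to the main line. But as written, the proposal predicts the wrong outcome of the distance-parity check and routes the proof through a lemma whose hypothesis cannot be met; you appear to have followed the paper's informal paragraph preceding the statement (which loosely says a node-edge symmetric configuration is reached) rather than working out the parity yourself. Fix the classification and cite only Theorem~\ref{thm:FSYNC-edge-ring-impossible}, and the proof matches the paper's.
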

\begin{proof}
%\begin{proofEnd}
Suppose for the sake of contradiction that there exists such an algorithm $A$.
%\textcolor{red}{
If the configuration is \emph{only} node-node symmetric and not periodic, the distance between the two robots in both direction is always even, but they are different.%} 
%\textcolor{red}{AL: if the two robots are not on the nodes that are on the axis of symmetry, the distance between them in each direction is always even. The configuration is not periodic if these distances are different? } . Indeed, if both distances (on each direction) are odd, then the configuration is periodic. By contrast, if the distance in only one direction is odd, then the configuration is node-edge symmetric, hence it is not node-node symmetric. 
Since $A$ solves SUIR problem, robots cannot remain idle. Consider a schedule where from this configuration a single robot is activated, and thus moves in either direction. We reach a configuration in which the distance between the two robots in each direction is odd. 
Hence, the configuration is either periodic or edge-edge symmetric. By Theorem~\ref{thm:FSYNC-edge-ring-impossible}, algorithm $A$ cannot solve SUIR from this configuration, a contradiction.\qed
%\end{proofEnd}
\end{proof}

Overall, SUIR is feasible in FSYNC in even-sized rings, starting from a non-periodic, node-node symmetric configuration, and impossible in all other settings.

\section{Stand-Up Indulgent Gathering}
\label{sec:suig}

In this section, we first extend existing impossibility results for SSYNC gathering and SSYNC SUIR to show that SUIG is not solvable in rings in SSYNC without additional hypothesis, which justifies our later assumption to only consider FSYNC. 

%\ST{The case of node-node symmetric and rigid are missing. This section lacks a main theorem stating the algorithm works with some conditions in FSYNC.}
%QB: what if we start without symmetry?

%\subsection{More Impossibilities}

%\textcolor{red}{
In SSYNC, we can show that there always exists an execution where a configuration with only two occupied nodes is reached. As robots have no multiplicity detection, from such a configuration, it is known that the gathering is impossible. That is:%}

%\begin{theoremEnd}[disc]{lemma}
\begin{lemma}
\label{lem:no-SSYNC-ring}
In rings, the SUIG problem is unsolvable in SSYNC without additional hypotheses.
%\end{theoremEnd}
\end{lemma}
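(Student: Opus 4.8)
The plan is to show that the adversarial SSYNC scheduler can always drive any execution into a configuration with exactly two occupied nodes, and then invoke the fact that gathering (hence SUIG) is impossible from two occupied nodes when robots lack multiplicity detection. The latter fact follows from the SUIR impossibility results already established: a configuration with two occupied nodes is indistinguishable (to the robots, who cannot detect towers) from a two-robot configuration, so by Theorem~\ref{thm:FSYNC-edge-ring-impossible} and Lemmas~\ref{lem:FSYNC-edge-node-ring-impossible} and~\ref{lem:FSYNC-node-node-ring-impossible}, no algorithm can guarantee gathering from it. Thus the crux is the reachability claim: from any initial configuration, the scheduler can force the number of occupied nodes down to two.

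First I would argue that the scheduler can always reduce the number of occupied nodes as long as there are at least three. The key observation is that the scheduler in SSYNC chooses which non-empty subset of robots to activate, so it can activate robots one block or one pair at a time to merge occupied nodes without accidentally completing a full gathering. Concretely, I would pick two occupied nodes that the algorithm would cause to approach each other (such a pair must exist, since a correct algorithm must make progress toward gathering and hence must, in some activation, move some robot toward another occupied node), and activate only the robots needed to merge those two occupied nodes into one. Since robots have no multiplicity detection, once two robots share a node they behave identically forever and can be treated as a single ``virtual robot''; the scheduler activates them together from then on. Each such merge strictly decreases the number of occupied nodes by one.

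The main obstacle I anticipate is ruling out the degenerate scenario where the algorithm never moves any robot toward another occupied node (so no merge is available), yet still somehow solves SUIG. I would handle this by a contradiction argument: if from the current configuration no activation of any subset ever decreases the occupied-node count, then gathering can never occur, contradicting correctness. More carefully, I would need to ensure that when selecting which robots to activate, the scheduler does not create a configuration from which the algorithm is stuck, and does not leapfrog two occupied nodes past each other (crossing rather than merging); in the discrete ring this is controlled by activating a robot only when its computed destination is an already-occupied neighbor node, or by advancing it one step at a time toward its target. A subtle point is that the absence of multiplicity detection means the algorithm cannot tell a singly-occupied node from a tower, so its behavior on the merged configuration is forced to match its behavior on the corresponding lower-cardinality configuration, which is exactly what lets the induction proceed.

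Iterating this merging argument, the scheduler drives the configuration to exactly two occupied nodes in finitely many steps. At that point the robots cannot distinguish the configuration from a genuine two-robot instance, and the SUIR impossibility results apply verbatim: whatever two-occupied-node configuration is reached (node-edge, edge-edge, node-node, or periodic), one of Theorem~\ref{thm:FSYNC-edge-ring-impossible}, Lemma~\ref{lem:FSYNC-edge-node-ring-impossible}, or Lemma~\ref{lem:FSYNC-node-node-ring-impossible} forbids gathering in SSYNC. This contradicts the assumed correctness of the SUIG algorithm, completing the proof.
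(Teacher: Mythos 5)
Your overall skeleton matches the paper's: show that the adversary can force a configuration with exactly two occupied nodes, then observe that without multiplicity detection such a configuration is indistinguishable from a two-robot instance (the scheduler activates co-located robots together, so each tower behaves as one robot), and invoke Theorem~\ref{thm:FSYNC-edge-ring-impossible} and Lemmas~\ref{lem:FSYNC-edge-node-ring-impossible} and~\ref{lem:FSYNC-node-node-ring-impossible}. The second half of your argument is correct and is exactly what the paper does.

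The gap is in the reachability claim, and it is genuine. Your plan is to merge occupied nodes one pair at a time, which requires two things you do not establish: (i) that in \emph{every} reachable configuration with at least three occupied nodes there exists a pair of occupied nodes that the algorithm will bring together under some activation, and (ii) that the scheduler can isolate exactly one such merge per step (your own caveats about leapfrogging and about robots whose destination is an empty node show that neither is automatic --- e.g., two towers may both be sent to a common empty node, in which case activating only one of them does not reduce the count). Your fallback contradiction (``if no activation ever decreases the occupied-node count, gathering never occurs'') only shows that the count eventually drops under some schedule; it does not give you the unit-step control your induction needs, nor does it guarantee you land on exactly two rather than jumping from, say, five to three to one. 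The paper avoids all of this with a much shorter argument: since no correct algorithm may ever reach a configuration with exactly two occupied nodes, consider a crash-free, fully synchronous execution (a legal SSYNC schedule) and let $C$ be the \emph{last} configuration with more than two occupied nodes; the next configuration must then be the gathered one, and since robots move one edge per round, $C$ has exactly three occupied nodes, all within distance one of the gathering point. The adversary now deviates at this single step and activates only the robots on one of the two non-target nodes, producing exactly two occupied nodes --- a contradiction. In short, you do not need to engineer merges throughout the execution; you only need to let the algorithm do the work and break the synchrony at the final step. I recommend replacing your merging induction with this last-configuration argument.
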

\begin{proof}
%\begin{proofEnd}
From Theorem~\ref{thm:FSYNC-edge-ring-impossible}, Lemmas~\ref{lem:FSYNC-edge-node-ring-impossible} and \ref{lem:FSYNC-node-node-ring-impossible}, we know that the SUIR is not solvable in SSYNC. As we assume that robots do not have multiplicity detection, in the execution of every algorithm that solves the SUIG problem, no configuration shall contain exactly two occupied nodes. Indeed, the SSYNC adversary may activate all robots on the same node synchronously, making them behave as a single robot, leading to the previous impossibility case with two robots. 
Now, assume for the purpose of contradiction that there exists an algorithm $A$ that solves the SUIG problem, and let $C$ be the last configuration in the execution of $A$ where the number of occupied nodes is strictly larger than 2 (i.e., the configuration reached just before gathering is achieved). As robots can only move by one edge at each round, the number of occupied nodes in $C$ is 3. As we consider the SSYNC model, the adversary can activate only the robots located on a single node. That is, a configuration $C'$ in which the number of occupied nodes is equal to 2 is reached. A contradiction. \qed
%\end{proofEnd}
\end{proof}

%\textcolor{red}{When $k=3$ and the configuration is symmetric, there is one robot that is located on the axis of symmetry. By assuming that this robot is the robot that crashes, the only way to achieve the gathering is to make the other robots move towards the crashed robot. When $n\mod 3 = 0$, there exists a non empty set of configurations from which a periodic configuration is reached. Hence:  }
\begin{lemma}
%\begin{theoremEnd}[disc]{lemma}
There exists no algorithm in FSYNC that solves the SUIG problem by $k=3$ oblivious robots when $n\mod 3 = 0$.
%\end{theoremEnd}
\end{lemma}
\begin{proof}
%\begin{proofEnd}
Assume by contradiction that the Lemma does not hold and let consider the configuration shown in Figure \ref{fig:impossible-r3}. As the robot that is on the axis of symmetry could be the faulty one, at least one of the two other robots should also be allowed to move. Since the configuration is symmetric, we cannot distinguish between the two other robots. That is both of them are allowed to move. If they move in the opposite side of the robot on the axis, eventually they become neighbors, and therefrom they can only exchange their positions. That is, in order to perform the gathering, the algorithm needs to have a rule making the two robots move towards the robot on the axis of symmetry. By moving a periodic configuration is reached. 
As robots cannot distinguish both of their sides of the ring, %. The scheduler is one that chooses the destination to take. It can hence, 
the two robots move back to their respective previous position. That is, the robots move back and forth forever. A contradiction. \qed
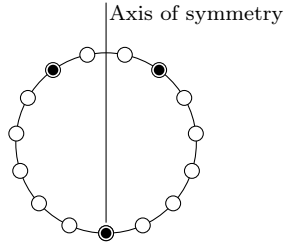
\begin{figure}[hb]
\begin{center}
\begin{tikzpicture}
\tikzstyle{robot}=[circle,fill=black!100,inner sep=0.05cm]
  
\def\n{15}
\def\radius{1.2cm}

\confRing

\confAxis{0}{-4}{19}

\confRobot{-6}
\confRobot{6}
\confRobot{0}

\node[] at (1.2,1.7) {\scriptsize Axis of symmetry};

\end{tikzpicture}
\caption{An instance of an initial configuration with $k=3$ when $n\mod 3 = 0$}\label{fig:impossible-r3}
        \end{center}
\end{figure}
\end{proof}
%\end{proofEnd}

As the gathering is known to be impossible from a periodic or an edge-edge symmetric configuration~\cite{REA2008}, 
%\subsection{An algorithm for initial node-edge symmetry} %\\ \textcolor{red}{($k>3$ or $k>2 \land n\mod 3\neq 0$)}} AL: moved it to the first paragraph
we propose an algorithm which solves the SUIG problem on ring-shaped networks of size $n$ by $k$ robots such that $k>3$ or ($k>2 \land n\mod 3\neq 0$) assuming node-edge symmetric initial configurations. In the sequel, we use $r$ to represent both a robot $r$ and its occupied node.

\begin{figure}[b]
 \begin{minipage}{0.48\textwidth}
\begin{center}
\begin{tikzpicture}
\tikzstyle{robot}=[circle,fill=black!100,inner sep=0.05cm]
  
\def\n{21}
\def\radius{1.2cm}

\confRing

\confAxis{0}{-4}{20}

\confRobot{-10}
\confRobot{10}
\confRobot[blue]{-9}
\confRobot[blue]{9}
\confRobot{-6}
\confRobot{6}
\confRobot[green]{-2}
\confRobot[green]{2}

\node[] (targettxt) at (\confAngle{-2.5}:2.2) {\vtarget};

\node[] at (1.3,1.8) {\scriptsize Axis of symmetry};

\node[] (mr) at (\confAngle{3.5}:2.7) {\scriptsize main robots};
\node[] (sr) at (\confAngle{7}:2.8) {\scriptsize secondary robots};

\draw[-latex] (targettxt) to [bend right=15] (r-0) ;
\draw[-latex] (mr) to [bend left=10] (r-2) ;
\draw[-latex] (sr) to [bend right=10] (r-9) ;

\end{tikzpicture}
\captionof{figure}{An instance of a node-edge symmetric configuration}\label{fig:example-symmetric}
        \end{center}
\end{minipage}
~~~~\begin{minipage}{0.48\textwidth}
\begin{center}
\begin{tikzpicture}
\tikzstyle{robot}=[circle,fill=black!100,inner sep=0.05cm]
  
\def\n{21}
\def\radius{1.2cm}

\confRing

\confAxis{0}{-4}{20}

\confRobot{-10}
\confRobot{10}
\confRobot{-9}
\confRobot{9}
\confRobot{-6}
\confRobot{6}
\confRobot[]{-2}
\confRobot[]{1}

%\node[] (targettxt) at (\confAngle{-2.5}:2.3) {\vtarget};
%\draw[-latex] (targettxt) to [bend right=20] (r-0) ;

\node[] at (1.6,1.8) {\scriptsize Quasi-axis of symmetry};

\node[] (mr) at (\confAngle{3}:2.8) {\scriptsize leading robot};

\node[] at (\confAngle{1}:0.9) {\scriptsize $r$};
\node[] at (\confAngle{-1}:0.9) {\scriptsize $\bar{r}$};
\node[] at (\confAngle{-2}:1.5) {\scriptsize $r'$};
\node[] at (\confAngle{2}:1.5) {\scriptsize $\bar{r'}$};

\draw[-latex] (mr) to [bend left=20] (r-1) ;

\draw[-{latex}] (\confAngle{-2}:\radius+19) arc (\confAngle{-2}:\confAngle{0.8}:\radius+19);
\node[] at (\confAngle{-0.5}:\radius+25) {\small $d$};

\end{tikzpicture}
\captionof{figure}{An instance of a $\{r,r'\}$-quasi-node-edge symmetric configuration}\label{fig:example-q-symmetric}
        \end{center}
    \end{minipage}
    
\end{figure}

\begin{definition}
Let $C$ be a configuration having a node-edge axis of symmetry $L$. The \emph{target node}, denoted \vtarget, is the only node on $L$, i.e., the only node that is its own symmetric with respect to $L$.
The \emph{main} robots are the two robots that are closest to $\vtarget$, but not on \vtarget. The \emph{secondary} robots are the two robots that are the farthest to \vtarget that have an empty adjacent node (See Figure~\ref{fig:example-symmetric}). 
\end{definition}

%In our algorithm, the goal is to move the main robots toward the target node. If the configuration remains node-edge symmetric, the same process continues until all the robots reaches the target node. If the main robots cannot move due to potential rotational symmetry, then we move the secondary robots as well. If the obtained configuration is not node-edge symmetric, then a robot is crashed and the configuration is quasi-symmetric, as defined below (See Figure~\ref{fig:example-q-symmetric}.). The formal description of our solution is given in Algorithm \ref{algo:ring}.

In our algorithm, the goal is to move the main robots toward the target node. If the configuration remains node-edge symmetric, the same process continues so that all the robots reaches the target node. Nevertheless, in some cases, by moving, the main robots create a periodic configuration which makes the SUIG problem unsolvable. Hence, in this special case, together with the main robots, the secondary robots move as well. 

During this process, if a non node-edge symmetric configuration is reached, then a robot has crashed and the configuration is quasi-symmetric, as defined below (See Figure~\ref{fig:example-q-symmetric}). 

\begin{definition}
Let $C$ be a configuration and $r, r'$ be occupied nodes by robots $r, r'$ in $C$. We say $C$ is $\{r,r'\}$-quasi-node-edge symmetric, with quasi-axis $L$, if $C\setminus\{r\}\cup\{\bar{r'}\}$ is node-edge symmetric of axis $L$, where $\bar{r}\notin C$ (resp. $\bar{r'}\notin C$) is the symmetric of $r$ (resp. $r'$) with respect to $L$ and $r'\in C$ is at distance one from $\bar{r}$.
The target node $v_{target}$ of a quasi-axis is the middle node between $r$ and $\bar{r}$ (there is only one middle node since $n$ is odd). $v_{target}$ is also the only node on $L$.
\end{definition}

From a symmetric configuration, a quasi-axis can be created by our algorithm in two cases: Either the two main robots are ordered to move and one of them is crashed, or the two secondary robots are ordered to move and one of them is crashed. We now define the gap distance associated with a quasi-axis, to distinguish between the two cases.
Recall that, between two points in the ring, since $n$ is odd, there are two paths, one is of odd length and one is of even length. In particular, the target node of a $\{r,r'\}$-quasi-axis of symmetry is located in the middle of the path of odd length between $r$ and $r'$.
\begin{definition}
Let $C$ be $\{r,r'\}$-quasi-node-edge symmetric with target node $v_{target}$. If there is no robot in the path of odd length (between $r$ and $r'$), or if only $v_{target}$ is occupied, then the \emph{gap distance} associated with $\{r,r'\}$ is the length of this path. Otherwise, the \emph{gap distance} associated with $\{r,r'\}$ is the length of the path with even length.
\end{definition}
From this definition, we see that the parity of a quasi-axis determines whether the crashed robot is one of the main robots or not (if not, it is one of the secondary robots, see the algorithm for more details). That is, if the gap distance is odd, the crashed robot is one of the main robots, otherwise, one of the secondary robots.  % (if not, it is one of the farthest \emph{that is able to move}, see the algorithm for more details).

\begin{definition}
Let $C$ be $\{r,r'\}$-quasi-node-edge symmetric with quasi-axis $L$, gap distance $d$, and the path $P$ between $r$ and $r'$ of length $d$. The \emph{leading robot} associated with $\{r,r'\}$ is the robot whose symmetric with respect to $L$ is in $P$. If $r$ is the leading robot, the orientation associated with $\{r,r'\}$ (or with the quasi-axis $L$) is the orientation of the ring such that $r = \rot(d)(r')$ (we also have $\rot(1)(r) = \bar{r'} \notin P$). We call the orientation such that $r' = \rot(d)(r)$ the orientation opposite to $\{r,r'\}$ (or with the quasi-axis $L$).
\end{definition}

Similarly we say that a configuration is \emph{$r$-quasi-periodic} if, by moving $r$ to an adjacent node, the configuration becomes periodic.\\

%Let $P$ be the path that connect $r$ to $r'$ that does not contain other robots. We say the \emph{gap distance $d$} of $\{r,r'\}$ is the length of $P$. The leading robot associated with axis $L$ is the robot in $\{r,r'\}$ such that is symmetric with respect to $L$ is in $P$. 
%If $r$, resp $r'$, is the leading robot, then \emph{direction associated with the axis $L$} is the direction of the ring such that the angle between between $r'$ and $r$, resp. $r$ and $r'$, is the gap distance i.e., $r = \rot(d)(r')$, resp. $r' = \rot(d)(r')$. In other word, if $r$ is the leading robot, then $\rot(1)(r) = \bar{r'} \notin P$.

%We partition the set of configuration into the following subsets:
%\begin{itemize}
%    \item $NE$: the set of node-edge symmetric configurations.
%    \item $Q_{odd}$: the set of quasi-node-edge  symmetric configurations with an even gap distance
%    \item $Q_{even}$: the set of quasi-node-edge  symmetric configurations with an even gap distance
%\end{itemize}

Let $C$ be a possible initial configuration. The idea of the algorithm is to move the main robots towards $v_{target}$. If neither of the two main robots crashes, then the reached configuration remains symmetric and the main robots remain the ones to move. By contrast, if one of the two main robots crashes, then a quasi-node-edge-symmetric configuration $C'$ is reached. Assume without loss of generality that $C'$ is $\{r,r'\}$-quasi-node-edge symmetric with $r$ as the leading robot. The strategy in this case is to move all the robots to their adjacent nodes with respect to the orientation opposite to $\{r,r'\}$. By doing so, the new configuration $C''$ is not only node-edge symmetric but also equivalent to the configuration reached if both main robots have moved. By repeating this process, robots move and eventually reach $v_{target}$. Observe that in some cases, by moving the main robots, a periodic configuration can be reached. To avoid this, robots compute the configuration to be reached when the main robots move. If it is periodic, then in addition to the main robots, the secondary robots move as well. Their destination is their adjacent free node towards $v_{target}$ if it exists, otherwise, they move in the opposite direction of $v_{target}$. \\

Each move of our algorithm is executed on a configuration if the latter satisfies the move's condition. To define the conditions and define formally our solution, we use the following predicates:
\begin{itemize}
    \item $\NE(k)$: is the set of node-edge symmetric configurations having $k$ occupied nodes.
    \item $\QNE(k)$: is the set of quasi-node-edge symmetric configurations having $k$ occupied nodes. We further partition this set depending on the number of quasi-axes and their orientation: 
    \begin{itemize}
    \item $\QNE_{(a,b)}(k)\subset \QNE(k)$ contains $a$ quasi-axes having a given orientation and $b$ quasi-axes with the opposite orientation. 
    \item $\QNE_{(*,b)}(k) = \cup_{a\geq 0} \QNE_{(a,b)}(k)$.
    \end{itemize}
    \item\LFiveFigure{} $\LFive\subset\NE(5)$ (last five): is the set of node-edge symmetric configurations consisting of a block of size five.
    \item\LFourFigure{} $\LFour\subset\NE(4)$ (last four): is the set of node-edge symmetric configurations consisting of four occupied nodes in blocks of size two separated by one empty node.
    \item\LFourPrimeFigure{} $\LFour'\subset\QNE(4)$ (last four prime):  is the set of configurations consisting of four occupied nodes in a block of size three, followed by an empty node and a single occupied node.
    \item\LThreeFigure{} $\LThree\subset\NE(3)$ (last three):  is the set of node-edge symmetric configurations consisting of three occupied nodes, two at distance 4 from each-other and one in the middle.
    \item\LThreeFigureP{} $\LThree'\subset\QNE(3)$ (last three prime):  is the set of configurations consisting of two adjacent occupied nodes, followed by an empty node and a single occupied node.
    \item\LThreeFigurePS{} $\LThree''\subset\NE(3)$ (last three prime-second):  is the set of configurations consisting of a block of size three.
    \item \LTwoFigure{} $\LTwo\subset\NE(2)$ (last two): is the set of node-edge symmetric configurations consisting of two adjacent occupied nodes, and one of those two nodes only hosts crashed robots.
    \item $\Periodic$: is the set of periodic configurations.
%    \item $\OtherConf$: is the set of configurations that are not in $\NE\cup \QNE\cup \Periodic$.
    %\item\QB{I think we should define "last three prime" and "last three second" obtained from "last free" before reaching "last two"} \AL{done}
%    \SK{I added that each set is subset of QNE or NE. We do not use the set ``$O$''. Can we delete it?} 
\end{itemize}

We define three possible moves when the configuration is node-edge symmetric:
\begin{itemize}
    \item \MoveMain: the main robots are ordered to move towards the target node.
    \item \MoveSecondary: the secondary robots move towards the target node if it is empty, or away from the target node otherwise.
    \item \MoveLTwo: all the robots move towards the other occupied node.
\end{itemize}
And two moves when the configuration is not node-edge symmetric, to recover after a crash:
\begin{itemize}
    \item \MoveSame: all the robots move in orientation opposite to the quasi-axis.
    \item \MoveOpposite: the two potentially crashed robots are ordered to move towards their respective target node.
\end{itemize}

The formal description of our solution is given in Algorithm \ref{algo:ring}. 
In the following sections, we show that, when executing Algorithm~\ref{algo:ring} starting from a configuration in $\NE(k)\setminus\Periodic$ ($k>3$), the gathering is achieved by the configuration transitions represented in Figure~\ref{fig:configuration diagram}, regardless of the presence or absence of crashed robots. 

\newcommand{\Reachable}{\ensuremath{\mathcal{C}}\xspace}
%\textcolor{red}{QB: Maybe not used}
Let $\Reachable$ be the set of configurations that are reachable by our algorithm. 
We write $C\oneRound C'$ to denote that $C'$ is obtained from $C\in \Reachable$ after one round of executing Algorithm~\ref{algo:ring}. Similarly $C\overset{*}{\oneRound} C'$ denotes that $C'$ is obtained from $C\in \Reachable$ after a finite number of rounds executing Algorithm~\ref{algo:ring}. To study different cases, we also write $C\oneRoundNoCrash C'$ (resp.  $C\oneRoundMainCrash C'$, or  $C\oneRoundSecCrash C'$) when each robot ordered to move reaches its destination (resp. one of the main robots is crashed, or one of the secondary robots is crashed).

\begin{figure}[t]
    \centering
    \begin{tikzpicture}[scale=0.9]\small
    \node[rectangle, draw] (NEk) at (0, 0) {$\NE(k)\setminus \Periodic$};
    \node[rectangle, draw] (NEkm1) at (0, -1) {$\NE(k-1)\setminus \Periodic$};
    \node[rectangle, draw] (NEkm2) at (0, -2) {$\NE(k-2)\setminus \Periodic$};
    \draw[->] (NEk)  to [out=0, in=90] (2,-1) to [out=270, in=0] (NEkm1);
    \draw[->] (2,-1) to [out=270, in=-10] (NEk);
    \draw[->] (2,-1) to [out=270, in=0] (NEkm2);

    \node[rectangle, draw] (QNEk) at (-3, -1) {$\QNE\setminus \Periodic$};
    \draw[dashed, ->] (NEk) to [out=180, in=40] (QNEk);
    \draw[dashed,->] (QNEk) to [out=0, in=190] (NEk);
    \draw[dashed,->] (QNEk) to [out=0, in=180] (NEkm1);
    \draw[dashed,->] (QNEk) to [out=0, in=180] (NEkm2);

    \node[rectangle, draw] (L3) at (0, -3.2) {$\LThree$};
    \draw[dotted] (0, -2.5) to (0, -2.75);
    \draw[->] (0, -2.75) to (L3);

    \node at (2,0.2) {\scriptsize Lemmas \ref{lem:NE(k) -> NE(k'), without crash}--\ref{lem:NE(k) -> L3, without crash}};
    \node at (-1,-2.7) {\scriptsize Lemma \ref{lem:NE(k) -> L3}};
    \node at (-2.5,-2) {\scriptsize Lemmas \ref{lem:NE(k) -> NE(k'), secondary crash}--\ref{lem:NE(k) -> NE(k'), main crash}};

\begin{comment}[shift={(7,0)}]
    \node[rectangle, draw] (NEk) at (0, 0) {$\NE(5)\setminus \Periodic$};
    \node[rectangle, draw] (NEkm1) at (0, -1) {$\LFive$};
    \draw[->] (NEk)  to [out=0, in=90] (2,-0.5) to [out=270, in=0] (NEkm1);
    \draw[->] (2,-0.5) to [out=270, in=-10] (NEk);

    \node[rectangle, draw] (QNEk) at (-3, -0.5) {$\QNE(5)\setminus \Periodic$};
    \draw[dashed, ->] (NEk) to [out=180, in=40] (QNEk);
    \draw[->] (QNEk) to [out=0, in=190] (NEk);
    \draw[->] (QNEk) to [out=0, in=180] (NEkm1);
\end{comment}

\begin{scope}[shift={(7,0)}]
    \node[rectangle, draw] (NEk) at (0, 0) {$\NE(4)\setminus \Periodic$};
    \node[rectangle, draw] (L4) at (0, -1) {$\LFour$};
    \node[rectangle, draw] (L4p) at (-3, -1.8) {$\LFour'$};
    \node[rectangle, draw] (L3) at (0, -2.2) {$\LThree$};
    \draw[->] (NEk)  to [out=0, in=90] (2,-0.5) to [out=270, in=0] (L4);
    \draw[->] (2,-0.5) to [out=270, in=-10] (NEk);

    \node[rectangle, draw] (QNEk) at (-3, -0.5) {$\QNE(4)\setminus \Periodic$};
    \draw[dashed, ->] (NEk) to [out=180, in=40] (QNEk);
    \draw[dashed, ->] (QNEk) to [out=0, in=190] (NEk);
    \draw[dashed, ->] (QNEk) to [out=0, in=180] (L4);

    \draw[->] (L4) to (L3);
    \draw[dashed, ->] (L4) to[out=200, in=45] (L4p);
    \draw[dashed, ->] (L4p)  to[out=-10, in=180] (L3);

    \node at (2,0.2) {\scriptsize Lemmas \ref{lem:NE(4) -> NE(4) main at distance 2, without crash}--\ref{lem:NE(4) -> L4, with main crash}};
    \node at (-1.5,-1.6) {\scriptsize Lemmas \ref{lem:NE(4) -> NE(4), with secondary crash}--\ref{lem:NE(4) secondary adjacent -> NE(4) non adjacent, with crash}};
    \node at (1,-1.6) {\scriptsize Lemmas \ref{lem:L4 -> L3}};
\end{scope}

\begin{scope}[shift={(3.6,-2.1)}]
    %\node[rectangle, draw] (L5) at (0, -2) {$\LFive$};
    \node[rectangle, draw] (L3) at (-2.5, -1.5) {$\LThree$};
    \node[rectangle, draw] (L3p) at (-1.25, -0.5) {$\LThree'$};
    \node[rectangle, draw] (L3pp) at (0, -1.5) {$\LThree''$};
    %\node[rectangle, draw] (L4) at (0, 0) {$\LFour$};

    \node[rectangle, draw] (L2) at (2, -0.5) {$\LTwo$};
    \node[rectangle, draw] (G) at (2, -1.5) {Gathered};

    %\draw[->] (L5) to (L3);
    %\draw[dashed, ->] (L5) to[out=180, in=90] (L4p);
    \draw[dashed, ->] (L3pp) to [out=10, in=180] (L2);
    \draw[dashed, ->] (L3) to [out=10, in=180] (L3p);
    \draw[dashed, ->] (L3p) to [out=-10, in=170] (L3pp);
    \draw[->] (L3) to [out=-10, in=190] (L3pp);
    \draw[->] (L3pp) to [out=-10, in=190] (G);
    \draw[->] (L2) to (G);

    \node at (0,-2) {\scriptsize Lemma \ref{lem:gather}};
    \node at (3,-1) {\scriptsize Lemma \ref{lem:L2}};
\end{scope}
\end{tikzpicture}
    \caption{Overview of the transitions between configurations. Dashed lines represent transitions that occur only when a crashed robot is ordered to move. Loops are shown to occur only a finite number of times, see the corresponding lemmas for more details.}
    \label{fig:configuration diagram}
\end{figure}
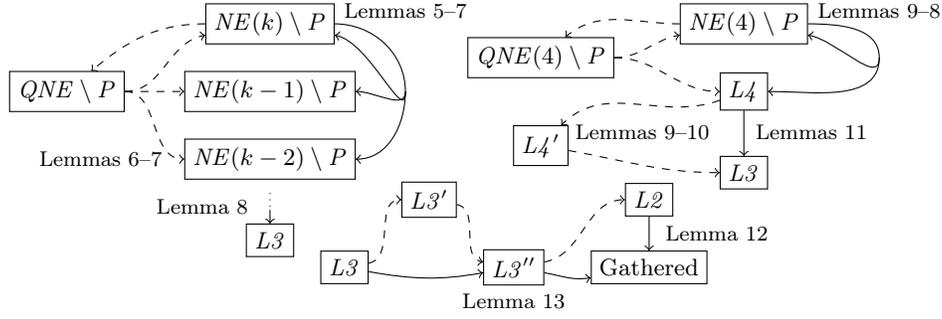

\begin{comment}
\textcolor{red}{QB: tried a table but it is not very readable.}\\
\begin{tabular}{rcl}
   \hline& $C\overset{\text{no crash}}{\oneRound} C'$&\\\hline\hline
    $C\in \LTwo$ & \MoveLTwo & Gathered \\\hline\hline
    &$C\in \NE(k)$ &\\\hline
    $k \in\{4,5\}$ and $C\notin \LFour\cup \LThree$ &\MoveSecondary & eventually $C'\in \LFour\cup \LThree$\\\hline
    almost periodic ? &\MoveMain and \MoveSecondary & $\NE(k')\setminus\Periodic$,  $k'\leq k$\\
   otherwise & \MoveMain &  and $|C'|\leq 3 \Rightarrow C'\in \LThree$ \\
   \hline
   \hline
    &$C\in \QNE(k)$ &\\\hline
   $C\in\QNE_{(*,0)}(k)$ & \MoveSame & $C'\in \NE\setminus\Periodic$\\
   \hline
   $C\in\QNE_{(1,1)}(k)$ & \MoveOpposite & $C'\in \NE\setminus\Periodic$\\
   \hline
\end{tabular}
\end{comment}

%
% Using minipageFigure command:
% \minipageFigure{percentage width of the page}{3 nodes}{radius}
% {caption}{label}
% {other draw command}
%  
\begin{algorithm}[t]
\uIf{$C\in \LTwo$}{
\MoveLTwo %: Move towards the adjacent occupied node
}
\uElseIf{$C\in \NE(k)$} {
    \uIf{$k \in \{4, 5\}$, $(n,k)\neq (7,4)$, $C\notin \LFour\cup\LFive$, and the main robots are adjacent to $v_{\mathit{target}}$\label{algo:45}}{
        \MoveSecondary (See Figure~\ref{fig:algo-sym-sec-to-target})
    }
    \uElseIf{Moving the main robots toward the target node does not create periodic configuration\label{algo:ring:line check not periodic}} {
        \MoveMain (See Figure~\ref{fig:algo-sym-main-to-target})
    }
    \uElse{
        \MoveMain and \MoveSecondary (See Figures~\ref{fig:algo-sym-ms-to-target} and~\ref{fig:algo-sym-ms-away-target}
        %, and~ \ref{fig:algo-sym-ms-to-target-2}
        ) \label{algo:ring:periodic}
    }
}
\uElseIf{$C\in\QNE_{(*,0)}$\label{alg:all}}{
\MoveSame (See Figure~\ref{fig:algo-quasi-sym})
}
%\textcolor{red}{\uElseIf{The configuration is quasi-node-edge-symmetric with an even gap distance}{??}}
\uElseIf{$C\in\QNE_{(1,1)}$}{
\MoveOpposite (See Figure~\ref{fig:algo-two opposite quasi axis})
}
\caption{SUIG algorithm on ring-shaped network}\label{algo:ring}
\end{algorithm}

\begin{figure}[t]
\minipageFigure{0.47}{21}{1.5cm}
{Only the secondary robots are ordered to move towards the target node.}
{\label{fig:algo-sym-sec-to-target}}
{
\confAxis{0}{-4}{20}
\confRobot[green]{-9}
\confRobot[green]{9}

\confRobot[]{-1}
\confRobot[]{1}
\confRobot[]{0}

\draw[-{latex},thick] (r-9) -- (r-8);
\draw[-{latex},thick] (r-12) -- (r-13);

\node[] (targettxt) at (\confAngle{-2.5}:2.2) {\vtarget};

\node[] at (1.3,1.8) {\scriptsize Axis of symmetry};

\draw[-latex] (targettxt) to [bend right=15] (r-0) ;
}~~~~
\minipageFigure{0.47}{21}{1.5cm}
{The main robots are ordered to move towards the target node.}
{\label{fig:algo-sym-main-to-target}}
{
\confAxis{0}{-4}{20}
\confRobot{-10}
\confRobot{10}
\confRobot[]{-9}
\confRobot[]{9}
\confRobot{-6}
\confRobot{6}
\confRobot[green]{-2}
\confRobot[green]{2}

\draw[-{latex},thick] (r-2) -- (r-1);
\draw[-{latex},thick] (r-19) -- (r-20);

\node[] (targettxt) at (\confAngle{-2.5}:2.2) {\vtarget};
\node[] at (1.3,1.8) {\scriptsize Axis of symmetry};
\draw[-latex] (targettxt) to [bend right=15] (r-0) ;
}
\end{figure}

\begin{figure}[t]
\minipageFigure{0.45}{21}{1.5cm}
{The main robots and the secondary robots (to avoid reaching a periodic configuration) are ordered to move towards the target node.}
{\label{fig:algo-sym-ms-to-target}}
{
\confAxis{0}{-4}{20}
\confRobot[blue]{-8}
\confRobot[blue]{8}
\confRobot{-6}
\confRobot{6}
\confRobot[green]{-2}
\confRobot[green]{2}

\draw[-{latex},thick] (r-2) -- (r-1);
\draw[-{latex},thick] (r-19) -- (r-20);
\draw[-{latex},thick] (r-8) -- (r-7);
\draw[-{latex},thick] (r-13) -- (r-14);

\node[] (targettxt) at (\confAngle{-2.5}:2.2) {\vtarget};
\node[] at (1.3,1.8) {\scriptsize Axis of symmetry};
\draw[-latex] (targettxt) to [bend right=15] (r-0) ;
}~~~~
\minipageFigure{0.5}{33}{2.0cm}
{The main robots are ordered to move towards the target node and, to avoid reaching a periodic configuration, the secondary robots move away from it (it is their only empty adjacent node).}
{\label{fig:algo-sym-ms-away-target}}
{
\confAxis{0}{-4}{20}

\confRobot[green]{-2}
\confRobot[green]{2}
\confRobot[]{-3}
\confRobot[]{3}
\confRobot[]{-4}
\confRobot[]{4}
\confRobot[]{-7}
\confRobot[]{7}
\confRobot[]{-8}
\confRobot[]{8}
\confRobot[]{-10}
\confRobot[]{10}
\confRobot[]{-12}
\confRobot[]{12}
\confRobot[]{-14}
\confRobot[]{14}
\confRobot[blue]{-15}
\confRobot[blue]{15}

\draw[-{latex},thick] (r-2) -- (r-1);
\draw[-{latex},thick] (r-31) -- (r-32);
\draw[-{latex},thick] (r-15) -- (r-16);
\draw[-{latex},thick] (r-18) -- (r-17);

\node[] (targettxt) at (\confAngle{-2.5}:2.7) {\vtarget};
\node[] at (1.3,2.4) {\scriptsize Axis of symmetry};
\draw[-latex] (targettxt) to [bend right=15] (r-0) ;
}
\begin{comment}
\minipageFigure{0.43}{27}{2cm}
{The main and secondary robots are ordered to move towards the target node. Indeed, moving only the main robots would result in a periodic configuration.\SK{Is this figure needed? The case is on Figure 7.}}
{\label{fig:algo-sym-ms-to-target-2}}
{
\confAxis{0}{-4}{20}
\confRobot{-13}
\confRobot{13}
\confRobot[blue]{-12}
\confRobot[blue]{12}
\confRobot{-10}
\confRobot{10}
\confRobot{-8}
\confRobot{8}
\confRobot{-6}
\confRobot{6}
\confRobot{-5}
\confRobot{5}
\confRobot{-4}
\confRobot{4}
\confRobot{-3}
\confRobot{3}
\confRobot[green]{-2}
\confRobot[green]{2}

\draw[-{latex},thick] (r-2) -- (r-1);
\draw[-{latex},thick] (r-25) -- (r-26);
\draw[-{latex},thick] (r-15) -- (r-16);
\draw[-{latex},thick] (r-12) -- (r-11);

\node[] (targettxt) at (\confAngle{-2.5}:2.6) {\vtarget};
\node[] at (1.4,2.4) {\scriptsize Axis of symmetry};
\draw[-latex] (targettxt) to [bend right=15] (r-0) ;
}
\end{comment}
\end{figure}

\begin{figure}[h]
\minipageFigure{0.51}{21}{1.5cm}
{When all the quasi-axis have the same orientation. All the robots are ordered to move (one of them is crashed).}
{\label{fig:algo-quasi-sym}}{
\confAxis{0}{-4}{20}

\confRobot{-10}
\confRobot{10}
\confRobot{-9}
\confRobot{9}
\confRobot{-6}
\confRobot{6}
\confRobot[label={[label distance=-0.05cm]45:{\scriptsize $r_1'$}}]{-2}
\confRobot[label={[label distance=-0.05cm]90:{\scriptsize $r_1$}}]{1}

\node[] at (1.7,1.85) {\scriptsize Quasi-axis of symmetry $L_1$};

\draw[-{latex}] (\confAngle{-2}:\radius+19) arc (\confAngle{-2}:\confAngle{0.8}:\radius+19);
\node[] at (\confAngle{-0.5}:\radius+15) {\small $d_1$};

\draw[-{latex},thick] (r-1) -- (r-0);
\draw[-{latex},thick] (r-6) -- (r-5);
\draw[-{latex},thick] (r-9) -- (r-8);
\draw[-{latex},thick] (r-10) -- (r-9);
\draw[-{latex},thick] (r-11) -- (r-10);
\draw[-{latex},thick] (r-12) -- (r-11);
\draw[-{latex},thick] (r-15) -- (r-14);
\draw[-{latex},thick] (r-19) -- (r-18);
}
\minipageFigure{0.47}{31}{1.7cm}
{When there are two quasi-axes with opposite orientations. The two potentially crashed robots (orange robots) are ordered to move (one of them is crashed).}
{\label{fig:algo-two opposite quasi axis}}{
\confAxis{0}{-4}{20}
\confAxis{8}{-4}{20}

\confRobot[blue]{-1}
\confRobot[orange]{2}
\confRobot[blue]{14}
\confRobot[blue]{17}

\confRobot[orange]{5}
\confRobot[blue]{10}
\confRobot[blue]{-5}
\confRobot[blue]{-10}
\draw[-{latex},thick] (r-2) -- (r-1);
\draw[-{latex},thick] (r-5) -- (r-6);

\node[] at (1.7,2.05) {\scriptsize Quasi-axis of symmetry $L_1$};
\node[] at (-2.1,0.1) {\scriptsize $L_2$};

\draw[-{latex}] (\confAngle{2}:\radius+10) arc (\confAngle{2}:\confAngle{-0.8}:\radius+10);
\node[] at (\confAngle{0.5}:\radius+20) {\small $d_1$};

\draw[-{latex}] (\confAngle{5}:\radius+10) arc (\confAngle{5}:\confAngle{9}:\radius+10);
\node[] at (\confAngle{7}:\radius+20) {\small $d_2$};
}
\end{figure}

\renewcommand{\bar}[1]{\overline{#1}}
\newcommand\apps[1]{\stackrel{\mathclap{\normalfont\mbox{$s_{#1}$}}}{\longrightarrow}}
\newcommand\apprho[1]{\stackrel{\mathclap{\normalfont\mbox{$\rho^{#1}$}}}{\longrightarrow}}

%%%%%%%%%%%%%%%%%%%%%%%%%%%%%%%%%%%%%%%%%%%%%%%%%%%
%
%      Overview of the proof of correctness
%
%%%%%%%%%%%%%%%%%%%%%%%%%%%%%%%%%%%%%%%%%%%%%%%%%
\subsection{Overview of the proof of correctness}
To prove the correctness of our algorithm, we first show that from a configuration with $k>4$ occupied nodes, we reach in finite time a configuration with $k-1$ or $k-2$ occupied nodes (Subsect.~\ref{sec:NE->L3}). 
Repeating the argument, we eventually reach a configuration with either four or three occupied locations.
Then, from three occupied locations, we show that after at most four rounds, the gathering is complete (Subsect.~\ref{sec:L3->Gathered}).
When starting from four occupied nodes, the proof is different, but the result is the same (Subsect.~\ref{sec:NE(4)->L3}). 

During the execution, when the configuration $C$ is node-edge symmetric, either 2 or 4 robots are ordered to move. If the moving robots are not crashed, the configuration remains node-edge-symmetric and the main robots eventually reach the target node. When one of these robots is crashed, the obtained configuration $C'$ may not be node-edge symmetric anymore. In fact, this is the easiest case to consider because it implies that the configuration is quasi-node-edge symmetric, so all the robots can somehow realize that a crash occurred, and the algorithm can recover. Indeed, if all quasi-axes have the same orientation, then by moving all robots in the opposite orientation of that of the quasi-axis, the obtained configuration $C''$ is similar to the configuration obtained by moving the crashed robot towards its destination. So, the obtained configuration $C''$ is similar to the configuration obtained when no robot is crashed, i.e., $C\oneRoundNoCrash C''$. If $C'$ contains two quasi-axes with different orientation, then executing the move $\MoveOpposite$ ensures that the obtained configuration is node-edge symmetric with a new axis of symmetry such that the main robots are correct.
The last remaining case is when, after a crash, the configuration remains node-edge symmetric, so the robots do not realize that a crash occurred. In this case, we prove that the main robots in $C'$ are correct, and we can apply a previous case.
%\textcolor{red}{
Due to space limitation, some proofs of the following sections are presented in Appendix.%}

%%%%%%%%%%%%%%%%%%%%%%%%%%%%%%%%%%%%%%%%%%%%%%%%%%%
%
%      Geometric Results
%
%%%%%%%%%%%%%%%%%%%%%%%%%%%%%%%%%%%%%%%%%%%%%%%%%
\subsection{Geometric Results}\label{sec:geometric results}

\begin{theorem}[disc]{lemma}\label{lem:periodic configurations are determined by k/3 consecutive positions}
Let $C$ and $C'$ be two periodic configurations with $k$ occupied positions, and where more than $n/3$ consecutive nodes have the same state (occupied or empty) in both configurations, then $C=C'$.
%\SK{the consecutive occupid nodes may be a sequence of 0,1 (empty and occupied). Is not it correct?}
\end{theorem}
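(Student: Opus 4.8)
The plan is to translate the claim into combinatorics on words over $\mathbb{Z}/n\mathbb{Z}$, reading $C,C'$ as binary words $d,d'$. Since both configurations are periodic, let $p,p'$ be their smallest periods; the rotations fixing a configuration form a subgroup of $\mathbb{Z}/n\mathbb{Z}$, so $p\mid n$, $p'\mid n$, and non-triviality gives $n/p,n/p'\ge 2$, hence $p,p'\le n/2$. Each of the $n/p$ copies of the fundamental domain of $C$ carries the same number of occupied nodes, so $n/p\mid k$, and likewise $n/p'\mid k$; thus $\operatorname{lcm}(n/p,n/p')\mid k$, and as $0<k<n$ the two period-counts cannot be too coprime. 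Let $W$ (with $|W|>n/3$) be the window on which $d$ and $d'$ agree, and set $w=d|_W=d'|_W$.

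First I would exploit a dichotomy forced by $p\mid n$: a divisor of $n$ lying in $(n/3,\,n/2]$ must equal $n/2$, since its cofactor is an integer in $[2,3)$. Hence each of $p,p'$ is \emph{either} $n/2$ \emph{or} at most $n/3$. When a period is $\le n/3$ we have $|W|>n/3\ge p$, so $W$ contains a full fundamental domain and pins that configuration down by periodic extension. This already settles the case $p=p'$: the window exceeds the common period, so $w$ determines the single repeating pattern and $d=d'$.

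The substance is the case $p\ne p'$, which I would handle with the Fine--Wilf theorem. On $W$ the word $w$ satisfies the period relation for $p$ (and for $p'$) wherever both endpoints stay inside $W$; provided $|W|\ge p+p'-\gcd(p,p')$, Fine--Wilf upgrades this to the refined period $g=\gcd(p,p')$. Then $d$, being $p$-periodic and equal to a $g$-periodic word on a full $p$-block, is itself $g$-periodic, and symmetrically so is $d'$; two $g$-periodic words agreeing on a window that contains a full $g$-block are equal, giving $d=d'$.

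The crux---and the step I expect to be hardest---is discharging the Fine--Wilf hypothesis $|W|\ge p+p'-\gcd(p,p')$ from $|W|>n/3$, since one even needs $|W|>\max(p,p')$ merely to expose each period inside $W$. Writing $m=n/p$, $m'=n/p'$ and $\gcd(p,p')=n/\operatorname{lcm}(m,m')$, the threshold equals $n\bigl(\tfrac1m+\tfrac1{m'}-\tfrac1{\operatorname{lcm}(m,m')}\bigr)$, which is largest exactly when the periods sit at the top of the dichotomy (one at $n/2$, one at $n/3$). This is where the constant in the statement is genuinely tested, and where the divisibility $\operatorname{lcm}(m,m')\mid k<n$ must be spent to rule out the extremal period pairs; the few residual small instances (such as the excluded pair $(n,k)=(7,4)$ appearing in the algorithm) I would finish by direct inspection. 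I would organise the write-up as: (i) periods and divisibility; (ii) the dichotomy and the equal-period case; (iii) Fine--Wilf for distinct periods; (iv) the arithmetic bounding of the threshold, which carries the real content.
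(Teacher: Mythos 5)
Your skeleton matches the paper's intended argument: the paper's entire proof is your step (ii) --- since $n$ is odd, every non-trivial period is a divisor of $n$ with quotient at least $3$, hence at most $n/3$, so the window contains a full fundamental domain and ``determines'' the configuration. But that one line silently assumes the periodic extension of the window is independent of \emph{which} period one extends by, which is exactly the issue you isolate and defer to steps (iii)--(iv). Your diagnosis of where the difficulty sits is therefore sharper than the paper's; unfortunately the programme cannot be completed, because step (iv) fails. When one of the two minimal periods equals $n/3$ and the other is a strictly smaller divisor, the Fine--Wilf threshold $p+p'-\gcd(p,p')$ genuinely exceeds $n/3$, and the divisibility $\operatorname{lcm}(n/p,n/p')\mid k<n$ does \emph{not} exclude this regime.

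Concretely, take $n=45$, and set
\[
f=001100010001100,\qquad g=001100010,
\]
reading $1$ as occupied. Let $C$ be the $15$-periodic configuration $fff$ and $C'$ the $9$-periodic configuration $ggggg$. Both are periodic with minimal periods $15$ and $9$, both have $k=3\cdot5=5\cdot3=15$ occupied nodes, and one checks that they agree on the nodes $u_0,\dots,u_{15}$, i.e.\ on $16>n/3=15$ consecutive nodes (the constraint $\operatorname{lcm}(3,5)=15\mid k$ is satisfied); yet at $u_{16}$ the states are $f_1=0$ for $C$ and $g_7=1$ for $C'$, so $C\neq C'$. Hence the statement is false as written, no arithmetic on the extremal period pairs can close your step (iv), and the lemma can only be saved by strengthening the hypothesis --- e.g.\ requiring the common window to have length at least $p+p'-\gcd(p,p')$, or requiring $C$ and $C'$ to share a period --- a repair that would then have to be re-verified at each point where the lemma is invoked.
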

\begin{proof}
Since $n$ is odd, the period of each symmetry is at least $n/3$. So the entire configuration is determined by the state of any $n/3$ consecutive nodes. This implies that $C=C'$.\qed
\end{proof}

\begin{theorem}[disc]{lemma}\label{lem:2 axis of sym => bisector is an axis}
%Let the number of nodes $n$ be odd. AL: as the initial conf is node-edge sym, we know that n is odd 
If a configuration $C$ has two axes of symmetry, then the unique bisector line of these axes passing through a node is also an axis of symmetry.
\end{theorem}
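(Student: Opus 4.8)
The plan is to argue entirely algebraically after mapping the ring to $\mathbb{Z}/n\mathbb{Z}$. Identify node $u_j$ with $j\in\mathbb{Z}/n\mathbb{Z}$, so that $\rot(x)$ is the map $j\mapsto j+x \pmod n$, and every reflection has the form $\sigma_c\colon j\mapsto c-j \pmod n$ for some $c\in\mathbb{Z}/n\mathbb{Z}$. Since $n$ is odd, $2$ is invertible modulo $n$ (with $2^{-1}\equiv(n+1)/2$), each $\sigma_c$ has the single fixed point $2^{-1}c$ (so all reflections are node-edge symmetries, and there are exactly $n$ of them). The whole proof then rests on the one-line composition rule $\sigma_a\circ\sigma_b=\rot(a-b)$, which I would verify directly since $\sigma_a(\sigma_b(j))=a-(b-j)=j+(a-b)$. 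Let $S$ be the set of occupied nodes of $C$, and suppose $\sigma_a(S)=S$ and $\sigma_b(S)=S$ for two distinct axes, i.e.\ $a\neq b$.

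First I would pin down which reflection is "the unique bisector line passing through a node." Viewing $\sigma_c$ as the reflection across the Euclidean line at angle $\pi c/n$, the two perpendicular angle-bisectors of the axes of $\sigma_a$ and $\sigma_b$ sit at angles $\pi(a+b)/(2n)$ and $\pi(a+b)/(2n)+\pi/2$, i.e.\ at the parameters $(a+b)/2$ and $(a+b+n)/2$. A bisector is a symmetry axis that passes through a node exactly when its parameter is an integer modulo $n$; because $n$ is odd, the two parameters differ by the non-integer $n/2$, so exactly one of them is an integer. In either case that integer $c$ satisfies $2c\equiv a+b \pmod n$, which has a unique solution since $\gcd(2,n)=1$. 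This both identifies the node-passing bisector as $\sigma_c$ with $2c\equiv a+b$ and establishes its uniqueness, matching the statement.

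The main step is to show $\sigma_c(S)=S$. I would write $\sigma_c=\sigma_a\circ\rot(a-c)$, which is immediate from the definitions, and observe that $a-c\equiv(a-b)\,2^{-1}\pmod n$. The crucial point is that, because $n$ is odd, $2^{-1}\equiv(n+1)/2$ is an honest nonnegative integer, so $\rot\!\big((a-b)2^{-1}\big)=\rot(a-b)^{(n+1)/2}$ is an \emph{integer power} of the rotation $\rho:=\sigma_a\circ\sigma_b=\rot(a-b)$. Since $\rho(S)=\sigma_a(\sigma_b(S))=S$, every power of $\rho$ preserves $S$; in particular $\rot(a-c)(S)=S$. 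Hence $\sigma_c(S)=\sigma_a(\rot(a-c)(S))=\sigma_a(S)=S$, so $\sigma_c$ is an axis of symmetry of $C$.

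The heart of the argument — and the only step needing care — is precisely this observation that the "half rotation" $\rot((a-b)2^{-1})$ is already a power of the full rotation $\rho$ when $n$ is odd, so no invariance beyond $\rho(S)=S$ is required. This is exactly where oddness is indispensable: for even $n$ the half-rotation need not fix $S$, and the node-passing bisector may fail to be a symmetry, which is why I would state and use the hypothesis that $n$ is odd throughout. The remaining work is purely bookkeeping: confirming via the angle computation that the $\sigma_c$ singled out in the second paragraph is genuinely the node-passing bisector rather than its perpendicular, which follows from the parity of $a+b$ together with $n$ being odd.
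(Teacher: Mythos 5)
Your proof is correct, and it takes a genuinely different route from the paper's. The paper argues globally: it lets $q$ be the total number of axes of symmetry of $C$, places them (using the dihedral structure of the symmetry group) at evenly spaced positions so that $L_i$ passes through $u_{in/q}$, and then identifies the bisector of $L_i$ and $L_j$ by index arithmetic as $L_{\frac{i+j}{2}}$ when $i+j$ is even and as $L_{q'}$ with $q'\equiv\frac{i+j+q}{2}\bmod q$ otherwise. You instead work locally with only the two given reflections: writing reflections as $\sigma_c\colon j\mapsto c-j$, identifying the node-passing bisector as the unique $\sigma_c$ with $2c\equiv a+b\pmod n$, and decomposing $\sigma_c=\sigma_a\circ \rot\bigl((a-b)2^{-1}\bigr)$, where the half-rotation is the $(n+1)/2$-th power of $\rho=\sigma_a\circ\sigma_b$ and therefore preserves $S$ because $\rho$ does. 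Your version is more self-contained --- it does not rely on the (asserted but not proved) even spacing of all axes --- and it isolates exactly where oddness of $n$ enters, namely that $2^{-1}$ is an honest integer so the half-rotation is a power of $\rho$ and no invariance beyond $\rho(S)=S$ is needed. The paper's version, in exchange, describes the full set of axes at once. Both arguments are valid; yours is the more careful of the two.
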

\begin{proof}
Let $q$ be the number of axes of symmetry. Since $n$ is odd, $q$ must be odd as well. Let $L_i$, with $0\leq i < q$ be the axis of symmetry of $C$. We can order the nodes of the ring $u_0, u_1, \ldots u_{n-1}$ so that $L_0$ contains $u_0$ and, more generally $L_i$ contains $u_{in/q}$. Now we show that the bisector line of any pair of axes $L_i$ and $L_j$, with $i\neq j$, is another axis.

If $i+j$ is even, then, clearly, $L_{\frac{i+j}{2}}$ is the line bisector of $L_i$ and $L_j$. Otherwise, one can see that the line bisector is $L_{q'}$, with $q'\equiv \frac{i+j+q}{2} \mod q$.\qed
\end{proof}

We know that if a secondary robot that is ordered to move is crashed, the resulting configuration is quasi-node-edge-symmetric and quasi-periodic. 
The next lemma states that in this case there exists no other quasi-axis of symmetry.
\begin{theorem}[disc]{lemma}\label{lem:even quasi-axis is unique}
If a configuration with more than 3 occupied nodes has a $\{r_1, r_1'\}$-quasi-axis of symmetry with even gap distance, and is $r_1$-quasi-periodic, then it does not have another quasi-axis (regardless of its parity).
\end{theorem}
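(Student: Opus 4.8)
The plan is to reduce both hypotheses to genuine (non-quasi) symmetries of a single periodic witness, and then to show that two distinct quasi-axes would force that witness to be too symmetric to admit the single-robot displacement that distinguishes $C$ from it. Since $C$ is $r_1$-quasi-periodic, let $P$ be the periodic configuration obtained by moving $r_1$ onto the appropriate neighbouring node $r_1^{*}$, and let $\rho$ be a nontrivial rotation with $\rho(P)=P$; then $C$ and $P$ differ only at the two neighbours of $r_1$, i.e.\ in a single cluster of constant size. Since $C$ is $\{r_1,r_1'\}$-quasi-node-edge symmetric, the configuration $C'=C\setminus\{r_1\}\cup\{\bar{r_1'}\}$ is invariant under the reflection $\sigma_1$ about $L_1$; recall that $\bar{r_1'}$ is itself a neighbour of $r_1$, being the image of $r_1'$, which lies at distance one from $\bar{r_1}$. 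My first goal is to prove that $P$ is in fact $\sigma_1$-symmetric and that $C'=P$, so that the quasi-periodic correction and the quasi-symmetric correction coincide.

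To show $\sigma_1(P)=P$, I apply $\sigma_1$ to the relation $P = C'\setminus\{\bar{r_1'}\}\cup\{r_1^{*}\}$, which expresses $\sigma_1(P)$ as $C'$ with a single robot relocated near $\bar{r_1}$. Hence $\sigma_1(P)$ and $P$ differ only in two clusters, one around $r_1$ and one around $\bar{r_1}$. Both configurations are periodic ($\sigma_1$ conjugates $\rho$ to $\rho^{-1}$, which has the same orbits), and two clusters of constant size leave an arc of more than $n/3$ consecutive nodes on which the two agree, so Lemma~\ref{lem:periodic configurations are determined by k/3 consecutive positions} gives $\sigma_1(P)=P$. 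Then $P$ and $C'$ are both $\sigma_1$-symmetric and differ by relocating one robot, so the relocated pair $\{r_1^{*},\bar{r_1'}\}$ must be $\sigma_1$-invariant; as its two nodes are distinct and $r_1^{*}$ is empty in $C$ while $r_1'$ is occupied, $\sigma_1$ can neither swap nor individually fix them unless the move is trivial. Therefore $r_1^{*}=\bar{r_1'}$ and $C'=P$, so $C$ is exactly $P$ with one robot displaced by one edge, from $\bar{r_1'}$ to $r_1$.

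Now assume, for contradiction, that $C$ also has a quasi-axis $L_2\neq L_1$ associated with $\{r_2,r_2'\}$, so that $C''=C\setminus\{r_2\}\cup\{\bar{r_2'}\}$ is $\sigma_2$-symmetric. Running the analogous determination argument for $L_2$, I would show $\sigma_2(P)=P$, so that $P$ has the two distinct axes $L_1$ and $L_2$. By Lemma~\ref{lem:2 axis of sym => bisector is an axis} their bisector through a node is a third axis; together with periodicity this forces $\rho$ to have period exactly $n/3$ and $P$ to be $D_3$-symmetric (three axes, three rotations, all orbits of size three). Combining this with $C$ being $P$ plus the single displacement $\bar{r_1'}\to r_1$: the difference $C''\triangle P$ is $\sigma_2$-invariant and contained in $\{\bar{r_1'},r_1,r_2,\bar{r_2'}\}$, and the only $\sigma_2$-invariant way for one further single-robot move to turn this lone displacement of a $\sigma_2$-symmetric $P$ back into a $\sigma_2$-symmetric configuration is to undo the displacement itself. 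Hence $\{r_2,\bar{r_2'}\}=\{r_1,\bar{r_1'}\}$, giving $r_2=r_1$ and $r_2'=r_1'$, i.e.\ $L_2=L_1$, the desired contradiction.

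The main obstacle is the second application of the determination lemma, namely proving $\sigma_2(P)=P$: a priori $\sigma_2(P)\triangle P$ can spread over up to four clusters (around $r_1$, $r_2$, $\sigma_2 r_1$ and $\sigma_2 r_2$), which need not leave any arc longer than $n/3$. This is exactly where the even-gap hypothesis is used: since the gap distance is even the crashed secondary robot and the axis $L_1$ are positioned so that the displacement $\bar{r_1'}\to r_1$ lies inside a single period block, and combined with $\sigma_1(P)=P$ this collapses the clusters around $r_2$ and $\sigma_2 r_1$, restoring a clean window. The remaining bookkeeping, that relocating one robot off a $D_3$-symmetric configuration cannot create a second independent quasi-axis, together with the finitely many small odd $n$ for which a $>n/3$ defect-free window is not automatic, is routine and can be checked directly.
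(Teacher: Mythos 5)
There is a genuine gap, and it sits exactly at the point you yourself flag as ``the main obstacle.'' Your strategy is to transfer both quasi-symmetries to honest symmetries of the periodic witness $P$ and then to reason about a single displaced robot; but the decisive step --- establishing $\sigma_2(P)=P$, or otherwise extracting a contradiction from the second quasi-axis --- is never carried out. The symmetric difference $\sigma_2(P)\triangle P$ is only known to be contained in the defect clusters around $r_1$, $\bar{r_1'}$, $r_2$, $\bar{r_2'}$ and their $\sigma_2$-images, and four such clusters need not leave any defect-free arc of length $>n/3$, so Lemma~\ref{lem:periodic configurations are determined by k/3 consecutive positions} does not apply; the sentence asserting that the even gap distance ``collapses the clusters'' is a claim, not an argument, and the evenness of the gap plays no visible role in it. Two further steps are incorrect as stated. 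First, two axes of a periodic configuration do not force period exactly $n/3$ and $D_3$-symmetry: a symmetric configuration with $q\geq 3$ periods has $q$ axes (this is consistent with Lemma~\ref{lem:2 axis of sym => bisector is an axis}), and nothing bounds $q$ by $3$. Second, and more seriously, your concluding claim that ``the only $\sigma_2$-invariant way \ldots is to undo the displacement itself'' overlooks the case where $\sigma_2$ \emph{swaps} the two displacements, i.e., $\sigma_2(\bar{r_1'})=r_2$ and $\sigma_2(r_1)=\bar{r_2'}$. That placement makes $C''\triangle P$ perfectly $\sigma_2$-invariant without $L_2=L_1$, is compatible with all the adjacency constraints in the definition of a quasi-axis, and is precisely the kind of configuration the lemma has to exclude; your argument simply does not see it.

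The paper's proof attacks this with a local, combinatorial argument rather than a global symmetry-transfer. It first bounds the gap distance of $L_2$ by $n/3$, then extracts two disjoint windows $P_1,P_2$ of length one period $p$ that are exchanged by the reflection about $L_2$, with $r_1$ and $\bar{r_1'}$ both lying in $P_1$ (they cannot be separated by $L_2$ because they are adjacent with different states). Each window must carry a full copy of the palindromic period $S=a_m\ldots a_1a_0a_1\ldots a_m$; the window containing the defect instead reads $S$ with two adjacent letters negated, and equating the two strings forced by the reflection yields $\lnot a_{i_1}=a_{i_2}=a_{i_1}$, a contradiction. To rescue your approach you would need an argument of this letter-by-letter kind to dispose of the swapped-displacement case; the determination lemma alone will not do it.
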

\begin{proof}
Let $C$ be a configuration that is $\{r_1, r_1'\}$-quasi-symmetric with an even gap distance. We know that this also implies that $C\setminus\{r_1\}\cup\{\bar{r_1'}\}$ is periodic with period $p\geq 3$ (that is odd, because $n$ is odd) and is symmetric, with axis $L_1$, so the string of node states is $S=(a_{m}a_{m-1}\ldots a_2a_1a_0 a_1a_2\ldots a_{m-1} a_m)^{n/p}$, with $m = (p-1)/2$. Assume for the sake of contradiction that there is another $\{r_2, r_2'\}$-quasi-axis $L_2$. The gap distance associated with $L_2$ is smaller than $n/3$, otherwise, let $P$ be the path joining $r_2$ and $r_2'$ of length the gap distance. Either, (i) there is no other robot on $P$, so there are at most three occupied nodes (recall that by moving one robot the configuration is periodic), or (ii) there is one robot on $P$, or (iii) $P$ consists only of occupied nodes so that the whole configuration as well. Each case implies a contradiction.
So there are two paths $P_1$ and $P_2$ that are disjoint, symmetric with respect to $L_2$ and have length $p$. $r_1$ and $\bar{r_1}'$ cannot be on different sides of $L_2$ (because they have different state so they cannot be symmetric -- recall that they are adjacent) so lets say $P_1$ contains $r_1$ and $\bar{r_1}'$.
Since $P_1$, resp. $P_2$, is of length at least $p$, it contains a full copy of the periodic string of states $S$ (up to cyclic permutation). 
%Let $i_1$ the index in the string that correspond to the state of the node at $r_1$. 
Moreover, by assumption, $r_1$ should not be there for the configuration to be periodic so that, in the string of state in $P_1$ is actually cyclic  permutation $\sigma_1$ of $S'= a_m a_{m-1}\ldots \lnot a_{i_1+1} \lnot a_{i_1} \ldots a_2a_1a_0 a_1\ldots a_{m-1} a_m$ for some $i_1\in [0, m]$, while in $P_2$ it is a cyclic permutation $\sigma_2$ of $S$. So we have $\sigma_1S' = \sigma_2S$ (because $P_1$ and $P_2$ are symmetric with respect to $L_2$). This implies that $\lnot a_{i_1} = a_{i_2}$ and also that $a_{i_2} = a_{i_1}$, which is a contradiction. \qed
% (recall that $a_{i_1}$ appears two times in the string $S$), which is a contradiction. If $i_1 = 0$, then $\lnot a_{0} = a_{i_2}$ but we also have $a_{0} = a_{i_2}$
%Maybe I should rewrite this with better notation, maybe by using indexes of nodes instead of states, and with a function returning the state, idk.
\end{proof}

\vspace{-2mm}
\begin{theorem}[disc]{lemma}
\label{lem:initial config remains non-periodic with crash}
Let $C\in \NE(k) \setminus P$ with $k>3$ occupied nodes, and $C\oneRound C'$, then $C'\not\in\Periodic$, even if there exists a crashed robot.
\end{theorem}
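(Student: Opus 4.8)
The plan is to show that a single round of Algorithm~\ref{algo:ring} applied to a non-periodic node-edge symmetric configuration $C\in\NE(k)\setminus\Periodic$ (with $k>3$) cannot produce a periodic configuration $C'$, even when one of the moving robots is crashed. First I would fix notation: let $L$ be the axis of symmetry of $C$, let $\vtarget$ be the target node on $L$, and observe that depending on which branch of the algorithm fires, either $2$ robots (the main robots, or in the $\LFour/\LFive$ case the secondary robots) or $4$ robots (main and secondary, precisely in the branch on line~\ref{algo:ring:periodic}) are ordered to move. The argument splits according to whether a crash occurs and, if so, whether the crashed robot is a main or a secondary robot.

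The cleanest case is the no-crash case. Here I would invoke the design of line~\ref{algo:ring:line check not periodic}: whenever moving only the main robots \emph{would} create a periodic configuration, the algorithm instead fires the branch on line~\ref{algo:ring:periodic} and moves the secondary robots as well. So in the no-crash case $C'$ is periodic only if the algorithm failed to detect periodicity, which by construction it does not; for the \MoveSecondary{} branch and the combined \MoveMain/\MoveSecondary{} branch I would verify directly (symmetry is preserved, and the resulting configuration is the one the algorithm explicitly computed to be non-periodic). The substance of the lemma is therefore the crash cases, where only one of the two symmetric movers actually relocates, breaking the symmetry of the move and potentially producing a configuration the algorithm never vetted.

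For the crash cases I would argue by contradiction: suppose $C'\in\Periodic$. The key tool is Lemma~\ref{lem:periodic configurations are determined by k/3 consecutive positions}, which says that since $n$ is odd a periodic configuration has period at least $n/3$ and is determined by any $n/3$ consecutive node-states. When one mover crashes, $C'$ differs from the fully-moved (symmetric, hence non-problematic) intended configuration in exactly one node's position, localized near one of the two symmetric movers. I would show that this single local defect is incompatible with having a genuine rotational period of length $\geq n/3$: a period that short forces the defect to be replicated at the corresponding position in every copy, but only one copy exhibits it, which contradicts periodicity. Concretely, I would compare $C'$ against the reflection of $C'$ across $L$ (or against the configuration with the crashed robot moved), use that they agree on a stretch of more than $n/3$ consecutive nodes away from the single moved/crashed site, and conclude via Lemma~\ref{lem:periodic configurations are determined by k/3 consecutive positions} that $C'$ would have to coincide with its symmetric counterpart — forcing the defect to vanish, a contradiction.

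The main obstacle I anticipate is the bookkeeping of which robots move and in which direction across all branches of the algorithm, combined with correctly handling the case $k=4$ where the pool of occupied nodes is small enough that a short period ($p<n/3$ would be excluded, but $p$ close to $n/3$ and special small rings) could in principle be created by a lucky alignment. I expect to lean on Lemma~\ref{lem:even quasi-axis is unique} and the quasi-periodicity bookkeeping to pin down exactly where the configuration can differ from a periodic one, and to dispatch the boundary rings (e.g.\ the excluded $(n,k)=(7,4)$ and the $\LFour/\LFive$ special branches) by a short direct inspection rather than the general counting argument.
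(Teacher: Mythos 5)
Your plan is correct in outline, but for the crash case it takes a genuinely different route from the paper. The paper's entire proof is a reduction to the string argument of Lemma~\ref{lem:even quasi-axis is unique}: a configuration that is simultaneously (quasi-)periodic and (quasi-)node-edge symmetric has node states of the form $(a_m\ldots a_1a_0a_1\ldots a_m)^{n/p}$, and the single letter flipped by the crashed robot's non-move is forced by the symmetry to equal its own negation --- a contradiction that works for \emph{any} period $p$ and needs no long agreement window. You instead compare $C'$ with its reflection $s_L(C')$ across the old axis and invoke Lemma~\ref{lem:periodic configurations are determined by k/3 consecutive positions} to force $C'=s_L(C')$, then contradict this with the localized defect. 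Your route is more elementary and more transparent (and your observation that $C'\mathbin{\triangle} s_L(C')$ is contained in the four nodes $\{u,v,s_L(u),s_L(v)\}$, where $u,v$ are the crashed robot's origin and destination, is exactly right, even in the four-mover branch), but it buys this at the cost of the unproven premise that $C'$ and $s_L(C')$ agree on \emph{more than} $n/3$ consecutive nodes. The four defect nodes split the ring into two arcs summing to $n-4$, and both arcs can be at most $n/3$ when $2\lfloor n/3\rfloor\ge n-4$, i.e., for $n=9$; this failure is not confined to the excluded $(n,k)=(7,4)$ case or the $\LFour/\LFive$ branches you name, so the boundary inspection you defer must explicitly cover it (or those cases must fall back on the paper's palindrome argument, which is insensitive to window length). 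Two further small checks are needed to close your final contradiction: that the defect pair $\{u,v\}$ being mapped to itself by $s_L$ (the movers straddling the fixed edge of $L$) still yields a disagreement at $u$, and that no other mover re-occupies the vacated node $s_L(u)$ or the destination $v$ (possible in the combined \MoveMain/\MoveSecondary branch when a secondary robot trails a main robot by one node), since in that situation the contradiction must be extracted from a different one of the four nodes. None of these is fatal, but as written the plan asserts rather than establishes the one quantitative step on which the whole reduction to Lemma~\ref{lem:periodic configurations are determined by k/3 consecutive positions} rests.
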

\begin{proof}
The proof is similar to the proof of Lemma~\ref{lem:even quasi-axis is unique}. Indeed, $C'$ is quasi-periodic and has an axis of symmetry. \qed
\end{proof}
\vspace{-2mm}
\begin{theorem}[disc]{lemma}
Let $C\in \QNE(k)\setminus \NE(k) \setminus P$ with $k>3$. If all quasi-axes of $C$ have odd gap distances, then, either
\begin{itemize}
    \item all quasi-axes have the same orientation, or
    \item there are exactly two quasi-axes with opposite orientation.
\end{itemize}
\end{theorem}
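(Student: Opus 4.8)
The plan is to work entirely with the combinatorial normal form of an odd-gap quasi-axis. Since $n$ is odd, every reflection fixes exactly one node, so $C\notin\NE(k)$ means $C$ has no exact axis of symmetry at all. For a quasi-axis with target $c$, write $\sigma_c$ for the reflection fixing $c$. First I would record the structural picture: $C$ has exactly two $\sigma_c$-\emph{singletons} (occupied nodes whose mirror image is empty), they lie on opposite sides of $c$ and are mutually adjacent to each other's mirror, and moving \emph{either} singleton one node in a common direction --- the \emph{orientation} --- yields a configuration node-edge symmetric about $c$. The compact identity I would extract and use throughout is that $\sigma_c(C)$ equals $C$ with \emph{both} of its singletons shifted by one unit step in the orientation's direction; equivalently, the orientation is exactly the direction in which a flanking robot must be pushed one node to repair the symmetry about $c$. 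A consequence I will use is that $C$ agrees with $\sigma_c(C)$ off the four nodes of a discrepancy set $\Delta_c$, and that $\Delta_c$ is $\sigma_c$-invariant.

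With this in hand the statement reduces to two sub-claims, after which pigeonhole finishes it. (A) If $C$ has two quasi-axes of the same orientation, then some single one-step move of one robot turns $C$ into a periodic configuration (a \emph{periodic completion}). (B) If $C$ has a periodic completion, then all its quasi-axes share one orientation. Granting (A) and (B), suppose the conclusion fails: then both orientations occur and, since two opposite quasi-axes is permitted, there are at least three in total, so by pigeonhole one orientation is realized at least twice; (A) then produces a periodic completion and (B) forces a single orientation, contradicting the presence of both. Hence either all quasi-axes have the same orientation, or there are exactly two and they are opposite.

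For (A) I would split on whether two same-oriented quasi-axes with targets $c_1\neq c_2$ share a singleton. If they share a singleton $x$, then pushing $x$ one node in the common orientation simultaneously repairs symmetry about $c_1$ and $c_2$, so the resulting configuration has two axes of symmetry and is periodic by Lemma~\ref{lem:2 axis of sym => bisector is an axis}; this is the desired periodic completion. If the singleton sets are disjoint, I would invoke the identity above: $C$ agrees with each $\sigma_{c_i}(C)$ off $\Delta_{c_i}$, and composing, $C$ agrees with $\rho(C)$ off at most eight nodes (contained in $\Delta_{c_2}\cup\sigma_{c_2}(\Delta_{c_1})$), where $\rho=\sigma_{c_2}\circ\sigma_{c_1}$ is a nontrivial rotation. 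I then argue this near-invariance under a nontrivial rotation forces exact $\rho$-periodicity of $C$, contradicting $C\notin\Periodic$; so the disjoint case cannot occur. This last step is the main obstacle --- converting ``invariant off at most eight nodes'' into genuine periodicity --- and I expect to handle it as in Lemma~\ref{lem:even quasi-axis is unique}, using that $n$ is odd and that a periodic configuration is pinned down by any block of $n/3$ consecutive node-states (Lemma~\ref{lem:periodic configurations are determined by k/3 consecutive positions}), so a long defect-free arc already determines the whole configuration periodically.

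For (B), fix a periodic completion $C^\ast$ obtained from $C$ by a one-step move of a single robot, say in the ``$+$'' direction, so that $C$ is exactly $C^\ast$ with that one robot pulled one node back; relative to the periodic, hence highly symmetric, $C^\ast$ the configuration $C$ carries a \emph{single} displacement. For any quasi-axis of $C$ with target $c$, its four-node discrepancy is precisely this real displacement together with its $\sigma_c$-mirror image (in particular $c$ must itself be an axis of $C^\ast$, since otherwise the discrepancy would exceed four nodes), and the orientation is read off from the direction of the real displacement; as there is a single displacement pointing the same way for every admissible $c$, every quasi-axis inherits the ``$+$'' orientation, which is (B). The routine points I would still verify are the degenerate placements where $c$ is close to the displaced robot (so the four discrepancy nodes collide) and the impossibility of a periodic completion existing in \emph{both} directions at once; both are finite case checks that do not affect the outline.
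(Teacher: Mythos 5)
Your reduction to claims (A) and (B) plus pigeonhole is a genuinely different route from the paper (which runs a chain-of-reflections argument, tracking the orbit $r_1 \to s_2(r_1) \to s_1 s_2(r_1) \to \cdots$ and deriving arithmetic constraints on the gap distances and orbit lengths), and the ``compact identity'' that $\sigma_c(C)$ is $C$ with both singletons shifted one step in the orientation's direction is correct and a nice observation. However, there is a genuine gap exactly where you flag ``the main obstacle'': the disjoint-singleton case of (A). The step ``$C$ agrees with $\rho(C)$ off at most eight nodes, and this near-invariance under a nontrivial rotation forces exact $\rho$-periodicity'' is false. If $C_0$ is $\rho$-periodic and $C=C_0$ with the state of a single node toggled, then $C$ and $\rho(C)$ differ on only two nodes, yet $C$ is not periodic. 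Neither cited lemma rescues this: Lemma~\ref{lem:periodic configurations are determined by k/3 consecutive positions} compares two configurations \emph{already assumed} periodic, and Lemma~\ref{lem:even quasi-axis is unique} takes quasi-periodicity as a \emph{hypothesis} --- which is precisely what you would be trying to conclude, so the appeal is circular. Worse, claim (A) itself is doubtful as a statement: the paper's Cases~1.a and~1.b exhibit the structural situation of two same-oriented quasi-axes with disjoint singleton pairs and draw no periodicity conclusion from it, and nothing in the setup suggests that two same-oriented quasi-axes force a periodic completion. Since your contradiction in the main argument is obtained by feeding (A) into (B), the whole reduction collapses if (A) fails or remains unproved.

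To close the argument you would need, in the disjoint-singleton case, something like the paper's finer analysis: follow individual robots through the alternating reflections $s_1,s_2$, use that $n$ is odd (so $\rho=s_1\circ s_2$ generates all rotations when $\rho$ and $n$ are coprime) to pin down the orbit lengths $m,m'$, and derive either that a robot would have to sit at a forbidden position ($\bar{r_1}$ or $\bar{r_2'}$) or that the whole configuration is symmetric about the bisector $L_3$ (via Lemma~\ref{lem:2 axis of sym => bisector is an axis}), contradicting $C\notin\NE(k)$. That machinery is what actually separates the same-orientation and opposite-orientation cases and bounds the number of quasi-axes at two in the latter; the periodic-completion detour does not substitute for it. Claim (B) also needs more care than sketched (a periodic configuration on an odd ring need not have any axis of symmetry, and you must rule out completions in both directions), but (A) is the step that breaks.
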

%\begin{lemma}
%If a configuration $C$ is $\{r_1, r_1'\}$-quasi-node-edge symmetric and $\{r_2, r_2'\}$-quasi-node-edge symmetric, with $\{r_1, r_1'\}\neq \{r_2, r_2'\}$, with gap distances having the same parity, then $C$ is node-edge symmetric or the direction associated with each quasi-axis is the same.
%\end{lemma}
\begin{proof}
Consider two arbitrary quasi-axes of $C$, $\{r_1, r_1'\}$-quasi-axis $L_1$ and  $\{r_2, r_2'\}$-quasi-axis $L_2$. 
Recall the notations such that $C\setminus\{r_1\}\cup\{\bar{r_1'}\}$ (resp. $C\setminus\{r_2\}\cup\{\bar{r_2'}\}$) are node-edge symmetric with respect to $L_1$ (resp. $L_2$). In particular, we have that $L_1\neq L_2$. 

Let $d_1$ be the gap distance of $\{r_1, r_1'\}$ and $d_2$ be the gap distance of $\{r_2, r_2'\}$.
By assumption, $d_1$ and $d_2$ are odd.

Without loss of generality, we assume $r_1$ is the leading robot among $\{r_1, r_1'\}$.
From now, we choose for the orientation of the rotations the orientation associated $\{r_1, r_1'\}$, i.e., such that $r_1 = \rot(d_1)(r_1')$ (in the following figures, we use the counterclockwise direction).

Let $s_1$ (resp. $s_2$) be the symmetry transformation with respect to $L_1$ (resp. $L_2$). For any nodes $u$ and $v$, we write $u \apps{1} v$ when $v = s_1(u)$.
Consider the following longest sequence of occupied nodes $u_0, u_1, u_2, \ldots, u_m$ such that
\begin{equation}\label{eq:seq from r1 to r2}
r_1 = u_0 \apps{2} u_1 \apps{1} u_2\apps{2} \ldots \apps{i} u_m
\end{equation}
We have to consider two cases depending on the value of $i$, either the last symmetry is $s_1$ or $s_2$.\\
\textbf{Case 1: $i=1$.}
In this case, $s_2(u_m)$ is not a robot (otherwise the sequence is not maximal). The only robots that do not have symmetric robots with respect to $L_2$ are $r_2$ and $r_2'$, so $u_m$ is either $r_2$ or $r_2'$. Without loss of generality (since both robots are interchangeable in the definition) assume $u_m = r_2$.

Similarly, there exists another sequence 
\[
r_1' \apps{2} u_1' \apps{1} u_2'\apps{2} \ldots \apps{i} u'_{m'}
\]
where $u'_{m'} = r_2'$.

We consider four sub-cases (recall that we assumed w.l.o.g. that $r_1 = \rot(d_1)(r_1')$, otherwise the results is obtained by symmetry).

\begin{minipage}{0.49\textwidth}
\centering

\begin{tikzpicture}
  \tikzstyle{robot}=[circle,fill=black!100,inner sep=0.03cm]
\def \n {41}
\def \radius {2cm}
\def \margin {8} % margin in angles, depends on the radius

\foreach \s in {0,...,\n}
{
  \node[draw, circle,inner sep=0.03cm] at ({360/\n * \s}:\radius) {};
}
\draw ({360/\n * 29}:\radius+10) --  ({360/\n * (\n/2 + 29)}:\radius+10);
\draw ({360/\n * 1}:\radius+10) --  ({360/\n * (\n/2 + 1)}:\radius+10);

\node[robot,label={-90:$r_1$}]  at ({360/\n * 31}:\radius) {};
\node[robot,label={-135:$r_1'$}]  at ({360/\n * 26}:\radius) {};

\node[robot,label={0:$r_2$}]  at ({360/\n * (-2)}:\radius) {};
\node[robot,label={0:$r_2'$}]  at ({360/\n * 3}:\radius) {};

\draw[-{latex}] ({360/\n * (26)}:\radius+19) arc ({360/\n * (26)}:{360/\n * (30.9)}:\radius+19);
\draw[-{latex}] ({360/\n * (-1.9)}:\radius+18) arc ({360/\n * (-1.9)}:{360/\n * (3)}:\radius+18);

\node[] at ({360/\n * (30)}:\radius+25) {\small $d_1$};
\node[] at ({360/\n * (-0.5)}:\radius+28) {\small $d_2$};

\end{tikzpicture}
    
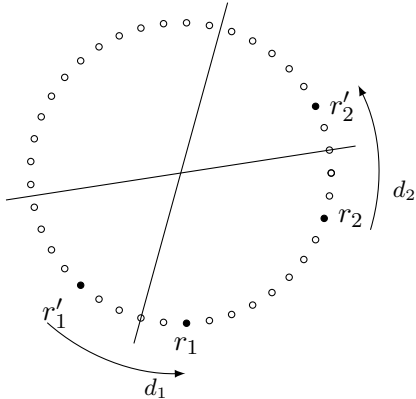
\captionof{figure}{Case 1.a}
    \label{fig:case 1.a}
\end{minipage}
\begin{minipage}{0.49\textwidth}
    \centering

\begin{tikzpicture}
  \tikzstyle{robot}=[circle,fill=black!100,inner sep=0.03cm]
\def \n {41}
\def \radius {2cm}
\def \margin {8} % margin in angles, depends on the radius

\foreach \s in {0,...,\n}
{
  \node[draw, circle,inner sep=0.03cm] at ({360/\n * \s}:\radius) {};
}
\draw ({360/\n * 29}:\radius+10) --  ({360/\n * (\n/2 + 29)}:\radius+10);
\draw ({360/\n * 1}:\radius+10) --  ({360/\n * (\n/2 + 1)}:\radius+10);

\node[robot,label={-90:$r_1$}]  at ({360/\n * 31}:\radius) {};
\node[robot,label={-135:$r_1'$}]  at ({360/\n * 26}:\radius) {};

\node[robot,label={0:$r_2'$}]  at ({360/\n * (-2)}:\radius) {};
\node[robot,label={0:$r_2$}]  at ({360/\n * 3}:\radius) {};

\draw[-{latex}] ({360/\n * (26)}:\radius+19) arc ({360/\n * (26)}:{360/\n * (30.9)}:\radius+19);
\draw[-{latex}] ({360/\n * (-1.9)}:\radius+18) arc ({360/\n * (-1.9)}:{360/\n * (3)}:\radius+18);

\node[] at ({360/\n * (30)}:\radius+25) {\small $d_1$};
\node[] at ({360/\n * (-0.5)}:\radius+28) {\small $d_2$};

\end{tikzpicture}
    
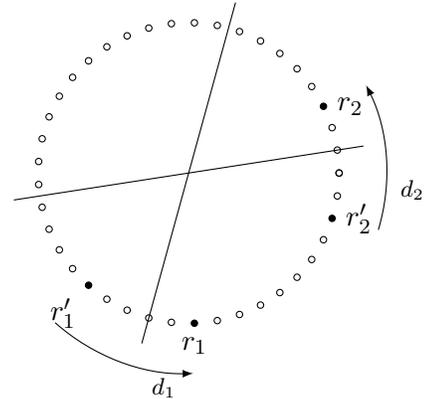
\captionof{figure}{Case 1.b}
    \label{fig:case 1.b}
\end{minipage}

\textbf{Case $1.a$:} $r_2' = \rot(d_2)(r_2)$ and $r_2'$ is the leading robot (see Figure~\ref{fig:case 1.a}).

In this case, the orientation of the two quasi-axes are the same.

\textbf{Case $1.b$:} $r_2 = \rot(d_2)(r_2')$ and $r_2$ is the leading robot (see Figure~\ref{fig:case 1.b}).%; or $r_1 = \rot(x+1)(l_1)$ and $r_2 = \rot(x+1)(l_2)$.

In this case, the orientation of the two quasi-axis are the same.

%
%    CASE 1.c
% 
\textbf{Case $1.c$:} $r_2 = \rot(d_2)(r_2')$ and $r_2'$ is the leading robot (see Figure~\ref{fig:case 1.c}).

In this case, we need a more careful analyze of the situation. We know that applying two axial symmetries is equivalent to applying a rotation whose angle is two times the angle formed by the two axes of symmetry (in more details, the angle of the rotation associated with the composition of axial symmetry of axis $L_2$ followed by the axial symmetry of axis $L_1$ is two times the angle of the rotation mapping $L_2$ to $L_1$).
Here, let $\rho$ be the angle of the rotation $s_1\circ s_2$. In other words, $\rho$ is such that $\rot(\rho/2)(l_2)=l_1$, where $l_1$ (resp. $l_2$) is the target node associated with $L_1$ (resp. $L_2$).

\begin{minipage}{0.49\textwidth}
    \centering
\begin{tikzpicture}
  \tikzstyle{robot}=[circle,fill=black!100,inner sep=0.03cm]
\def \n {41}
\def \radius {2cm}
\def \margin {8} % margin in angles, depends on the radius

\foreach \s in {0,...,\n}
{
  \node[draw, circle,inner sep=0.03cm] at ({360/\n * \s}:\radius) {};
}

\draw[red] ({360/\n * 29}:\radius+10) --  ({360/\n * (\n/2 + 29)}:\radius+10);
\draw[blue] ({360/\n * 1}:\radius+10) --  ({360/\n * (\n/2 + 1)}:\radius+10);

\node[red] at (-119:1) {\scriptsize $L_1$};
\node[blue] at (20:1.4) {\scriptsize $L_2$};

\draw ({360/\n * 15}:\radius+10) --  ({360/\n * (\n/2 + 15)}:\radius+10);
\node[] at (-40:1.2) {\scriptsize $L_3$};

\node[robot,label={-90:$r_1$}]  at ({360/\n * 31}:\radius) {};
\node[robot,label={[label distance=-0.1cm]45:{ $r_1'$}}]  at ({360/\n * 26}:\radius) {};

\node[robot,label={0:$r_2'$}]  at ({360/\n * (-1)}:\radius) {};
\node[robot,label={0:$r_2$}]  at ({360/\n * 4}:\radius) {};

\draw[-{latex}] ({360/\n * (26)}:\radius+15) arc ({360/\n * (26)}:{360/\n * (30.9)}:\radius+15);
\node[] at ({360/\n * (30)}:\radius+25) {\small $d_1$};

\draw[-{latex}] ({360/\n * (31)}:\radius+15) arc ({360/\n * (31)}:{360/\n * (\n+4)}:\radius+15);
\node[] at ({360/\n * (35)}:\radius+25) {\small $\rho m/2$};

\draw[-{latex}] ({360/\n * (-1)}:\radius+25) arc ({360/\n * (-1)}:{360/\n * (4)}:\radius+25);
\node[] at ({360/\n * (1.5)}:\radius+40) {\small $d_2$};

%\draw[-{latex}] ({360/\n * (29)}:\radius+35) arc ({360/\n * (29)}:{360/\n * (\n+1)}:\radius+35);
%\node[] at ({360/\n * (35)}:\radius+45) {\small $-\rho/2$};

\end{tikzpicture}
    \captionof{figure}{Case 1.c}
    \label{fig:case 1.c}
\end{minipage}
~
\begin{minipage}{0.49\textwidth}
\centering
\begin{tikzpicture}
  \tikzstyle{robot}=[circle,fill=black!100,inner sep=0.03cm]
\def \n {41}
\def \radius {2cm}
\def \margin {8} % margin in angles, depends on the radius

\foreach \s in {0,...,\n}
{
  \node[draw, circle,inner sep=0.03cm] at ({360/\n * \s}:\radius) {};
}
\draw[red] ({360/\n * 29}:\radius+10) --  ({360/\n * (\n/2 + 29)}:\radius+10);
\draw[blue] ({360/\n * 1}:\radius+10) --  ({360/\n * (\n/2 + 1)}:\radius+10);

\node[robot,label={-90:$r_1$}]  at ({360/\n * 31}:\radius) {};
\node[robot,label={-135:$r_1'$}]  at ({360/\n * 26}:\radius) {};

\node[robot,label={0:$r_2$}]  at ({360/\n * (-1)}:\radius) {};
\node[robot,label={0:$r_2'$}]  at ({360/\n * 4}:\radius) {};

\draw[-{latex}] ({360/\n * (26)}:\radius+19) arc ({360/\n * (26)}:{360/\n * (30.9)}:\radius+19);
\draw[-{latex}] ({360/\n * (-0.9)}:\radius+17) arc ({360/\n * (-0.9)}:{360/\n * (3.9)}:\radius+17);

\node[] at ({360/\n * (29)}:\radius+25) {\small $d_1$};
\node[] at ({360/\n * (1)}:\radius+25) {\small $d_2$};

\draw[dashed,blue] ({360/\n * 31}:\radius) --  ({360/\n * (12)}:\radius);
\draw[dashed,blue] ({360/\n * 26}:\radius) --  ({360/\n * (17)}:\radius);
\node[red] at (-115:1.5) {\scriptsize $L_1$};
\node[blue] at (20:1.4) {\scriptsize $L_2$};

\draw ({360/\n * 15}:\radius+10) --  ({360/\n * (\n/2 + 15)}:\radius+10);
\node[] at (-40:1.2) {\scriptsize $L_3$};

\draw[dashed,red] ({360/\n * 4}:\radius) --  ({360/\n * (13)}:\radius);
\draw[dashed,red] ({360/\n * (-1}:\radius) --  ({360/\n * (18)}:\radius);

\node[robot,label={135:{\scriptsize $s_1(r_2)$}}]  at ({360/\n * 13}:\radius) {};
\node[robot,label={135:{\scriptsize $s_2(r_1')$}}]  at ({360/\n * 17}:\radius) {};
\node[robot,label={90:{\scriptsize $s_2(r_1)$}}]  at ({360/\n * 12}:\radius) {};
\node[robot,label={180:{\scriptsize $s_1(r_2')$}}]  at ({360/\n * 18}:\radius) {};

\end{tikzpicture}
    \captionof{figure}{Case 1.d}
    \label{fig:case 1.d}
\end{minipage}

From (\ref{eq:seq from r1 to r2}) we deduce that $r_2 = \rot(\rho m/2)(r_1)$. Similarly 
\begin{equation}\label{eq:case1c eq1}
    r_2' = \rot(\rho m'/2)(r_1').
\end{equation}
Moreover, in this case we have 
\begin{align*}
\left.\begin{array}{rl}
r_1 &= \rot(d_1)(r_1')\\
r_2 &= \rot(\rho m/2)(r_1)\\
r_2' &= \rot(-d_2)(r_2)\\
\end{array}\right\}
\Rightarrow r_2' = %\rot(-2x-1 + \rho m/2 + 2x+1)(r_1') =
\rot(\rho m/2+d_1-d_2)(r_1').
\end{align*}
With Equation~(\ref{eq:case1c eq1}), this implies
\[
\rot\left(\rho\frac{m}{2}+d_1-d_2\right) = \rot\left(\rho\frac{m'}{2}\right)
\]
so that 
\begin{equation}\label{eq:case1c:eq2}
\rot(\rho\frac{m - m'}{2}) = \rot(d_2-d_1).
\end{equation}

Another interesting observation is to see that, since $\overline{r_1} = s_1(r_1)$ and $\overline{r_2} = s_2(r_2)$, we in fact have $\overline{r_2} = \rot\left(-\rho\left(\frac{m}{2} + 1\right)\right)(\overline{r_1})$ (it is like extending the sequence $(u_i)_i$ one point before and one after, and the sequence of symmetries now starts with $s_1$ instead of $s_2$, so the rotation is opposite). And since $\overline{r_1} = \rot(1)(r_1')$ and $\overline{r_2} = \rot(1)(r_2')$, then we have 
\[
r_2' = \rot\left( -\rho\left(\frac{m}{2} + 1\right) + 2\right)(r_1')
\]
with Equation~(\ref{eq:case1c eq1}) we obtain
\begin{equation}\label{eq:case1c:(m+m')/2+1}
\rot\left(\rho\left(\frac{m+m'}{2} + 1\right)\right)  = \rot\left(2\right)
\end{equation}

We now show that $d_2 = d_1$ and $m=m'$. First, assume for the sake of contradiction that $d_1 \neq d_2$ (then clearly $m\neq m'$, otherwise $\rot\left(\rho\frac{m - m'}{2}\right) = \rot(0) = \rot(d_2-d_1)$ is a contradiction).
Assume that $d_1 < d_2$ and $m>m'$ (the other cases are proved in a similar way).
If $(m-m')/2$ is even, Equation~(\ref{eq:case1c:eq2}) implies $\rot(\rho\frac{m - m'}{4}) = \rot\left(\frac{d_2-d_1}{2}\right)$. So we have
\[
u_{m-2\frac{m-m'}{4}} = \rot\left(-\rho\left(\frac{m-m'}{4}\right)\right)(r_2)= \rot\left(-\frac{d_2-d_1}{2}\right)(r_2).
\]
so the robot at $u_{m-2\frac{m-m'}{4}}$ is between $r_2'$ and $r_2$, at distance $0 < \frac{d_2-d_1}{2} < d_2$ from $r_2$. But a robot can be between $r_2$ and $r_2'$ only if it is on the quasi-axis $L_2$ (it is the target node), so at distance $(d_2 + 1)/2$ from $r_2$, which implies that $d_1= - 1$, which is a contradiction. 
%If $(m-m')/2$ is odd, so is  $(m+m')/2$ (because $m'$ is even), so $(m+m')/2 + 1$ is even. 
If $(m-m')/2$ is odd, then $(m+m')/2$ is also odd (because $(m+m')/2 = (m-m')/2+m'$ and $m'$ is even). Thus, $(m+m')/2 + 1$ is even.
However, we know that $\rot\left(-\rho\left(\frac{m+m'}{2} + 1\right)\right)=\rot(2)$ (Equation~\ref{eq:case1c:(m+m')/2+1}), so that, by dividing by two, $\rot\left(-\rho\left(\frac{\frac{m' + m}{2} + 1}{2}\right)\right)=rot(1)$. Since either $m/2$ or $m'/2$ is greater than $a = \frac{\frac{m' + m}{2} + 1}{2}$ we obtain a contradiction: either robot $u_{m - 2a}=\rot(-1)(r_2)$ is a robot located at $\bar{r_2'}$ (which should be empty) or robot $u_{2a}' = \rot(1)(r_1')$ is a robot located at $\bar{r_1}$.

Hence, we know that $m = m'$ and $d_1=d_2$. This implies that $r_1$ and $r_2'$ are symmetric with respect to the line $L_3$, bisector of $L_1$ and $L_2$. Similarly, $s_2(r_1)$ is the symmetric of $s_1(r_2')$ with respect to $L_3$, $s_1(s_2(r_1))$ is the symmetric of $s_2(s_1(r_2'))$, etc...
 The two sets of robots $\{u_i\}_{0\leq i\leq m}$ and $\{u_i'\}_{0\leq i\leq m}$ are symmetric with respect to $L_3$.

If some robots are not in this sets, i.e., $R' = R\setminus \left(\{u_i\}_{0\leq i\leq m} \cup \{u_i'\}_{0\leq i\leq m}\right)\neq \emptyset$, then both $L_1$ and $L_2$ are axes of symmetry of $R'$ (because the configuration is quasi-symmetric for $L_1$ and $L_2$), hence, by Lemma~\ref{lem:2 axis of sym => bisector is an axis}, $L_3$ is also an axis of symmetry of $R'$.
Hence, the whole configuration is node-edge-symmetric with respect to $L_3$, a contradiction.

%
%    CASE 1.d
% 
\textbf{Case $1.d$:} $r_2' = \rot(d_2)(r_2)$ and $r_2$ is the leading robot (see Figure~\ref{fig:case 1.d}).

We first show that $\frac{m'+m}{2}$ is even. 
We have, using the same notation as in case $1.c$, 
\begin{equation}\label{case1.c two equations}
\begin{array}{l}
\rot(d_1-1 + \rho m + d_2-1) = \rot(-\rho) \\
\rot(1-d_1 + \rho m' + 1-d_2) = \rot(-\rho)
\end{array}
\end{equation}
and we obtain by addition
\[
\rot(\rho ((m' + m) + 2)) = \rot(0).
\]
Since $n$ is odd and $m'+m$ is even, we have
\[
\rot\left(\rho \left(\frac{m' + m}{2} + 1\right)\right) = \rot(0)
\]
If we assume for the sake of contradiction that $\frac{m' + m}{2} + 1$ is even, then,
\[
\rot\left(\rho \left(\frac{\frac{m' + m}{2} + 1}{2}\right)\right) = \rot(0)
\]
 and since either $m/2$ or $m'/2$ is greater than $a = \frac{\frac{m' + m}{2} + 1}{2}$, we obtain a contradiction ($u_{2a}$ is collocated with $r_1$ or $u_{2a}'$ is collocated with $r_1'$, and this would create a cycle in the sequence $u_0, u_1,\ldots$, which is not possible).

So we know that $\frac{m' + m}{2}$ is even. We now show that $d_1 = d_2$. We can subtract and divide by 2 the two Equations~(\ref{case1.c two equations}) to obtain
\[
\rot\left(\rho \left(\frac{m' - m}{2}\right)\right) = \rot(d_2+d_1+2)
\]
%\begin{align*}
%\rot\left(\rho \left(m' - m\right)\right) = \rot\left(2d_2+2d_1+4\right)
%\end{align*}
Since $\frac{m' - m}{2}$ is even and $d_2+d_1$ is also even (recall that $d_1$ and $d_2$ are odd), we can again divide by 2:
\begin{align*}
    \rot\left(\rho \left(\frac{m' - m}{4}\right)\right) = \rot\left(\frac{d_2+d_1}{2}+1\right)
\end{align*}
If, by contradiction, $d_1\neq d_2$, then either $d_1$ or $d_2$ is at most $\frac{d_2+d_1}{2}+1$.  This implies that a robot in the sequence $(u_i)_i$ or $(u_i')_i$ is between $r_1$ and $r_1'$ or between $r_2$ and $r_2'$. Assume that $m' > m$ and $d_1\leq \frac{d_2+d_1}{2}+1$ (the other cases are done in a similar way). Then the robot at $u_{2\frac{m'-m}{4}}'$ is between $r_1$ and $r_1'$,  at distance $\frac{d_2+d_1}{2}+1$ from $r_1'$. The only robot that might be in between those robots is a robot that is on the quasi-axis (the target node), so at distance $\frac{d_1 + 1}{2}$ from $r_1'$, but $\frac{d_2+d_1}{2}+1>\frac{d_1 + 1}{2}$ so this is not possible.
%QB: maybe I have to add detail for the above affirmation

So we know $d_1=d_2$ and this implies that $r_1$ and $r_2$ are symmetric with respect to $L_3$, the line bisector of $L_1$ and $L_2$. In turn, $s_2(r_1)$ and $s_1(r_2)$ are also symmetric with respect to $L_3$. $s_1(s_2(r_1))$ and $s_2(s_1(r_2))$ as well and so on. The same thing is true between $r_1'$ and $r_2'$, so $L_3$ is an axis of symmetry of the set of robots $R_{seq} = \{r_1, r_1', u_1, u_1', \ldots, u_m, u_m'\}$.

If some robots are not in this set i.e., $R' = R\setminus R_{seq}\neq \emptyset$, then both $L_1$ and $L_2$ are axes of symmetry of $R'$ (because the configuration is quasi-symmetric for $L_1$ and $L_2$), hence, by Lemma~\ref{lem:2 axis of sym => bisector is an axis}, $L_3$ is also an axis of symmetry of $R'$.

Hence, the whole configuration is node-edge-symmetric with respect to $L_3$.\\
%\textbf{Case $1.b$:} $r_1 = \rot(x+1)(l_1)$ and $r_2 = \rot(-x-1)(l_2)$. The proof is the same as the previous case.
%
%    CASE 2
% 
\textbf{Case 2: $i=2$.}
Then $u_m = r_1'$. Again, since $s_1\circ s_2 = \rot(\rho)$, we have that 
\[
r_1 \apprho{} u_2 \apprho{} u_4 \apprho{} \ldots  \apprho{} u_{m-1} \apprho{} \bar{r_1'}
\]
Where $u_{m-1} \apprho{} \bar{r_1'}$ because $u_m \apps{1} \bar{r_1'}$. So we have $\bar{r_1'} = \rot(\rho m/2)(r_1)$.
Since $r_1$ and $\bar{r_1'}$ are adjacent (more precisely, we assumed w.l.o.g that $\bar{r_1'} = \rot(1)(r_1)$) then $\rho m/2 \equiv 1 \mod n$ and this implies that $\rho$ and $n$ are coprime. Hence, $m$ is the smallest integer such that $\rho m/2 \equiv 1 \mod n$ (otherwise each node contains a robot). 

We can do the same starting from $r_2$, but this time the sequence starts with $s_1$:
\[
r_2 \apps{1} u_1' \apps{2} u_2'\apps{1} \ldots \apps{1} u_k'
\]
where $u_k'=r_2'$, hence we have
\[
r_2 \apprho{-1} u_2' \apprho{-1} u_4' \apprho{-1} \ldots  \apprho{-1} u_{m'-1}'
\apprho{-1} \bar{r_2'}
\]
 Using the same argument as previously, we know that $m' = m$. Thus, $r_2 = \rot(\rho m/2)(\bar{r_2'})$, but since $\rot(\rho m/2)=\rot(1)$, we have $r_2 = \rot(1)(\bar{r_2'})$, which implies that the two quasi-axis have opposite orientations, as depicted in Figure~\ref{fig:case 2 example}.%, as depicted in Figure~\ref{fig:case 2} (and cannot be as depicted in Figure~\ref{fig:case 2'})
 % QB: the following is not true anymore (d_1 != d_2), and $r_1$ and $r_2$ are symmetric with respect to the line $L_3$, bisector of $L_1$ and $L_2$.
 %Similarly, $s_2(r_1)$ is the symmetric of $s_1(r_2)$ with respect to $L_3$, $s_1(s_2(r_1))$ is the symmetric of $s_2(s_1(r_2))$, etc... 
% The two sets of robots $\{u_i\}_{0\leq i\leq m}$ and $\{u_i'\}_{0\leq i\leq m}$ are symmetric with respect to $L_3$.
%If some robots are not in this sets i.e., $R' = R\setminus \left(\{u_i\}_{0\leq i\leq m} \cup \{u_i'\}_{0\leq i\leq m}\right)\neq \emptyset$, then both $L_1$ and $L_2$ are axes of symmetry of $R'$ (because the configuration is quasi-symmetric for $L_1$ and $L_2$), hence, by Lemma~\ref{lem:2 axis of sym => bisector is an axis}, $L_3$ is also an axis of symmetry of $R'$.
%So the whole configuration is symmetric with respect to $L_3$.

So, in this case, it is possible to have two quasi-axis with opposite orientations. We now prove that the configuration cannot have a third quasi-axis. Indeed, if the configuration has a $\{r_3, r_3'\}$-quasi-axis $L_3$, then it has an orientation opposite to either $\{r_1,r_1'\}$ or $\{r_2,r_2'\}$. Assume w.l.o.g that the orientation of $\{r_3, r_3'\}$ is opposite with $\{r_1, r_1'\}$. This means that we can do the same reasoning between these two quasi-axis as between $\{r_1,r_1'\}$ and $\{r_2,r_2'\}$, using the $s_3$ symmetry of axis $L_3$ instead of the $s_2$ symmetry. In this reasoning, we have two sequences of robots:
\begin{align*}
r_1 = v_0 \apps{3} v_1 \apps{1} v_2\apps{3} \ldots \apps{i} v_{m_{1,3}}\\
r_1' = v'_0 \apps{3} v'_1 \apps{1} v'_2\apps{3} \ldots \apps{i} v'_{m'_{1,3}}
\end{align*}
However, only {\bf Case 2} is possible (because we saw that if {\bf Case 1} applies, the two quasi-axis have the same orientation). Hence, by denoting $\rho_{1,3} = s_1\circ s_3$, we have
\begin{align*}
    \bar{r_1'} &= \rot(\rho_{1,3} m_{1,3}/2)(r_1)\\
    r_3 &= \rot(\rho_{1,3} m_{1,3}/2)(\bar{r_3'})
\end{align*}
and $m_{1,3}$ is the smallest integer such that $\rho_{1,3} m_{1,3}/2 \equiv 1 \mod n$. One can see that at least one robot is either $(i)$ in the sequence $(u_i)_i\cup(u'_i)_i$ but not in $(v_i)_i\cup(v'_i)_i$ or the contrary $(ii)$ in the sequence $(v_i)_i\cup(v'_i)_i$ but not in $(u_i)_i\cup(u'_i)_i$. Indeed, otherwise $m_{1,3} = m$, which implies $\rho = \rho_{1,3}$, which implies $L_2 = L_3$, a contradiction.
Assume case $(i)$ without loss of generality. So a robot $r$ is not $(v_i)_i\cup(v'_i)_i$. When we apply sequentially $\rho_{1,3}$, we never obtain a robot in $(v_i)_i\cup(v'_i)_i$, so the set of robot $(\rho_{1,3}^i(r))_i$ is finite and periodic. However, since $\rho_{1,3}$ and $n$ are coprime, this is not possible. So, there is at most two quasi-axis of symmetry with opposite orientation. \qed

\begin{figure}
    \centering
\begin{tikzpicture}
\tikzstyle{robot}=[circle,fill=black!100,inner sep=0.05cm]
  
\def\n{25}
\def\radius{2.4cm}
\confRing

\confAxis[dashed]{0}{-4}{20}
\confAxis[dashed]{6}{-4}{20}

\confRobot[]{1}
\confRobot[red]{-2}
\confRobot[]{4}
\confRobot[]{-4}

\confRobot[red]{9}
\confRobot[]{-9}

\confRobot[]{11}
\confRobot[]{-11}

\end{tikzpicture} 
    \caption{Case 2: a configuration with two quasi-axis with opposite orientations. The crashed robot is one of the red robots.}
    \label{fig:case 2 example}
\end{figure}
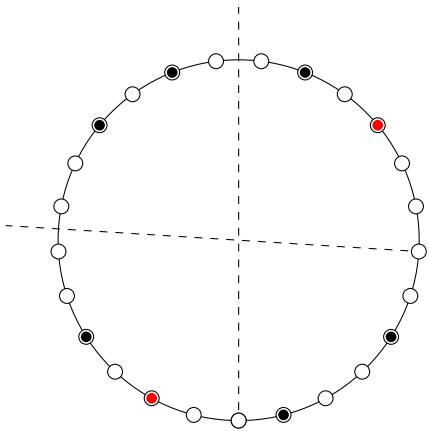

\end{proof}

%%%%%%%%%%%%%%%%%%%%%%%%%%%%%%%%%%%%%%%%%%%%%%%%%%%
%
%      NE(k) -> L3
%
%%%%%%%%%%%%%%%%%%%%%%%%%%%%%%%%%%%%%%%%%%%%%%%%%%%
\subsection{From $\NE(k)$, $k>4$, to $\LThree$}\label{sec:NE->L3}

In this section, first we show in Lemma~\ref{lem:NE(k) -> NE(k'), without crash}, when no moving robot is crashed, the configuration remains node-edge-symmetric. In Lemma~\ref{lem:NE(k) -> L3, without crash}, we prove in this case that eventually the number of occupied nodes decreases. Then, we show that, when a crash occurs among the secondary robots, we recover in one round the same configuration that we would have reached if no crash occurred (Lemma~\ref{lem:NE(k) -> NE(k'), secondary crash}). Finally, we show that, when a crash occurs among the main robots, we eventually reach a configuration where the main robots are correct (Lemma~\ref{lem:NE(k) -> NE(k'), main crash}).

\begin{lemma}
%\begin{theoremEnd}[disc]{lemma}
\label{lem:NE(k) -> NE(k'), without crash}
Let $C \in \NE(k) \setminus P$, with $k>4$ robots, and $C\oneRound C'$. If all moving robots reach their destination (no robot ordered to move is crashed), then $C'\in \NE(k') \setminus P$, with $k'= k$, $k' = k-1$, or $k'=k-2$. 
Also, if $k' \leq 4$, then $C\in \LFive$ and $C'\in \LThree$.
%\SK{Is it clear that, if $k>4\land k'\leq4$, then $C\in \LFive$? QB: Should be better now}
%\end{theoremEnd}
\end{lemma}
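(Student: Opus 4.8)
The plan is to verify the three asserted properties of $C'$ in turn---that it is node-edge symmetric, that it is non-periodic, and that its occupied-node count $k'$ satisfies $k-2\le k'\le k$---and then to isolate the boundary case $k'\le 4$.

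First, symmetry and non-periodicity. Since $C\in\NE(k)$ has a node-edge axis $L$ through \vtarget, every occupied node other than \vtarget comes in a pair exchanged by the reflection $s_L$. In each branch of Algorithm~\ref{algo:ring} that can apply when $k>4$ (namely \MoveMain, \MoveSecondary, or \MoveMain{} together with \MoveSecondary), the robots ordered to move form $s_L$-symmetric pairs and each pair moves to $s_L$-symmetric destinations (``toward'' or ``away from'' \vtarget is itself symmetric with respect to $L$). Hence $s_L(C')=C'$, and since $n$ is odd the only reflections are node-edge symmetries, so $C'\in\NE(k')$. Non-periodicity is exactly Lemma~\ref{lem:initial config remains non-periodic with crash}, the no-crash case being a special case, so $C'\notin\Periodic$.

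Next, the count. I would use two structural facts. First, because \vtarget is the unique fixed node of $L$ and all other occupied nodes are paired, \vtarget is occupied if and only if $k$ is odd. Second, the main robots are by definition the occupied nodes closest to \vtarget other than \vtarget itself, so every node strictly between a main robot and \vtarget is empty. Consequently, when the main robots move toward \vtarget: if they are not adjacent to \vtarget their destinations are empty (no merge, the pair contributes $0$ to the change in occupied-node count), whereas if they are adjacent to \vtarget both land on \vtarget, vacating two nodes and occupying \vtarget. The secondary robots, by definition, move to an empty adjacent node, so they never merge and contribute $0$. Combining, the only possible decrease comes from the main robots landing on \vtarget: this yields $k'=k-1$ if \vtarget was empty and $k'=k-2$ if \vtarget was occupied; otherwise $k'=k$. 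Hence $k'\in\{k,k-1,k-2\}$, and a decrease of two forces \vtarget occupied, i.e. $k$ odd.

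Finally, the case $k'\le 4$. From $k>4$ and $k'\ge k-2$ we get $k\in\{5,6\}$. If $k=6$ then $k$ is even, so \vtarget is empty, and by the previous paragraph the decrease is at most one, giving $k'\ge 5$, a contradiction. Thus $k=5$ and \vtarget is occupied. Here the first branch of the algorithm (\MoveSecondary alone, admissible since $k=5$) fires precisely when the main robots are adjacent to \vtarget and $C\notin\LFour\cup\LFive$; in that branch only the secondary robots move to empty nodes, so $k'=5$. If instead the main robots are not adjacent to \vtarget, their move keeps destinations empty and again $k'=5$. The only remaining possibility is that the main robots are adjacent to \vtarget and $C\in\LFive$ (a block of five, as $\LFour$ is vacuous for $k=5$): then \MoveMain sends both main robots onto the occupied \vtarget, leaving occupied nodes $u_{-2},u_0,u_2$, i.e. two nodes at distance $4$ with one in the middle, so $C'\in\LThree$ with $k'=3$. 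One checks this \LThree{} is non-periodic for odd $n$, so \MoveMain is indeed selected rather than the periodic-avoiding branch. Therefore $k'\le 4$ occurs only for $C\in\LFive$ with $C'\in\LThree$. The main obstacle is the bookkeeping for $k=5$: one must confirm that the algorithm's first branch intercepts exactly those adjacent-main configurations that would otherwise collapse to a non-\LThree{} triple, leaving the block of five as the sole configuration from which the occupied-node count drops below five; the symmetry and non-periodicity claims, by contrast, follow immediately from Lemma~\ref{lem:initial config remains non-periodic with crash} and the oddness of $n$.
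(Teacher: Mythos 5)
Your treatment of the symmetry of $C'$, of the occupied-node count (via the parity of $k$ versus the occupancy of \vtarget, and the observation that the only possible merge point is \vtarget), and of the boundary case $k'\le 4$ is correct and matches the paper's argument; in places it is more explicit than the paper's own write-up. The problem is the non-periodicity step, which you dispose of in one line by invoking Lemma~\ref{lem:initial config remains non-periodic with crash} and declaring the no-crash case a ``special case'' of it. That lemma's statement does formally subsume the no-crash case, but its proof in the paper argues only via quasi-periodicity (``$C'$ is quasi-periodic and has an axis of symmetry''), a notion that presupposes exactly one ordered-to-move robot failed to move, i.e.\ a crash. In the paper's logical architecture the no-crash half of that claim is supplied by the very lemma you are proving here, so your citation is effectively circular: you are outsourcing the statement to a lemma whose proof, for the case you need, points back to this one.

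Concretely, the work you skip is the branch at Line~\ref{algo:ring:periodic}, where the main \emph{and} secondary robots move precisely because moving the main robots alone would yield a periodic configuration. It is not automatic that also moving the secondary robots breaks periodicity, and this is the only genuinely delicate part of the lemma. The paper handles it by contradiction: if $C'$ were periodic, then so would be the configuration $C''$ obtained from $C'$ by undoing the secondary robots' move (that is the hypothesis of the branch); both would be periodic with period at least $3$ (as $n$ is odd), hence with at least $6$ occupied nodes, and since $C'$ and $C''$ agree on all but two occupied positions they share more than $n/3$ consecutive nodes in the same state, so Lemma~\ref{lem:periodic configurations are determined by k/3 consecutive positions} forces $C'=C''$, a contradiction. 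The other two branches are easy (Line~\ref{algo:ring:line check not periodic} rules out periodicity by construction, and the \MoveSecondary-only branch with $k=5$ leaves a block of three occupied nodes, incompatible with a $5$-fold periodic configuration on an odd ring). You need to restore this case analysis, or at minimum the Line~\ref{algo:ring:periodic} argument, rather than deferring to Lemma~\ref{lem:initial config remains non-periodic with crash}.
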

\begin{proof}
%\begin{proofEnd}
Clearly, the obtained configuration $C'$ remains node-edge-symmetric, because the main (symmetric) robots are ordered to move towards the same target node located on the axis of symmetry, and the secondary (symmetric) robots might be ordered to move towards (or away) the target node.
In the following, we show that 
$C'$ is not periodic.

If only the main robots are ordered to move in $C$, then, from the condition of Line~\ref{algo:ring:line check not periodic}, we know that $C'$ is not periodic. If they reach the target node, then $k'=k-1$ or $k'=k-2$ (depending whether the target node is occupied or not in $C$), otherwise $k'=k$.
In the case $C\in \LFive$ (\LFiveFigure{}), only the main robots are ordered to move, and we obtain $C'\in \LThree$ (\LThreeFigure{}). If $k' \leq 4$, then it means the main robots have reached $v_{target}$ (it is the only node where a multiplicity can be created). If $v_{target}$ is empty in $C$, then $k'=k-1$, but $k$ is even ($C$ is node-edge-symmetric with no robot on the axis) and so $k\geq 6$, which implies $k' > 4$. So if $k'\leq 4$, it means $v_{target}$ is occupied in $C$, $k = 5$ and $k' = k -2 = 3$. When $k=5$, the main robots are ordered to move towards the target only when the configuration is $\LFive$.

If only the secondary robots move in $C$, from the condition of Line~\ref{algo:45}, it means that either there are 4 occupied nodes (so $C'$ is not periodic because $n$ is odd) or there are 5 occupied nodes but 3 of them are adjacent, so $C'$ is not periodic, and $k'=k$. %\QB{This case should actually be treated later because here we assume k > 5 now}.

Otherwise, the main and the secondary robots move in $C$. In this case, assume by contradiction that $C'$ is periodic. Since the secondary robots are ordered to move in $C$, then we know that $C'$ is periodic if they stay idle, so $C'$ is periodic and the configuration $C''$ obtained from $C'$ by moving the two secondary robots to their original position is also periodic. %\SK{I'm sorry but I cannot understand this part. The secondary robots stay idle, so why they move to original node?} \QB{We don't actually move the robots, the sentence is more like "if we move the robots, the configuration remains periodic}
Since $C'$ is periodic with period at least $3$ (because $n$ is odd), then there are at least $6$ occupied positions. By moving 2 robots to obtain $C''$, $4$ occupied positions remain unchanged. Since $4\geq 6/2$, by Lemma~\ref{lem:periodic configurations are determined by k/3 consecutive positions}, we have the two configurations, $C'$ and $C''$, are equal, which is a contradiction, so $C'$ is not periodic.\qed
%\SK{One of the "two configurations" may be $C'$, but what is the other?. } \QB{It should be better now}
% a configuration that is periodic with period $p>1$ remains $m$-periodic if robots in a multiple of $m$ occupied location are moved, where $m$ is a divisor of $p$. Here $m\geq3$ because $n$ is odd.
%\end{proofEnd}
\end{proof}

\begin{theorem}[disc]{lemma}\label{lem:NE(k) -> L3, without crash}
Let $C \in\NE(k) \setminus P$ with $k>4$ robots. %\QB{I changed a little bit the statement}
%When executing Algorithm~\ref{algo:ring} from $C \in NE(k) \setminus P$ with $k>4$, 
If no moving robot crashes, then %after a finite number of rounds, we obtain configuration 
there exists $C'\in \NE(k') \setminus P$, $C\overset{*}{\oneRound} C'$, with $k' = k-1$ or $k'=k-2$.  Also, if $k' \leq 4$, then $C\in \LFive$ and $C'\in \LThree$.
\end{theorem}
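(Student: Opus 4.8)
The statement asks us to upgrade the single-round Lemma~\ref{lem:NE(k) -> NE(k'), without crash} into a multi-round convergence statement: starting from $C\in\NE(k)\setminus P$ with $k>4$ and no crash, we eventually \emph{strictly} decrease the number of occupied nodes, reaching some $C'\in\NE(k')\setminus P$ with $k'\in\{k-1,k-2\}$. The plan is to iterate Lemma~\ref{lem:NE(k) -> NE(k'), without crash} and argue that the ``$k'=k$'' (occupancy-preserving) case cannot persist forever. By that lemma, every one-round step keeps us in $\NE(\cdot)\setminus P$, and each step either strictly decreases the occupancy (and we are done) or keeps it equal to $k$. So the whole task reduces to showing that the chain of equal-occupancy rounds is finite.

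\textbf{The key progress measure.} First I would isolate when $k'=k$ happens: by the proof of Lemma~\ref{lem:NE(k) -> NE(k'), without crash}, this is exactly when the main robots move towards \vtarget\ without yet reaching it (possibly with the secondary robots also moving). In all such rounds the main robots strictly decrease their distance to \vtarget\ (they each advance one node along the axis toward the target). I would therefore take as the potential function the distance $\delta$ from a main robot to \vtarget. Since $\delta$ is a non-negative integer that drops by one each time the occupancy is preserved, after finitely many rounds the main robots must reach \vtarget, at which point a multiplicity is created there and the occupancy drops, giving $k'=k-1$ or $k'=k-2$. The one subtlety is the algorithm's branching: on some configurations (Line~\ref{algo:45}) only the \emph{secondary} robots are ordered to move, so the main robots do not advance that round. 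I would handle this by noting that the Line~\ref{algo:45} branch applies only to the finitely many special small-occupancy configurations ($k\in\{4,5\}$, not in $\LFour\cup\LFive$, main robots adjacent to \vtarget), so for $k>5$ the main robots always move, and for $k=5$ the secondary-move branch can be entered only finitely often before the configuration becomes $\LFive$ (or the main robots are already adjacent and one more structured move applies).

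\textbf{Finishing and the $k'\le 4$ clause.} Once the occupancy strictly drops, I invoke Lemma~\ref{lem:NE(k) -> NE(k'), without crash} again to confirm $C'\in\NE(k')\setminus P$ with $k'\in\{k-1,k-2\}$, which is precisely the claim. For the final sentence, ``if $k'\le 4$ then $C\in\LFive$ and $C'\in\LThree$,'' I would reuse the identical combinatorial argument already given inside the proof of Lemma~\ref{lem:NE(k) -> NE(k'), without crash}: because $C$ is node-edge symmetric with no robot forced onto the axis when $v_{target}$ is empty, dropping below $5$ occupied nodes forces $k=5$, $v_{target}$ occupied, the configuration $\LFive$, and the resulting $C'=\LThree$. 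Since $k>4$ and the chain of equal-occupancy rounds is finite, the first strict drop occurs from a configuration that (when $k'\le 4$) must be $\LFive$.

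\textbf{Main obstacle.} The one place requiring care is the interplay between the two possible moves (\MoveMain\ alone versus \MoveMain{}~and~\MoveSecondary, and the Line~\ref{algo:45} secondary-only branch) and making sure the main-robot distance $\delta$ is a genuinely monotone potential across \emph{all} equal-occupancy rounds. I expect the hardest part to be verifying that entering the secondary-only branch (Line~\ref{algo:45}) cannot create an infinite loop that stalls the main robots; the cleanest route is to show that this branch, by moving the secondary robots, strictly drives the configuration toward $\LFour\cup\LFive$, after which the main robots are released to advance, so $\delta$ still reaches $0$ in finitely many rounds.
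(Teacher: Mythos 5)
Your proposal is correct and follows essentially the same route as the paper: iterate Lemma~\ref{lem:NE(k) -> NE(k'), without crash} to stay in $\NE(\cdot)\setminus P$, use the main robots' distance to \vtarget as a strictly decreasing integer measure on occupancy-preserving rounds, and inherit the $k'\le 4\Rightarrow C\in\LFive,\ C'\in\LThree$ clause from the single-round lemma. If anything, you are more explicit than the paper about why the Line~\ref{algo:45} secondary-only branch (relevant only for $k=5$) cannot stall the main robots forever, which is a point the paper's proof leaves implicit.
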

\begin{proof}
%Since $k>4$, only the main robots are ordered to move, so each moving robot reaches its destination. %\SK{Cannot secondary robots move? QB: yes indeed}
After each round, by Lemma~\ref{lem:NE(k) -> NE(k'), without crash}, %lem:initial config remains initial without crash}, 
the configuration is in $NE(k') \setminus P$ (with $k'=k$, $k' = k-1$ or $k'=k-2$). The main robots do not move towards an occupied node, except when it is $v_{target}$. If they reach $v_{target}$, either $k' = k-1$ (if $v_{target}$ is empty in $C$), or $k' = k-2$ (otherwise). If they are not adjacent to $v_{target}$, they remain the main robots and are one hop closer to $v_{target}$, so they eventually reach $v_{target}$ after a finite number of rounds.\qed
\end{proof}

%\begin{theoremEnd}[disc]
\begin{lemma}\label{lem:NE(k) -> NE(k'), secondary crash}
Let $C \in \NE(k) \setminus P$ with $k>4$ robots, and $C\oneRound C'$. If a secondary robot is crashed in $C$, then $C'\oneRound C''$ such that $C\oneRoundNoCrash C''$.
%\end{theoremEnd}
\end{lemma}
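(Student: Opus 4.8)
The plan is to show that Algorithm~\ref{algo:ring} recognizes $C'$ as a single-quasi-axis configuration, orders \MoveSame, and that this move exactly ``absorbs'' the step missed by the crashed secondary robot, landing on the no-crash successor. Throughout I write $\hat C$ for the no-crash successor of $C$, i.e. $C\oneRoundNoCrash\hat C$.

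First I would set up the geometry and identify the move. Since a secondary robot is crashed, in the first round every robot ordered to move reaches its destination except the crashed secondary robot; hence $C'$ and $\hat C$ agree on every node \emph{except} that the crashed robot sits at its position in $C$ instead of at its intended destination. As recalled just before the statement, a crashed secondary robot that was ordered to move yields a configuration that is quasi-node-edge symmetric and quasi-periodic. I would write the associated pair as $\{r,r'\}$ with $r$ the \emph{correct} secondary robot (the one whose removal makes the configuration periodic, so $C'$ is $r$-quasi-periodic) and $r'$ the crashed one, and check that the gap distance of this quasi-axis is even. Lemma~\ref{lem:even quasi-axis is unique} then applies and gives that this quasi-axis is unique, so $C'\in\QNE_{(*,0)}$ and Algorithm~\ref{algo:ring} (line~\ref{alg:all}) orders \MoveSame.

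Next I would analyze \MoveSame\ as a rigid rotation. The move translates every robot by one node along the orientation opposite to the quasi-axis; all correct robots comply, while the crashed robot $r'$ stays put. Because the algorithm and the gathering objective are invariant under the automorphisms of the ring, $C''$ and its inverse one-node translation represent the same configuration, so it suffices to identify this inverse translation with $\hat C$. Under the inverse translation the correct robots return exactly to their positions in $C'$, whereas $r'$ is displaced by one node in the direction opposite to that of \MoveSame. The correct robots' positions in $C'$ already coincide with their positions in $\hat C$ (they reached their destinations in the first round), so the only thing left to verify is that $r'$ lands on its intended destination. I would check the orientation bookkeeping: the quasi-axis orientation is fixed by the leading robot, which here is precisely the crashed robot $r'$ (its reflection across the quasi-axis lies on the gap path $P$), so the orientation is such that $r'=\rot(d)(r)$; the opposite orientation used by \MoveSame\ therefore has an inverse that advances $r'$ by one node \emph{toward} \vtarget\ (when the secondary robots were ordered toward it) or \emph{away} from \vtarget\ (otherwise), which is exactly the step $r'$ failed to take.

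Combining the two observations concludes: the inverse-translated $C''$ has all correct robots at their $\hat C$ positions and $r'$ at its no-crash destination, so it equals $\hat C$ node for node, whence $C''\cong\hat C$, that is $C\oneRoundNoCrash C''$. I expect the orientation/direction bookkeeping to be the main obstacle: one must confirm that the orientation read off from the crashed secondary robot as leading robot forces \MoveSame\ to translate in the one direction whose inverse places $r'$ on its intended destination, and this must be verified in each sub-case producing a secondary move (the \MoveSecondary-only case for $k=5$, where the main robots stay, and the combined \MoveMain--\MoveSecondary\ case, with secondary robots moving toward or away from \vtarget). A minor point to dispatch along the way is that the uniform translation creates no unintended multiplicity on the node of $r'$, which follows from the empty-adjacent-node condition in the definition of the secondary robots.
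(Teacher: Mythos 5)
Your overall strategy is the same as the paper's: argue that $C'$ has a unique quasi-axis (via Lemma~\ref{lem:even quasi-axis is unique}), hence falls in $\QNE_{(*,0)}$ and triggers \MoveSame at Line~\ref{alg:all}, and then observe that \MoveSame acts as a rigid one-node translation of all correct robots, so that $C''$ is, up to a ring automorphism, exactly the configuration in which the crashed robot took its missed step. The paper's own proof is in fact terser than yours — it asserts the ``similar to the no-crash successor'' conclusion without doing the orientation bookkeeping you correctly identify as the crux.

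There is, however, one concrete misstep in that bookkeeping. You claim that the leading robot of the quasi-axis is always the crashed secondary robot $r'$, ``its reflection across the quasi-axis lies on the gap path $P$.'' This holds only when the secondary robots were ordered \emph{toward} \vtarget. When they move \emph{away} from \vtarget (the sub-case of Figure~\ref{fig:algo-sym-ms-away-target}, and also the $k=5$ instance of \MoveSecondary when \vtarget is occupied), put $\vtarget=u_0$, the crashed robot at $u_{-a}$ and the correct secondary robot now at $u_{a+1}$: the even gap path $P$ runs from $u_{a+1}$ to $u_{-a}$ avoiding $u_0$, and it is $\overline{u_{a+1}}=u_{-(a+1)}$ that lies on $P$, while $\overline{u_{-a}}=u_a$ does not. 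So the \emph{correct} robot is the leading one there, and the orientation satisfying $r=\rot(d)(r')$ is the reverse of the one your rule would produce. If you ran your computation in that sub-case with the crashed robot as leader, the inverse translation would appear to push $r'$ one node in the wrong direction and the identification with $\hat C$ would fail. The lemma is still saved — with the correctly identified leader the orientation flips too, and $\rot(1)$ in the quasi-axis direction still carries $u_{-a}$ to $u_{-(a+1)}$, the missed destination — but this must be re-derived, not inherited from the toward-\vtarget case. Since you explicitly deferred the sub-case check, this is the place where the deferred work is not routine: the leader/orientation identification genuinely changes between the two sub-cases even though the final conclusion does not.
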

\begin{proof}
%\begin{proofEnd}
When secondary robots are supposed to move and a secondary robot crashes in $C$, then the obtained configuration $C'$ is quasi-periodic, and has a single quasi-axis of symmetry (Lemma~\ref{lem:even quasi-axis is unique}). So it cannot be node-edge-symmetric (otherwise it would have another quasi-axis). So, after executing the algorithm for one more round, all the robots are ordered to move in the opposite orientation with respect to the quasi-axis. All the robots move except the crashed robot by applying \MoveSame, so the obtained configuration $C''$ is similar to the configuration obtained by moving the crashed robot towards the target node, i.e., $C''$ is the configuration obtained from $C$ if no robot is crashed.\qed
\end{proof}
%\end{proofEnd}

In the next Lemma, we show that, if a main robot is crashed, three cases can occur. Each case implies that the algorithm makes progress. The last two cases are easy to consider: $(b)$ when the obtained configuration remains node-edge symmetric but the new main robots are correct, and $(c)$ when it is not node-edge-symmetric, so it is quasi-node-edge symmetric and the robots detect that a crash occurred and recover (after one more round the configuration is the same as is no crash occurred). In the first case $(a)$ however, we can only say that every time the crash of a main robot creates a new node-edge symmetric configuration where the crashed robot remains one of the main robots, the distance between two given robots decreases. This implies that this case can only occur a finite number of times. 
When the crashed robot is a main robot, then there are only two possible choice for the other main robot. If $k$ is even, the other main robot is one of the two closest robots in both directions. If $k$ is odd, the other main robot is one of the two second closest robots in both directions. In this Lemma, we denote by $M(C)$ the set of the two possible main robots defined above. 
%\SK{Can we define $M(C)$ in the lemma statement more clearly? QB: we could but I am afraid it would be too long, as you want.}

%\begin{theoremEnd}[disc]
\begin{lemma}\label{lem:NE(k) -> NE(k'), main crash}
Let $C \in \NE(k) \setminus P$ with $k>4$ robots, and $C\oneRound C'$. If a main robot is crashed in $C$, then either $(a)$ $C'\in \NE(k)\setminus \Periodic$ and the distance between the two robots in $M(C)$ decreases, $(b)$  $C'\in \NE(k)\setminus \Periodic$ and the main robots are correct, or $(c)$ $C'\oneRound C''$ such that either $C\oneRoundNoCrash C''$ or the main robots are correct in $C''$.
%\end{theoremEnd}
\end{lemma}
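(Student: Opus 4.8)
The plan is to first pin down the exact shape of $C'$, then branch on whether $C'$ is node-edge symmetric, which is precisely the first test performed by Algorithm~\ref{algo:ring}. In $C$ the two main robots are symmetric with respect to the axis $L$ and lie at a common distance $\delta$ from \vtarget; the algorithm applies \MoveMain (possibly together with \MoveSecondary), so each ordered robot advances one node towards \vtarget except the crashed main robot $r_1$. Writing $r_2$ for the correct main robot, in $C'$ we have $r_2$ at distance $\delta-1$ and $r_1$ still at distance $\delta$, while every other robot keeps its symmetric position (if the secondary robots moved, they moved symmetrically). I would then verify straight from the definition that $C'$ is $\{r_1,r_2\}$-quasi-node-edge symmetric with axis $L$: deleting $r_1$ and inserting $\bar{r_2}$ produces exactly the configuration $C_{exp}$ that both main robots moving would have reached, which is node-edge symmetric, and $r_2$ is adjacent to $\bar{r_1}$. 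The odd path between $r_1$ and $r_2$ passes through \vtarget and, since the main robots are the closest to \vtarget, contains no occupied interior node other than possibly \vtarget itself; hence the gap distance is this odd length, confirming an odd gap distance consistent with a main-robot crash. By Lemma~\ref{lem:initial config remains non-periodic with crash}, $C'\notin\Periodic$.

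If $C'\notin\NE(k)$, then $C'$ is genuinely quasi-node-edge symmetric, and I would first argue that \emph{every} quasi-axis of $C'$ has odd gap distance: an even-gap quasi-axis would, by Lemma~\ref{lem:even quasi-axis is unique}, be the unique quasi-axis, contradicting the presence of the odd-gap quasi-axis $\{r_1,r_2\}$. The earlier lemma classifying configurations all of whose quasi-axes have odd gap distance then applies, so $C'\in\QNE_{(*,0)}$ or $C'\in\QNE_{(1,1)}$. In the first subcase the algorithm runs \MoveSame; moving every non-crashed robot one node in the orientation opposite to the quasi-axis coincides, up to a global rotation by one node, with letting only the crashed robot complete its move, so the resulting $C''$ equals $C_{exp}$ as a configuration, i.e.\ $C\oneRoundNoCrash C''$. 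In the second subcase \MoveOpposite is applied, after which $C''$ is node-edge symmetric with correct main robots. This is case $(c)$.

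If instead $C'\in\NE(k)$, then $C'$ is not symmetric with respect to $L$ (the pair $r_1,r_2$ breaks that axis), so its axis $L'$ is new and $C'\in\NE(k)\setminus\Periodic$. I would compare $r_1$ with the main robots of $C'$: if $r_1$ is not a main robot of $C'$, both new main robots are correct, giving case $(b)$; otherwise $r_1$ is again a main robot, giving case $(a)$, for which I must establish the progress claim. The key observation is that $r_1$'s new partner cannot be $r_2$: a node-edge axis pairing $r_1$ with $r_2$ has its fixed node \vtarget on the long (even) arc, making $r_1$ and $r_2$ the \emph{farthest} robots rather than the closest, contradicting that $r_1$ is main (that scenario is exactly case $(b)$). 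Since the only robots that can serve as $r_1$'s partner are the two elements of $M(C)$ (the two closest, or when $k$ is odd the two second-closest, robots on either side of $r_1$), the new partner is forced to be the element of $M(C)$ on the side opposite $r_2$. As $r_2$ advanced one node towards \vtarget, it moved one node closer to $r_1$ along the arc through $r_1$, hence one node closer to that opposite candidate, so the distance between the two robots of $M(C)$ strictly decreases.

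Being a positive integer, this distance can shrink only finitely often, so case $(a)$ recurs finitely many times before $(b)$ or $(c)$ is reached, which is exactly the termination content needed later. The step I expect to be the main obstacle is case $(a)$: making rigorous that, whenever $C'$ remains node-edge symmetric with $r_1$ still a main robot, the new partner is forced to be the opposite element of $M(C)$ and that the monitored distance genuinely drops. This hinges on the parity bookkeeping of the odd path and on ruling out any other robot becoming $r_1$'s partner. The secondary delicate point is justifying that all quasi-axes are odd in case $(c)$, where Lemma~\ref{lem:even quasi-axis is unique} is the essential tool.
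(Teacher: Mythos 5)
Your proof follows the same decomposition as the paper's own (much terser) argument: split on whether $C'$ is still node-edge symmetric; in the quasi-symmetric case obtain case $(c)$ via \MoveSame (recovering $C\oneRoundNoCrash C''$) or \MoveOpposite (yielding a new axis with correct main robots); and in the symmetric case conclude $(b)$ or $(a)$ by observing that the surviving main robot has moved one hop closer to the crashed robot and hence to the other candidate in $M(C)$. The one place you go beyond the paper --- deducing that every quasi-axis of $C'$ has odd gap distance from Lemma~\ref{lem:even quasi-axis is unique} --- is not quite licensed as stated, since that lemma additionally assumes the configuration is quasi-periodic, which need not hold after a main-robot crash; the paper's proof simply asserts the $\QNE_{(*,0)}$-or-$\QNE_{(1,1)}$ dichotomy without justifying this step, so your version is no less rigorous, merely explicit about a point the paper leaves implicit.
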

%\begin{proofEnd}
\begin{proof}
As in Lemma~\ref{lem:NE(k) -> NE(k'), secondary crash}, if the configuration $C'$ is not node-edge symmetric, then either all the quasi-axes have the same orientation and move \MoveSame is executed, or there are two quasi-axes with opposite orientation and move \MoveOpposite is executed. If move \MoveSame is executed, the obtained configuration $C''$ is similar to the configuration obtained from $C$ if no robot is crashed. If move \MoveOpposite is executed, the obtained configuration $C''$  has a new axis of symmetry where the main robots are correct. So, the case $(c)$ holds. %Also, in the former case, the distance between the robots in $M(C)$ decreases (since one of them moves towards the crashed robot).

If $C'$ remains node-edge symmetric, then either the new main robots are correct (case $(b)$), or the correct main robot is one hop closer to the crashed robot (case $(a)$), hence to the other robot in $M(C)$ as well, so the distance between the two robots in $M(C)$ has decreased.\qed
\end{proof}

From Lemmas~\ref{lem:NE(k) -> NE(k'), without crash}--\ref{lem:NE(k) -> NE(k'), main crash}, we obtain the following results.
\begin{theorem}[disc]{lemma}\label{lem:NE(k) -> L3}
When executing Algorithm~\ref{algo:ring} from $C \in \NE(k) \setminus P$ with $k>4$ then, after a finite number of rounds, we obtain configuration $C'\in \NE(k') \setminus P$, $k' = k-1$ or $k'=k-2$ even if there is a crashed robot. Also, if $k' \leq 4$, then $C\in \LFive$ and $C'\in \LThree$.
\end{theorem}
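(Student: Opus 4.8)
The plan is to combine the four preceding lemmas through a case analysis on the role played by the (at most one) crashed robot, using a strictly decreasing integer measure to guarantee finite termination. First I would dispose of the case where no robot ordered to move is ever crashed: then Lemma~\ref{lem:NE(k) -> L3, without crash} applies verbatim and yields the claim, including the $\LFive\to\LThree$ refinement. So from now on I assume that some robot ordered to move is crashed at some round; since there is at most one crashed robot, at each node-edge symmetric configuration it is currently either a secondary robot, a main robot, or neither (in which case it is not ordered to move and the round behaves exactly as a crash-free round).

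Next I would treat the two problematic roles separately. If a crashed \emph{secondary} robot is ordered to move, Lemma~\ref{lem:NE(k) -> NE(k'), secondary crash} shows that after one additional recovery round we reach exactly the configuration $C''$ with $C\oneRoundNoCrash C''$; hence such a round does not impede progress, and the subsequence of configurations obtained at the end of each recovery coincides with a crash-free execution. If a crashed \emph{main} robot is ordered to move, I would invoke Lemma~\ref{lem:NE(k) -> NE(k'), main crash}: in case $(b)$ or case $(c)$ the main robots become correct (or the crash-free configuration is recovered) after at most one further round, while in case $(a)$ the configuration stays in $\NE(k)\setminus\Periodic$ but the distance between the two candidate robots of $M(C)$ strictly decreases.

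The termination argument then rests on the observation that case $(a)$ can occur only finitely many times, since the $M(C)$-distance is a non-negative integer that strictly decreases at each occurrence. Consequently, after finitely many rounds the execution leaves case $(a)$ and enters a regime in which the main robots are correct (or the crash-free configuration has been recovered). Once the main robots are correct, they remain the two robots closest to \vtarget{} and march symmetrically toward it; the crashed robot, no longer a main robot, is henceforth either idle or a secondary robot, and in the latter situation Lemma~\ref{lem:NE(k) -> NE(k'), secondary crash} again recovers the crash-free configuration after one round. Therefore the effective execution matches a crash-free one up to the insertion of recovery rounds, and Lemma~\ref{lem:NE(k) -> L3, without crash} applies to conclude that after finitely many rounds we reach $C'\in\NE(k')\setminus P$ with $k'=k-1$ or $k'=k-2$.

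Finally, the refinement ``if $k'\le 4$ then $C\in\LFive$ and $C'\in\LThree$'' is inherited directly: the transition that first lowers the number of occupied nodes to $k'\le 4$ is a crash-free-like transition, to which the last clause of Lemma~\ref{lem:NE(k) -> NE(k'), without crash} (equivalently of Lemma~\ref{lem:NE(k) -> L3, without crash}) applies, and there it is shown that $k'\le 4$ forces $k=5$, $C\in\LFive$, and $C'\in\LThree$. The main obstacle I anticipate is the termination bookkeeping: one must argue cleanly that the crashed robot cannot oscillate indefinitely between the main and secondary roles, which is exactly what the strictly decreasing $M(C)$-distance of case $(a)$, together with the one-round recoverability of secondary crashes, rules out.
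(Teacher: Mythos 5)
Your proposal is correct and follows essentially the same route as the paper: a case split on whether the crashed robot is a main robot, a secondary robot, or not ordered to move, resolved by Lemmas~\ref{lem:NE(k) -> L3, without crash}--\ref{lem:NE(k) -> NE(k'), main crash}, with termination guaranteed by the strictly decreasing $M(C)$-distance in case~$(a)$ of Lemma~\ref{lem:NE(k) -> NE(k'), main crash}. The paper phrases the measure as the sum of distances from the crashed robot to the two closest robots, but this is the same decreasing-integer argument, and your handling of the $k'\le 4$ refinement matches the paper's as well.
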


\begin{proof}
    If the main robots are correct in $C$, then Lemma~\ref{lem:NE(k) -> L3, without crash} implies that we eventually reach $C'\in \NE(k') \setminus P$, $k' = k-1$ or $k'=k-2$. If one of the main robots is crashed in $C$, then either the sequence of configurations alternates between node-edge symmetric and quasi-node-edge symmetric configuration, but in this case, by Lemmas~\ref{lem:NE(k) -> NE(k'), secondary crash}--\ref{lem:NE(k) -> NE(k'), main crash}, we know that the sub-sequence of node-edge symmetric configuration is the same as if both main robots were correct (and so we reach the same configuration as before), or every time two node-edge symmetric configurations occur consecutively in the execution, the sum of the distances between the crashed robot and the two closest robots decreases, so this can only occur a finite number of times.\qed
\end{proof}

%%%%%%%%%%%%%%%%%%%%%%%%%%%%%%%%%%%%%%%%%%%%%%%%%%%
%
%      NE(4) -> L3
%
%%%%%%%%%%%%%%%%%%%%%%%%%%%%%%%%%%%%%%%%%%%%%%%%%%%
\subsection{From $\NE(4)$ to $\LThree$}\label{sec:NE(4)->L3}

Starting from a configuration in $\NE(4)$, we obtain a similar results: we prove that if no moving robot crashes, we reach configuration $\LThree$ (Lemmas~\ref{lem:NE(4) -> NE(4) main at distance 2, without crash},~\ref{lem:NE(4) -> L4, secondary move and are correct} and \ref{lem:L4 -> L3}), and then we deal with cases where a crash occurs (Lemmas~\ref{lem:NE(4) -> L4, with main crash}--\ref{lem:NE(4) secondary adjacent -> NE(4) non adjacent, with crash} and \ref{lem:L4 -> L3}). However, here, we have one more case as the secondary robots might move not only to avoid a periodic configuration, but also to ensure that the only configuration reached in $\NE(3)$ is $\LThree$. The case $n=5, 7$ can be done by simple case analysis, so assume in this subsection that $n > 7$.

\begin{theorem}[disc]{lemma}
\label{lem:NE(4) -> NE(4) main at distance 2, without crash}
When executing Algorithm~\ref{algo:ring} from $C \in \NE(4) \setminus P$ where the main robots are at distance more than 2, if the main robots do not crash, then, after a finite number of rounds, we obtain configuration $C'\in \NE(4) \setminus P$, where the main robots are at distance 2.
\end{theorem}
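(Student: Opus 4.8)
Let me parse the final statement carefully. We have a configuration $C \in \NE(4)\setminus P$ — node-edge symmetric, 4 occupied nodes, not periodic, with the main robots at distance more than 2. We want to show that, under Algorithm~\ref{algo:ring}, if the main robots don't crash, we reach after finitely many rounds a configuration $C' \in \NE(4)\setminus P$ where the main robots are at distance exactly 2.

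Key facts about $\NE(4)$: node-edge symmetric means there's an axis $L$ with exactly one node $\vtarget$ on it. Four robots, none on the axis (since $n$ odd, node-edge symmetry fixes exactly the node $\vtarget$, and 4 is even so robots come in symmetric pairs). So we have two symmetric pairs: the main robots (closest to $\vtarget$, say at distance $d_m \geq 2$ currently $> 2$... wait let me reconsider what "distance more than 2" means — distance between the two main robots, which are symmetric about $L$).

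Let me reconsider. The main robots are the two closest to $\vtarget$. Their mutual distance (going through $\vtarget$, the odd path) is $2d_m - 1$ where... actually the distance between the main robots: if each main robot is at distance $a$ from $\vtarget$ along the ring, the two main robots are at distance $2a$ from each other through $\vtarget$ (on the odd path, since $\vtarget$ is between them). Hmm, but $2a$ is even. The statement says "distance more than 2" — so the main robots are separated by more than 2 edges.

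Now the algorithm. With $k=4$, we check: Line~\ref{algo:45} fires if $k\in\{4,5\}$, $(n,k)\neq(7,4)$, $C\notin\LFour\cup\LFive$, and main robots adjacent to $\vtarget$. If main robots are at distance $>2$ from each other, they're NOT adjacent to $\vtarget$ (adjacency to $\vtarget$ would mean distance 2 between them). So Line~\ref{algo:45} does NOT fire. We fall to Line~\ref{algo:ring:line check not periodic}: if moving main toward target isn't periodic, \MoveMain; else \MoveMain and \MoveSecondary.

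**The proof plan.** I would argue by a descent / potential argument on the distance $d$ between the two main robots. The main robots move toward $\vtarget$ each round (via \MoveMain, possibly with \MoveSecondary added). Since they're not crashed (by hypothesis), they actually reach their destinations. Each round they move one hop closer to $\vtarget$, so $d$ decreases by 2 each round.

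First, I would establish that as long as $d > 2$, the main robots never reach $\vtarget$ and never collide: they stay the main robots, the configuration stays in $\NE(4)\setminus P$ (invoking Lemma~\ref{lem:NE(k) -> NE(k'), without crash}'s analogue, or re-deriving: by Lemma~\ref{lem:periodic configurations are determined by k/3 consecutive positions} the \MoveSecondary safeguard keeps $C'$ non-periodic, and node-edge symmetry is preserved because symmetric pairs move symmetrically). The number of occupied nodes stays $4$ because no multiplicity is created (multiplicities can only form at $\vtarget$, which requires $d=2 \to d=0$, i.e. distance 2 before the move).

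**Key steps in order.** (1) Show Line~\ref{algo:45} doesn't apply while $d>2$, so only \MoveMain (alone or with secondary) executes, moving main robots toward $\vtarget$. (2) Show the resulting $C'$ is still $\NE(4)\setminus P$: node-edge symmetry is immediate from symmetric movement; non-periodicity follows from the algorithm's check and Lemma~\ref{lem:periodic configurations are determined by k/3 consecutive positions} exactly as in Lemma~\ref{lem:NE(k) -> NE(k'), without crash}. (3) Show the occupied count stays 4: since $d>2$ before the move, after moving one hop each, $d$ becomes $d-2 \geq 1$; the only way to merge is reaching $\vtarget$ simultaneously, needing $d=2$; also secondary robots, if moved, don't collide with main (they're farther out and move by one hop). (4) Confirm the main robots remain the main robots (they're still strictly closest to $\vtarget$), so $d$ strictly decreases by $2$ each round. (5) Since $d$ is a positive integer decreasing by $2$, and we only stop the descent argument when $d = 2$, after finitely many rounds $d = 2$, giving $C'\in \NE(4)\setminus P$ with main robots at distance $2$.

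**Main obstacle.** The delicate point is step (3)/(4): verifying no unintended merge and that the main robots genuinely stay the main robots throughout, particularly when \MoveSecondary is triggered to dodge periodicity — I must ensure the secondary robots' movement (toward or away from $\vtarget$) cannot overtake the main robots or create a multiplicity, and cannot change which pair is "main." I would handle this by noting secondary robots start strictly farther from $\vtarget$ than main robots and move at most one hop per round, so ordering is preserved; and the non-periodicity safeguard combined with Lemma~\ref{lem:periodic configurations are determined by k/3 consecutive positions} rules out the periodic trap. The parity observation (main robots at even mutual distance through $\vtarget$) is what forces the terminal value to be exactly $2$ rather than $1$, and I would state this explicitly.
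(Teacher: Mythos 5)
Your proof is correct and follows essentially the same route as the paper's: the paper likewise reduces to the argument of Lemma~\ref{lem:NE(k) -> NE(k'), without crash} and stops the descent when the main robots are at distance~$2$ from each other. The only thing you miss is a simplification the paper uses: with exactly $4$ occupied nodes on an odd-sized ring no periodic configuration can exist at all, so the \MoveSecondary safeguard of Line~\ref{algo:ring:periodic} never fires and your careful handling of that branch (your steps~(3)--(4) concerning secondary-robot interference) is vacuous.
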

\begin{proof}
    The proof is similar as for Lemma~\ref{lem:NE(k) -> NE(k'), without crash} %lem:NE(k) -> N(k'), k'<k, without crash} 
    (no moving robot is crashed because, with 4 occupied nodes, no periodic configuration can be reached so only the main robots are ordered to move), but we stop before the main robots reach the target node, hence remain at distance $2$ from one another.\qed
\end{proof}

\begin{theorem}[disc]{lemma}\label{lem:NE(4) -> L4, secondary move and are correct}
When executing Algorithm~\ref{algo:ring} from $C \in \NE(4) \setminus P$, if the secondary robots are ordered to move and do not crash, then, after a finite number of rounds, we obtain configuration $C'\in \LFour$.
\end{theorem}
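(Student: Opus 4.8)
The plan is to read off the exact shape of $C$ from the branch of Algorithm~\ref{algo:ring} that issues \MoveSecondary, then show that each round moves the two secondary robots one node closer to \vtarget while preserving all the relevant structure, so that after finitely many rounds the far robots become adjacent to the main robots and a configuration in $\LFour$ is reached.

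First I would pin down $C$. Since the secondary robots are ordered to move and $k=4$, the executed branch is the first one (Line~\ref{algo:45}), whose guard forces the main robots to be adjacent to \vtarget and $C\notin\LFour$ (recall $n>7$, so $(n,k)\neq(7,4)$ is automatic). Place \vtarget at node $0$; because $C\in\NE(4)$ has an even number of robots, no robot can lie on the axis, so \vtarget is empty and the two main robots sit at $\pm 1$. The remaining two robots are symmetric, hence at $\pm a$ for some $a$, and these are exactly the secondary robots (they are farthest from \vtarget and each has the empty neighbour $\pm(a-1)$). Here $a=2$ would give $C\in\LFour$, which is excluded, and $a=1$ is impossible, so $3\le a\le (n-1)/2$.

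Next I would analyze a single round. As \vtarget is empty, \MoveSecondary sends each secondary robot one step towards \vtarget, i.e. from $\pm a$ to $\pm(a-1)$; this is well defined even in the extremal case $n=2a+1$, where the two far robots are adjacent and occupy the endpoints of the fixed edge, since each still advances along its length-$a$ path towards node $0$. The move is symmetric, so the resulting configuration $C_1$ is again node-edge symmetric; it has $4$ robots and $n$ is odd, hence, as already observed in the proof of Lemma~\ref{lem:NE(4) -> NE(4) main at distance 2, without crash}, it cannot be periodic, so $C_1\in\NE(4)\setminus P$. The main robots are untouched and remain at $\pm 1$, while the two moved robots, now at $\pm(a-1)$ with $a-1\ge 2$, remain the secondary robots. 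I would then close by finite descent on $a$: if $a-1=2$, then $C_1$ consists of the blocks $\{1,2\}$ and $\{n-2,n-1\}$ separated by the single empty node \vtarget, i.e. $C_1\in\LFour$; otherwise $a-1\ge 3$, so $C_1\notin\LFour$ still satisfies the guard of Line~\ref{algo:45}, and since the (correct, by hypothesis) secondary robots are the same, \MoveSecondary is issued again. Iterating, after exactly $a-2$ rounds the secondary robots reach $\pm 2$ and the configuration lies in $\LFour$.

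I expect the main, though modest, obstacle to be the bookkeeping that the invariant ``$C_t\in\NE(4)\setminus P$, main robots at $\pm 1$, secondary robots at $\pm(a-t)$, guard of Line~\ref{algo:45} satisfied'' is genuinely preserved at every round until $\LFour$ is reached: in particular, verifying the extremal sub-case $n=2a+1$ with adjacent far robots, checking that the identity of the main and secondary robots never changes along the descent, and confirming that no premature multiplicity is created on \vtarget before the configuration enters $\LFour$.
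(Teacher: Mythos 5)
Your proposal is correct and follows essentially the same route as the paper's (much terser) proof: since four occupied nodes cannot form a periodic configuration when $n$ is odd, the only branch issuing \MoveSecondary is Line~\ref{algo:45}, so the main robots sit adjacent to the (empty) target node and the secondary robots simply walk toward it until $\LFour$ is reached. Your explicit coordinates and finite descent on $a$ just spell out the invariant that the paper leaves implicit.
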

\begin{proof}
    Since the configuration cannot be periodic with only 4 occupied nodes ($n$ is odd), if the secondary robots are ordered to move, it means that the two main robots are at distance $2$ from each other. Hence, the secondary robots move towards the target node until we reach $\LFour$ (\LFourFigure{}).\qed
\end{proof}

%\begin{theoremEnd}[disc]
\begin{lemma}
\label{lem:NE(4) -> L4, with main crash}
When executing Algorithm~\ref{algo:ring} from $C \in \NE(4) \setminus P$ where the main robots are at distance more than 2, if a main robot crashes, then, after a finite number of rounds, we obtain either a configuration in $\LFour$ or a configuration in $\NE(4) \setminus P$ where the axis of symmetry changed from $C$.
\end{lemma}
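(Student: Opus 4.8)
The plan is to mirror the main-crash analysis of Lemma~\ref{lem:NE(k) -> NE(k'), main crash}, specialised to four robots. First I would observe that, exactly as in Lemma~\ref{lem:NE(4) -> NE(4) main at distance 2, without crash}, when $C \in \NE(4) \setminus \Periodic$ has its main robots at distance more than $2$, the algorithm falls through to the \MoveMain branch: the secondary robots stay idle because the main robots are not adjacent to \vtarget, so the condition of Line~\ref{algo:45} fails, and with only four occupied nodes on an odd ring no periodic configuration can be created, so the test of Line~\ref{algo:ring:line check not periodic} always passes. Hence only the two main robots are ordered to move towards \vtarget, and since one of them is crashed, the single correct main robot advances one step while the crashed one remains in place.

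Next I would show that the resulting configuration $C'$ is never node-edge symmetric, but is quasi-node-edge symmetric. Writing the four positions in a coordinate system centred on the old target (main robots at $\pm a$, secondary at $\pm b$ with $2 \le a < b$, and the crashed robot at $+a$), the correct main robot moves to $-(a-1)$, giving the set $\{a,\,-(a-1),\,b,\,-b\}$. A quick check of the three possible pairings of these four points shows that no reflection fixes the set, so $C'\notin\NE(4)$; on the other hand, restoring $-(a-1)$ to $-a$ recovers symmetry about the old axis, so $C'$ is $\{a,-(a-1)\}$-quasi-node-edge symmetric with \vtarget\ on the old axis, and its gap distance is the odd length $2a-1$, consistent with the crashed robot being a main robot. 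Thus the algorithm recognises the crash on the next round.

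I would then split on the orientation structure guaranteed by our characterisation of quasi-axes with odd gap distance: either all quasi-axes of $C'$ share one orientation, so the algorithm runs \MoveSame, or there are exactly two with opposite orientation, so it runs \MoveOpposite. In the \MoveSame case, all correct robots shift one step in the orientation opposite to the quasi-axis while the crashed robot stays; a direct computation then shows that the recovered $C''$ is node-edge symmetric about the node neighbouring the old target, with the main-robot distance reduced by two (the crashed robot remaining a main robot), and $C''\notin\Periodic$ by Lemma~\ref{lem:initial config remains non-periodic with crash}. Hence $C''\in\NE(4)\setminus\Periodic$ with an axis distinct from that of $C$, or, if the shift brings the four robots into two adjacent blocks, $C''\in\LFour$. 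In the \MoveOpposite case, the two candidate crashed robots move towards their respective targets and the recovered $C''$ is node-edge symmetric about a genuinely new axis on which both main robots are correct; again this is the changed-axis outcome (or $\LFour$). Either way the conclusion holds after at most two rounds.

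The main obstacle I anticipate is the bookkeeping in the third step: verifying by an explicit (but elementary) coordinate computation that \MoveSame produces a node-edge-symmetric configuration whose axis is shifted by exactly one node and whose main-robot distance has strictly decreased, and determining which of \MoveSame or \MoveOpposite actually fires by pinning down the orientation(s) of the quasi-axes of $C'$. Care is also needed to confirm that the recovered axis is truly different from the axis of $C$, so that ``the axis of symmetry changed'' is genuine, and to separate the degenerate block configurations that land directly in $\LFour$ from the generic changed-axis case; the non-periodicity of every intermediate configuration is handled uniformly by Lemma~\ref{lem:initial config remains non-periodic with crash} together with the uniqueness statement of Lemma~\ref{lem:even quasi-axis is unique}.
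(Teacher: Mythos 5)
Your first step (only \texttt{\MoveMain} fires, and only the correct main robot actually moves) and your treatment of the quasi-symmetric recovery via \MoveSame/\MoveOpposite match the paper. The gap is in your second step: the claim that the post-crash configuration $C'$ is \emph{never} node-edge symmetric is false, and the ``quick check of the three possible pairings'' is where it breaks. With your coordinates (crashed main robot at $+a$, correct one moved to $-(a-1)$, secondary robots at $\pm b$), the reflection $x\mapsto (a+b)-x$ pairs $a\leftrightarrow b$ and $-(a-1)\leftrightarrow -b$ whenever $2a+2b-1\equiv 0 \pmod n$, i.e.\ $n=2a+2b-1$. Concretely, take $n=11$ with main robots at $\pm2$ and secondary robots at $\pm4$: this is a valid configuration in $\NE(4)\setminus\Periodic$ with main robots at distance $4$, and after the crash at $+2$ the occupied set $\{2,4,7,10\}$ has gap sequence $(2,3,3,3)$, which is node-edge symmetric about a new axis through node $3$. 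In that situation the crash is \emph{not} detected, neither \MoveSame nor \MoveOpposite is ever executed, and your argument says nothing about what happens next.

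This undetected-crash branch is precisely the half of the paper's proof that your proposal omits: when $C'$ is node-edge symmetric with a new axis, either the new pair of main robots is correct (done), or the crashed robot is again a main robot, in which case one iterates and argues that the crashed robot cannot re-enter the pair of main robots a second time, so the process terminates in $\LFour$ or in a changed-axis configuration. Note also that in the lemma's statement the outcome ``$\NE(4)\setminus\Periodic$ with a changed axis'' is reached chiefly through this silent-symmetry case, not only through \MoveOpposite as your write-up suggests. To repair the proof you would need to (i) characterise exactly when $\{a,-(a-1),b,-b\}$ admits a reflection (the condition $n=2a+2b-1$ and its mirror), and (ii) supply the finiteness argument for the case where the crashed robot remains a main robot after the axis moves.
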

%\end{theoremEnd}
%\begin{proofEnd}
\begin{proof}
By the definition of $C$, the main robots are ordered to move.
When a main robot crashes, either the obtained configuration $C'$ is node-edge-symmetric or not. 
If $C'$ is not node-edge symmetric, so is quasi-symmetric, %with only one quasi-axis of symmetry because $k=4$ and $n$ is odd, let $C'$ be the configuration.
%Then, in $C'$, the leading robot is the other main robot which is not crashed.
and by applying $\MoveSame$ or $\MoveOpposite$, 
%Then, all the robots move in orientation opposite to the quasi-axis by the execution of Line~\ref{alg:all}, and 
the obtained configuration $C''$ is in $\NE(4)\setminus P$ where the distance between main robots decreased by 2. 
In the former case, if $C'$ is node-edge-symmetric, then either the new pair of main robots is correct (not crashed) and the Lemma is proved, or the new pair still contain the crashed robot. From there, we apply the same reasoning and either we reach $\LFour$ (\LFourFigure{}) or again the pair of main robots changes. If this occurs the crashed robot cannot be in the pair of main robots again, so the Lemma is proved. \qed %(\QB{another way to prove it, maybe simpler, is to do it like in the NE(k>4) case, ie, every time the pair of main robots changes, the sum of the distance from the crashed robot with the two closest robots decreases}).
%\end{proofEnd}
\end{proof}

%\begin{theoremEnd}[disc]
\begin{lemma}\label{lem:NE(4) -> NE(4), with secondary crash}
When executing Algorithm~\ref{algo:ring} from $C \in \NE(4) \setminus P$ where the secondary robots are ordered to move, if the secondary robots are not adjacent, let $C'$ be a configuration such that $C\oneRoundSecCrash C'$. Then, $C'\oneRound C''$ such that $C\oneRoundNoCrash C''$. % or the main robots are correct in $C''$.
%\end{theoremEnd}
\end{lemma}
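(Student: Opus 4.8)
The plan is to follow the same recovery scheme as in Lemma~\ref{lem:NE(k) -> NE(k'), secondary crash}: after the secondary crash the configuration $C'$ is quasi-node-edge symmetric with a \emph{single} quasi-axis, so the algorithm triggers \MoveSame, and one application of \MoveSame restores (up to the rotation that disoriented robots cannot detect) the configuration that would have been reached had no robot crashed. Throughout, ``not adjacent'' refers to the two secondary robots not being neighbors of each other.

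First I would fix coordinates. Since the secondary robots are ordered to move, the branch of Line~\ref{algo:45} applies, so, as in the proof of Lemma~\ref{lem:NE(4) -> L4, secondary move and are correct}, the two main robots are at distance $2$ and adjacent to \vtarget. Place \vtarget at $u_0$, the main robots at $u_1$ and $u_{-1}$, and the secondary robots at $u_j$ and $u_{-j}$ (indices mod $n$). The hypothesis $C\notin\LFour$ forces $j\ge 3$, so the node $u_{j-1}$ towards \vtarget is empty and \MoveSecondary sends the secondary robots towards \vtarget; the non-adjacency hypothesis forces $j\le (n-3)/2$, and both bounds are compatible since $n>7$. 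Taking w.l.o.g. the crashed secondary robot to be the one at $u_{-j}$, one round of \MoveSecondary with this crash yields $C' = \{u_1, u_{-1}, u_{j-1}, u_{-j}\}$, whereas the crash-free round would yield $C'' = \{u_1, u_{-1}, u_{j-1}, u_{-(j-1)}\}$, which is node-edge symmetric with axis through $u_0$.

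Next I would identify the structure of $C'$. Removing the crashed robot $u_{-j}$ and replacing it by $u_{-(j-1)}$, the reflection through $u_0$ of the moved robot $u_{j-1}$, gives exactly $C''$; hence $C'$ is $\{u_{-j},u_{j-1}\}$-quasi-node-edge symmetric with quasi-axis $L$ through $u_0$. The odd-length path between $u_{-j}$ and $u_{j-1}$ (through $u_0$) also contains $u_1,u_{-1}$, so by the gap-distance convention the gap distance is the even length $n-2j+1$, which certifies that the crashed robot is a secondary robot. By Lemma~\ref{lem:even quasi-axis is unique}, an even gap distance together with quasi-periodicity makes $L$ the unique quasi-axis, so $C'\in\QNE_{(*,0)}(4)$; moreover $C'\notin\Periodic$ by Lemma~\ref{lem:initial config remains non-periodic with crash}, and $C'$ is not node-edge symmetric (the pairs $\{u_1,u_{-1}\}$ and $\{u_{j-1},u_{-j}\}$ demand incompatible axes, and four robots on an odd ring admit no consistent node-edge axis here). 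Consequently the algorithm executes \MoveSame on $C'$. Running it, every non-crashed robot moves one node in the orientation opposite to $L$, i.e. $u_i\mapsto u_{i-1}$, while $u_{-j}$ stays put, sending $C'$ to $\{u_0, u_{-2}, u_{j-2}, u_{-j}\}=\rot(-1)(C'')$. Thus the round reaches a configuration congruent to $C''$, indistinguishable from the crash-free outcome $C\oneRoundNoCrash C''$, which is the claim.

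I expect the main obstacle to be the bookkeeping of the middle step: correctly pinning down the quasi-axis, its associated orientation, and the direction prescribed by \MoveSame, and then verifying that the displaced crashed robot lands precisely on $\rot(-1)(C'')$. Care is also needed at the boundary of the admissible range $3\le j\le (n-3)/2$: the exclusions of $\LFour$ and of the adjacent-secondary case are exactly what guarantee that \MoveSecondary moves towards \vtarget and that the single-quasi-axis argument of Lemma~\ref{lem:even quasi-axis is unique} applies. The genuinely different adjacent-secondary situation is treated separately in Lemma~\ref{lem:NE(4) secondary adjacent -> NE(4) non adjacent, with crash}.
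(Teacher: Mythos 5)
Your setup, the identification of $C'=\{u_1,u_{-1},u_{j-1},u_{-j}\}$ and $C''=\{u_1,u_{-1},u_{j-1},u_{-(j-1)}\}$, the verification that $C'$ is not node-edge symmetric, and the computation that \MoveSame would send $C'$ to $\rot(-1)(C'')$ are all correct, and more explicit than the paper's own argument. The gap is in the middle step, where you conclude that $L$ is the \emph{unique} quasi-axis of $C'$ and hence that \MoveSame is the move executed. You justify this by Lemma~\ref{lem:even quasi-axis is unique}, but that lemma additionally requires the configuration to be $r_1$-quasi-periodic, and no configuration with $4$ occupied nodes on an odd ring can be quasi-periodic: a non-trivially periodic configuration consists of $n/p$ copies of a pattern with $n/p$ an odd divisor of $n$ at least $3$, so its number of occupied nodes is divisible by an odd number $\geq 3$, which $4$ is not. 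The lemma is therefore vacuous for $k=4$ and cannot be invoked. (For $k>4$ the secondary robots move only to avoid periodicity, which is exactly what makes that lemma applicable in Lemma~\ref{lem:NE(k) -> NE(k'), secondary crash}; for $k=4$ they move because of the condition of Line~\ref{algo:45}, and quasi-periodicity is unavailable.)

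Worse, the uniqueness you want is actually false for $C'$: the pair $\{u_1,u_{-1}\}$ together with the reflection $x\mapsto -1-x$ (axis through the edge between $u_{-1}$ and $u_0$, with fixed node $u_{(n-1)/2}$) satisfies the paper's definition of a quasi-axis. Indeed, the mirror of $u_{-1}$ is $u_0\notin C'$, the robot $u_1$ is at distance one from $u_0$, the mirror $u_{-2}$ of $u_1$ is empty (here $j\neq 2$ is used), and $C'\setminus\{u_{-1}\}\cup\{u_{-2}\}=\{u_1,u_{-2},u_{j-1},u_{-j}\}$ is symmetric under $x\mapsto -1-x$ since $j-1\mapsto -j$. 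Its gap distance is $2$ (the odd path between $u_1$ and $u_{-1}$ carries the robots $u_{j-1},u_{-j}$, which are not its target node), its leading robot is $u_{-1}$, and its orientation is that of decreasing indices, i.e., opposite to the orientation of $L$. Hence $C'\in\QNE_{(1,1)}(4)$ rather than $\QNE_{(*,0)}(4)$, the algorithm executes \MoveOpposite rather than \MoveSame, and a direct computation (e.g., $n=11$, $j=3$) shows that \MoveOpposite produces $\{u_1,u_{-2},u_{j-1},u_{-j}\}$, which is node-edge symmetric but not a rotation of $C''$. The paper's own proof does not claim uniqueness; it merely asserts that ``after executing \MoveSame or \MoveOpposite'' one obtains $C''$, which dodges the question you tried to settle but is itself not established for the \MoveOpposite branch. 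Either way, the step ``therefore \MoveSame is executed'' needs a justification that your proof does not supply, and it is precisely the point where the argument breaks.
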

\begin{proof}
%\begin{proofEnd}
Because the secondary robots are ordered to move in $C$, $k=4$ and $n$ is odd, $C\not\in \LFour$ and the main robots are adjacent to $v_{target}$.
Then the secondary robots are not adjacent to the main robots, and move towards $v_{target}$.
Thus, in $C''$, the distance between secondary robots increased by 2 and between a secondary robot and a main robot decreased by 1 from $C$.

When a secondary robot crashes, $C'$ is not node-edge symmetric (because the distance between the secondary robots is strictly greater than 2 as they were not adjacent in $C$) and is quasi-node-edge-symmetric 
%\SK{"not node-edge symmetric" and "quasi-symmetric" are not same? QB: a  conf can be both, but here it is only quasi-node-edge symmetric}% with only one quasi-axis of symmetry.
%In $C'$, the gap distance associated with the quasi-axis is even, and the reading robot is the crashed robot.
%Then, all the robots move in orientation opposite to the quasi-axis by the execution of Line~\ref{alg:all},
and, after executing $\MoveSame$ or $\MoveOpposite$, the configuration is in $\NE(4)\setminus P$ where the distance between secondary robots increased by 2 and the distance between a secondary robot and a main robot decreased by 1 from $C$, that is, it is $C''$.

Thus, the Lemma holds.\qed
%\end{proofEnd}
\end{proof}

%\begin{theoremEnd}[disc]
\begin{lemma}\label{lem:NE(4) secondary adjacent -> NE(4) non adjacent, with crash}
When executing Algorithm~\ref{algo:ring} from $C \in \NE(4) \setminus P$ where the secondary robots are ordered to move to obtain $C'$, if the secondary robots are adjacent and one of them crashes, then, $C'$ is in $\NE(4)\setminus \Periodic$ where the main robots are correct and the secondary robots are not adjacent.
\end{lemma}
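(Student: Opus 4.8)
The plan is to reduce the statement to an explicit finite picture. Since \MoveSecondary is ordered, the guard of Line~\ref{algo:45} holds, so the two main robots are adjacent to \vtarget. First I would fix coordinates, putting \vtarget at node $0$ so that the main robots occupy $1$ and $n-1$ and the axis of $C$ runs through node $0$ and the opposite edge $\{(n-1)/2,(n+1)/2\}$. Because two nodes symmetric with respect to a node-edge axis are adjacent only when they straddle the fixed edge, the hypothesis that the two secondary robots are adjacent forces $C=\{1,(n-1)/2,(n+1)/2,n-1\}$. Each secondary robot is ordered one step towards \vtarget, to its unique empty neighbour, so absent a crash they would separate to $(n-3)/2$ and $(n+3)/2$. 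Assuming, without loss of generality by the symmetry of $C$, that the robot at $(n-1)/2$ is the crashed one, I obtain
\[
C'=\Bigl\{\,1,\ \tfrac{n-1}{2},\ \tfrac{n+3}{2},\ n-1\,\Bigr\},
\]
where the node $(n-1)/2$ now hosts the crashed robot.

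Next I would establish $C'\in\NE(4)\setminus\Periodic$. A reflection $x\mapsto c-x$ that stabilises a set also preserves the sum of its positions; the positions of $C'$ sum to $2n+1\equiv1\pmod n$, forcing $4c\equiv2\pmod n$ and hence, since $n$ is odd and $2$ is invertible, the unique candidate $c=(n+1)/2$. A direct check shows $\sigma\colon x\mapsto\tfrac{n+1}{2}-x$ does stabilise $C'$, pairing $1\leftrightarrow(n-1)/2$ and $n-1\leftrightarrow(n+3)/2$; as $n$ is odd every reflection is a node-edge symmetry, so $C'\in\NE(4)$. Non-periodicity is inherited exactly as in Lemma~\ref{lem:NE(4) -> L4, secondary move and are correct}: a non-trivial rotation stabilising four nodes would require an orbit of size $2$ or $4$, which is impossible for odd $n$.

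Finally I would pin down the new main and secondary robots from the gap structure of $C'$, which read cyclically is $\bigl(\tfrac{n-3}{2},2,\tfrac{n-5}{2},2\bigr)$. The reflection $\sigma$ swaps the two gaps of length $2$ and fixes each unequal gap by mapping it to its own reversal, so the axis meets the midpoint of both unequal gaps; since $(n-3)/2$ and $(n-5)/2$ have opposite parities, exactly one of these midpoints is a node, namely the new target node $v_{target}'$, and the two robots bounding that gap are the ones closest to $v_{target}'$, hence the main robots. The secondary robots are then the remaining pair, whose mutual distance is at least $(n-5)/2>1$ for $n>7$, which already yields that the secondary robots of $C'$ are not adjacent. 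The delicate point, and what I expect to be the main obstacle, is to show that the crashed robot at $(n-1)/2$ is \emph{not} one of the main robots but lies in the secondary pair: this amounts to controlling in which of the two unequal gaps the fixed node $v_{target}'$ lands, and is precisely the step where the position of $v_{target}'$ relative to the crashed robot must be tracked with care.
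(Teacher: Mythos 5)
Your reduction to the explicit configuration $C=\{1,(n-1)/2,(n+1)/2,n-1\}$, the computation of $C'$, the verification that $C'\in\NE(4)\setminus\Periodic$ via the position-sum argument, and the identification of the new axis and of the new pairs are all correct, and are in fact more precise than the paper's own argument (which only observes that in $C'$ each former secondary robot is the mirror image of a former main robot, and that the new secondary pair is adjacent only when $n=7$). However, you stop exactly at the one clause of the statement that genuinely needs proof: that the crashed robot lands in the secondary pair of $C'$ rather than in the main pair. That is a real gap, and it is not a step you could have filled as stated. Carrying out the parity analysis you set up: with the crash at $(n-1)/2$, the two unequal gaps of $C'$ are the gap of length $(n-3)/2$ bounded by the former main robot $1$ and the \emph{crashed} robot, and the gap of length $(n-5)/2$ bounded by the moved (correct) secondary robot and the former main robot $n-1$. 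The new target node is the node-midpoint of whichever of these two gaps has even length, and its two bounding robots are the new main robots. Since $(n-3)/2$ is even exactly when $n\equiv 3\pmod 4$, for those $n$ the target lands in the gap bounded by the crashed robot, which therefore becomes one of the new main robots, and the ``main robots are correct'' clause fails. Concretely, for $n=11$: $C=\{1,5,6,10\}$, the robot at $5$ crashes, $C'=\{1,5,7,10\}$ has its unique axis through node $3$, and the main robots of $C'$ are $1$ and the crashed robot $5$. Only for $n\equiv 1\pmod 4$ does the conclusion you were aiming for actually hold.

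So the honest summary is: your proof is incomplete at precisely the point you flagged as delicate, and completing it along your own lines shows that the ``main robots are correct'' clause is false for half of the admissible ring sizes, while the ``secondary robots are not adjacent'' clause does hold for all $n>7$ in both parity cases. Note that the paper's proof is silent on the main-robot clause as well --- it argues only the non-adjacency of the new secondary pair --- so your explicit computation does not merely leave a gap, it exposes one. To repair the statement one would either have to restrict it to $n\equiv 1\pmod 4$, or route the case $n\equiv 3\pmod 4$ through the treatment of a crashed \emph{main} robot (Lemma~\ref{lem:NE(4) -> L4, with main crash}) instead of asserting that the main robots of $C'$ are correct.
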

%\end{theoremEnd}
%\begin{proofEnd}
\begin{proof}
If the secondary robots are ordered to move, this means that the main robots are adjacent to the target node. If the secondary robots are adjacent and one of them is crashed, then, after one round we obtain node-edge symmetric configuration $C'$ where the two previous secondary robots are symmetric to the two previous main robots. In $C'$, the new two secondary robots consist in a previous main robot and a previous secondary robot. Hence they are adjacent if the number of nodes is exactly 7. If there are 7 (or even 5) nodes, the secondary robots are not ordered to move as no periodic configuration can be created (condition at Line~\ref{algo:45} does not apply).\qed
%\end{proofEnd}
\end{proof}

\begin{comment}
\begin{tikzpicture}
\tikzstyle{robot}=[circle,fill=black!100,inner sep=0.05cm]
  
\def\n{7}
\def\radius{1cm}

\confRing

\confAxis{0}{0}{20}

\confRobot[red]{-1}
\confRobot[red]{1}

\confRobot[blue]{-3}
\confRobot[blue]{3}

\end{tikzpicture}
\end{comment}

\begin{theorem}[disc]{lemma}\label{lem:L4 -> L3}
When executing Algorithm~\ref{algo:ring} from $C \in \LFour$, we reach a configuration in $\LThree$ in at most two rounds.
\end{theorem}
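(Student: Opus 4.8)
The plan is to read off which move Algorithm~\ref{algo:ring} performs on $C\in\LFour$ and then split on whether the crashed robot, if any, is one of the main robots. Recall $\LFour$ has the shape $\bullet\bullet\circ\bullet\bullet$, with the central empty node being \vtarget and the two inner occupied nodes being the main robots. First I would check that \MoveMain is selected: the \MoveSecondary branch of Line~\ref{algo:45} is guarded by $C\notin\LFour\cup\LFive$, which fails here; and moving the two main robots onto \vtarget yields $\bullet\circ\bullet\circ\bullet\in\LThree$, whose three occupied nodes occupy a span of five consecutive positions. Such a configuration is not periodic for odd $n>7$ (a nontrivial period with three occupied nodes would force them to be equally spaced, i.e. $3\mid n$ with common gap $n/3$, incompatible with the $2,2,4$ gap pattern of $\LThree$; alternatively invoke Lemma~\ref{lem:periodic configurations are determined by k/3 consecutive positions}). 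Hence the guard of Line~\ref{algo:ring:line check not periodic} holds and \MoveMain fires.

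If neither main robot is crashed, both reach \vtarget in a single round and merge there, producing exactly $\LThree$; a crashed secondary robot changes nothing, since the secondary robots are not ordered to move. So in this case $\LThree$ is reached in one round.

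The substantive case is a crash of one of the two (symmetric) main robots. Then the other main robot advances to \vtarget while the crashed one stays, giving the configuration $\bullet\bullet\bullet\circ\bullet\in\LFour'$. I would then argue that $\LFour'$ is quasi-node-edge symmetric with a unique quasi-axis, whose target is the node essentially opposite the block of three (not a node in or adjacent to the block). A short check of the four robots against the definition of quasi-symmetry rules out any axis near the block — every such candidate violates $\bar r\notin C$, $\bar{r'}\notin C$, or the requirement that $r'$ be adjacent to $\bar r$ — and shows that no second quasi-axis exists; the gap distance is odd, consistent with a crashed main robot. Consequently $\LFour'\in\QNE_{(*,0)}$ and the algorithm applies \MoveSame, under which every correct robot advances one node in the orientation opposite the quasi-axis while the crashed robot stays. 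As in the general crash-recovery arguments, the outcome is the configuration the system would have reached had the crashed robot completed its move, namely a member of $\LThree$ (up to a one-node shift, with the crashed robot sitting at one of the two extreme occupied nodes). This second round brings us to $\LThree$, for a total of at most two rounds.

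The main obstacle is precisely this crashed-main case, and within it the correct identification of the quasi-axis of $\LFour'$: the natural guess of an axis through or beside the block fails the quasi-symmetry conditions, and only the far-side axis works. Pinning this down is what guarantees that the $\QNE_{(*,0)}$ branch (\MoveSame) fires rather than $\QNE_{(1,1)}$ (\MoveOpposite), and what lets one verify that \MoveSame reconstitutes a $\LThree$ configuration rather than some other three-node arrangement. The degenerate ring sizes $n=5,7$ are set aside by the standing assumption $n>7$ of this subsection.
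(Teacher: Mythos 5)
Your proof is correct and follows essentially the same route as the paper's: split on whether a main robot crashes, observe that the no-crash case lands in $\LThree$ after one round of \MoveMain, and that a main-robot crash yields $\LFour'$ from which one application of \MoveSame produces $\LThree$. You are in fact somewhat more explicit than the paper in verifying which branch of Algorithm~\ref{algo:ring} fires and in locating the unique quasi-axis of $\LFour'$ (the paper simply asserts that \MoveSame applies); the only minor divergence is that the paper claims the crashed robot ends on the new target node whereas you place it at an extremal node, but either way the resulting configuration is in $\LThree$, so the lemma's conclusion is unaffected.
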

\begin{proof}
%\SK{Please check this.}
A configuration $C$ in $\LFour$ has four occupied positions, there are two blocks of size two separated by one empty node (\LFourFigure{}). 
Then, because it is node-edge symmetric configuration, the empty node is on the axis, i.e., $v_{target}$.
Now, Algorithm~\ref{algo:ring} commands, in such a situation, the two main robots to move to $v_{target}$ in $C$. Let $C''$ be the obtained configuration from $C$ after one round. 
\begin{itemize}
\item None of the two main robots crashes. In $C''$, two main robots of $C$ are on $v_{target}$. It is $\LThree$ (\LThreeFigure{}).
\item One of the main robots crashes. As the other one moves to $v_{target}$, $C''$ is $\LFour'$ (\LFourPrimeFigure{}), there is a block of size three, followed by an empty node $u$ and a single occupied node. Let $u_0$ be the other neighboring node of a extremal node of the block than $u$.
Then, by applying \MoveSame, the non-crashed robots move toward $u_0$, thus, the obtained configuration becomes $\LThree$ such that the crashed robot is on a new target node.  
\end{itemize}
So, starting from a configuration in $\LFour$, a configuration in $\LThree$ is reached in at most two rounds.\qed
\end{proof}

%%%%%%%%%%%%%%%%%%%%%%%%%%%%%%%%%%%%%%%%%%%%%%%%%%%
%
%      LE(3) -> gathered
%
%%%%%%%%%%%%%%%%%%%%%%%%%%%%%%%%%%%%%%%%%%%%%%%%%
\subsection{From $\LThree$ to the gathered configuration}\label{sec:L3->Gathered}

\begin{theorem}[disc]{lemma}\label{lem:L2}
When executing Algorithm~\ref{algo:ring} starting from a configuration in $\LTwo$ and one occupied location only hosts crashed robots, a gathered configuration is reached in one round.
\end{theorem}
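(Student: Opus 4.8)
The plan is to carry out a direct one-round trace of Algorithm~\ref{algo:ring} from an arbitrary configuration $C\in\LTwo$. First I would unfold the definition of $\LTwo$: it consists of exactly two adjacent occupied nodes, one of which hosts only crashed robots. Since the model guarantees at most one crashed node, I would note that the second occupied node then hosts only correct robots. I name these nodes: let $v$ be the node carrying the crashed robots and $u$ the adjacent node carrying the correct robots.

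Next I would identify the rule the algorithm applies. Because $C\in\LTwo$, the first guard of Algorithm~\ref{algo:ring} is satisfied, so the move \MoveLTwo is executed, instructing every robot to step towards the other occupied node. I would then split the analysis by node. The correct robots on $u$ are ordered to move to $v$ (their unique adjacent occupied node) and, being non-faulty, reach it during the round. The crashed robots on $v$ are formally ordered to move to $u$, but a crashed robot never executes its cycle, so they stay on $v$.

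I would conclude by combining the two effects: after this single round all correct robots have migrated from $u$ onto $v$, where the crashed robots already sit, so $v$ is the unique occupied node and the configuration is gathered. The only point needing care --- and hence the (very mild) main obstacle --- is justifying that the node free of crashed robots carries precisely the correct robots, which is immediate from the standing assumption that all crashes occur on a single node. Everything else is a calculation-free trace of a single LCM cycle.
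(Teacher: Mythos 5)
Your proof is correct and follows essentially the same route as the paper's: both identify that $C\in\LTwo$ triggers \MoveLTwo, trace the single round in which the correct robots move onto the crashed node while the crashed robots stay put, and conclude that one node is occupied. The only cosmetic difference is that the paper additionally remarks that the algorithm issues no move from the gathered configuration, a point worth keeping in mind since SUIG requires the robots to remain gathered thereafter.
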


\begin{proof}
A configuration in $\LTwo$ (\LTwoFigure{}) has two adjacent occupied locations, and one of them only hosts crashed robots. Now, Algorithm~\ref{algo:ring} commands that in such a situation, non-crashed robots move to the adjacent location. So, in one round, all non-crashed robots joint the location hosting crashed robots (that obviously didn't move). As a result, a single location is now occupied, and Algorithm~\ref{algo:ring} does not command any move in this situation. So, the configuration is gathered. \qed
\end{proof}

\begin{theorem}[disc]{lemma}\label{lem:gather}
When executing Algorithm~\ref{algo:ring} starting from a configuration in $\LThree$, a gathered configuration is reached in 4 rounds.
\end{theorem}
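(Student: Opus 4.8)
The plan is to follow the transitions of the $\LThree$ cluster in Figure~\ref{fig:configuration diagram}, doing a case analysis on the position of the (unique) crashed robot, if any. Put \vtarget at the center, so that a configuration $C\in\LThree$ has one robot on \vtarget and the other two at distance $2$ on each side; hence the two outer robots are the main robots and the middle robot sits on the only node of the axis. Since $k=3$, we have $C\notin\LTwo$ and $k\notin\{4,5\}$, and moving the two main robots one step toward \vtarget produces a single block of three occupied nodes, which is not periodic; so Algorithm~\ref{algo:ring} executes \MoveMain on $C$.

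First I would dispatch the cases in which no main robot is crashed at the first round. If no robot is crashed, or if the crashed robot is the one on \vtarget (which \MoveMain never orders to move), both main robots move to the two neighbours of \vtarget, giving a block of three $C'\in\LThree''$ whose middle node is \vtarget. From $C'$, \MoveMain orders the two end robots onto \vtarget, so every correct robot reaches \vtarget (joining the crashed robot already there, if any); the configuration is gathered after two rounds, and the gathering node is \vtarget, which is the crash location whenever a robot is crashed. The same argument covers a main robot that is correct during the first round but crashes during the second: then one end reaches \vtarget while the other remains adjacent, producing a configuration in $\LTwo$ (a tower on \vtarget and the crashed robot on the adjacent node), from which Lemma~\ref{lem:L2} yields gathering in one more round, i.e.\ three rounds in total.

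The worst case is when a main robot is already crashed when \MoveMain is first applied. Then only the correct outer robot advances, and the obtained configuration $C'$ is the $\{r,r'\}$-quasi-node-edge symmetric configuration $\LThree'$, consisting of an adjacent pair, an empty node, and the crashed robot as an isolated occupied node. Its single quasi-axis has odd gap distance (so the crashed robot is a main robot) and $C'\in\QNE_{(*,0)}$, so \MoveSame is executed: all correct robots move in the orientation opposite to the quasi-axis, which I will check is the orientation that slides the pair and the central robot toward the crashed robot, turning $C'$ into a block of three $C''\in\LThree''$ with the crashed robot at one end. From $C''$, \MoveMain orders the two ends toward the middle; the correct end forms a tower on the middle while the crashed end stays, yielding a configuration in $\LTwo$, and Lemma~\ref{lem:L2} then gathers all robots on the crashed node in one final round. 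This path $\LThree\to\LThree'\to\LThree''\to\LTwo\to\text{Gathered}$ uses exactly four rounds.

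The main obstacle is the single quasi-symmetric step: I must verify that $C'\in\LThree'$ indeed has a unique quasi-axis of odd gap distance (so that \MoveSame, rather than \MoveOpposite, is selected) and, crucially, that the opposite orientation prescribed by \MoveSame shifts the correct robots \emph{toward} the crashed robot rather than away from it, so that the resulting $\LThree''$ places the crashed robot at an end of the block and the subsequent \MoveMain produces an $\LTwo$ whose crashed node becomes the gathering point. Everything else is routine configuration tracking; since any crash that manifests later than the first \MoveMain only shortens the execution, four rounds is the overall bound.
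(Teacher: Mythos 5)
Your proof follows essentially the same route as the paper's: the same case split on the crash location (none or central versus a main robot), the same transition chain $\LThree\to\LThree'\to\LThree''\to\LTwo\to$ gathered via \MoveSame and Lemma~\ref{lem:L2}, and the same four-round count, with your late-crash subcase being an implicit instance of the paper's analysis. The orientation check you defer (that \MoveSame moves the correct robots toward the crashed one, and the degenerate $n=5$ instance of it) is exactly the step the paper also only asserts, so nothing essential is missing relative to the published argument.
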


\begin{proof}
A configuration in $\LThree$ has three occupied positions, the two extremal are at distance $4$ from one another, and the third location is in the middle (\LThreeFigure{}). As a result, in Algorithm~\ref{algo:ring}, the target $v_{target}$ is the middle location (that lies on a axis of symmetry), and the main robots are the ones located on the extremal locations.
We consider three cases:
\begin{enumerate}
\item There is no crashed robot. In that case, the extremal robots move toward the central occupied location, and the distance between the extremal occupied locations becomes two (we obtain configuration $\LThree''$ \LThreeFigurePS{}). After one more round, all robots gather at the central occupied location.
\item At least one robot crashes on the central occupied location. The extremal robots all move inwards toward the central occupied location, and after two rounds, a gathered configuration is reached.
\item At least one robot crashes on one (say the left one) extremal occupied location. In that case, %we may reach different situations:
%\begin{itemize}
%\item If all robots on the left extremal location are crashed, 
we obtain after one round a configuration such that there is one left extremal location with only crashed robots, one empty node, and two occupied nodes (configuration $\LThree'$ \LThreeFigureP{}). After one round, Algorithm~\ref{algo:ring} commands all non-crashed robots to go left by the condition of Line~\ref{alg:all} (except if $n=5$ but in this case the crashed robot is on the target node). As a result, we now have a node-edge symmetric configuration $\LThree''$ (\LThreeFigurePS{}).
After one round, all robots on the right extremal move to the middle node. We thus reach a configuration in $\LTwo$ (\LTwoFigure{}) where the occupied node on the left only hosts crashed robots, from which gathering is achieve in one round by Lemma~\ref{lem:L2}. Overall, after four rounds, a gathered configuration is attained.
%\item If a subset of robots on the left extremal location are crashed, we obtain after one round a configuration such that there is one left extremal configuration with only crashed robots, and three locations occupied by non-crashed robots. In that case, the target node become opposite to the occupied locations, and the algorithms commands the extremal robots to move away. Since the left location only has crashed robots, only the robots on the extremal right location move away. So, after one round, we reach a configuration such that, the left extremal location only contains crashed robots, then two occupied locations, one free node, and one occupied location. This configuration has two quasi axes with opposite orientation, so the robots on the left extremal location and the robots on its neighboring location are requested to move apart. The crashed robots do not move, but the neighboring robots join the next occupied location. So, after one round, we reach a configuration that is in $\LThree$, but with all robots on the left extremal position crashed. From the previous case, with five more rounds, a gathered configuration is achieved. In total, after $8$ rounds, a gathered configuration is reached.
%\end{itemize}
\end{enumerate}
So, starting from a configuration in $\LThree$, a gathered configuration is reached within $4$ rounds.\qed
\end{proof}

From the previous Lemmas in Subsect.~\ref{sec:NE->L3}--\ref{sec:L3->Gathered}, we obtain the following results.
\begin{theorem}[disc]{theorem}
Starting from a configuration $C\in\NE(k)\setminus\Periodic$ ($k>3$), Algorithm~\ref{algo:ring} solves the SUIG problem on ring-shaped networks without multiplicity detection in FSYNC.
\end{theorem}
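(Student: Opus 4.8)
The plan is to prove this as a \emph{well-founded descent} on the number $k$ of occupied nodes, reading the global behaviour off the transition diagram of Figure~\ref{fig:configuration diagram}: every reachable configuration lies in one of its boxes, and every arrow has already been justified by one of the preceding lemmas. Because a crash may strike at an arbitrary round, I would not split the argument into a "crash" and a "no-crash" phase; instead I rely on the fact that each descent lemma was proved to hold \emph{even in the presence of a crashed robot}, so a single uniform induction handles both situations. Throughout, the invariant to maintain is that the configuration stays in $\NE(\cdot)\setminus\Periodic$ (possibly passing through $\QNE\setminus\Periodic$ but recovering within one round), which keeps the induction hypothesis meaningful.

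First I would carry out the main reduction. For $k>4$, Lemma~\ref{lem:NE(k) -> L3} guarantees that from any $C\in\NE(k)\setminus\Periodic$ we reach, after finitely many rounds, a configuration $C'\in\NE(k')\setminus\Periodic$ with $k'=k-1$ or $k'=k-2$, and moreover that if $k'\le 4$ then in fact $C\in\LFive$ and $C'\in\LThree$. The point that legitimises the induction step is that non-periodicity is preserved (Lemma~\ref{lem:initial config remains non-periodic with crash}), so the hypothesis $\NE(\cdot)\setminus\Periodic$ is maintained and the strict decrease of $k$ cannot stall: the finitely many self-loops at a fixed $k$ (the main robots merely advancing towards \vtarget, or a crashed main robot temporarily staying a main robot) are precisely the loops shown to terminate inside that lemma. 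Iterating, the occupied-node count strictly decreases until we land either in $\NE(4)\setminus\Periodic$ or directly in $\LThree$.

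Next I would dispatch the two base levels. From $\NE(4)\setminus\Periodic$, Lemmas~\ref{lem:NE(4) -> NE(4) main at distance 2, without crash}--\ref{lem:NE(4) secondary adjacent -> NE(4) non adjacent, with crash} drive the configuration (through $\LFour$, and after a main crash possibly via $\LFour'$) to $\LThree$ in finitely many rounds, with Lemma~\ref{lem:L4 -> L3} closing the $\LFour\to\LThree$ step; crucially these lemmas ensure that the only $\NE(3)$ configuration ever produced is $\LThree$, so no unhandled three-robot configuration can arise. From $\LThree$, Lemma~\ref{lem:gather} shows that within four rounds a single node is occupied, covering the no-crash case and both crash locations (central, or extremal, the latter passing through $\LThree'$, $\LThree''$ and $\LTwo$, with $\LTwo$ resolved by Lemma~\ref{lem:L2}). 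Finally I would record termination and stability: Lemma~\ref{lem:L2} already establishes that once a single node is occupied Algorithm~\ref{algo:ring} commands no move, so the gathered configuration persists for all later rounds, which is exactly the SUIG requirement; and the crashed robot, if any, sits on that node by construction of the $\LTwo$, $\LThree'$ and $\LThree''$ cases, so it too is part of the gathering.

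The main obstacle is not any single arrow but guaranteeing well-foundedness of the entire descent under an adversarially-timed crash. The delicate point, isolated in Lemma~\ref{lem:NE(k) -> NE(k'), main crash} and consumed inside Lemma~\ref{lem:NE(k) -> L3}, is that while a crashed robot keeps being selected as a main robot the count $k$ need not drop for several rounds; one must exhibit a secondary decreasing measure (the sum of distances from the crashed robot to its two candidate partners in $M(C)$) and argue that the pair of main robots cannot revert to include the crashed robot indefinitely. Since that measure-decrease, together with the geometric facts that a quasi-axis arising from a crash is essentially unique (Lemma~\ref{lem:even quasi-axis is unique}) and that at most two quasi-axes of opposite orientation can coexist, are exactly what lets the recovery moves \MoveSame and \MoveOpposite simulate the corresponding no-crash transition, the proof of the theorem itself reduces to checking that the arrows of Figure~\ref{fig:configuration diagram} compose into a terminating process ending in the gathered box --- which the cited lemmas supply.
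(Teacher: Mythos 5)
Your proposal is correct and follows essentially the same route as the paper's own proof: iterate Lemma~\ref{lem:NE(k) -> L3} to descend from $\NE(k)\setminus\Periodic$ to $\NE(4)$ or $\LThree$, use Lemmas~\ref{lem:NE(4) -> NE(4) main at distance 2, without crash}--\ref{lem:NE(4) secondary adjacent -> NE(4) non adjacent, with crash} and~\ref{lem:L4 -> L3} to reach $\LThree$ from $\NE(4)$, and conclude with Lemma~\ref{lem:gather}. The additional commentary you give on well-foundedness and the secondary measure is material the paper delegates to the proofs of the cited lemmas rather than restating in the theorem's proof.
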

\begin{proof}
When executing Algorithm~\ref{algo:ring} from a configuration $C\in\NE(k)\setminus\Periodic$ with $k>4$, by Lemma~\ref{lem:NE(k) -> L3}, after a finite number of rounds, we obtain a configuration in $\LThree$ or $\NE(4)$ even if there is a crashed robot.
If it is in $\NE(4)$, by Lemmas~\ref{lem:NE(4) -> NE(4) main at distance 2, without crash}--\ref{lem:NE(4) secondary adjacent -> NE(4) non adjacent, with crash}, we obtain configuration in $\LFour$.
By Lemma~\ref{lem:L4 -> L3}, the configuration in $\LFour$ becomes a configuration in $\LThree$.
After that, from a configuration in $\LThree$, by Lemma~\ref{lem:gather}, a gathering is achieved.\qed
\end{proof}
%\textcolor{red}{QB: TODO add figures to the previous proof.}

\section{Concluding Remarks}
\label{sec:conclusion}

We further characterized the solvability of the stand-up indulgent rendezvous and gathering problems on ring-shaped networks by Look-Compute-Move oblivious robots. A number of open questions are raised by our work:
\begin{itemize}
\item Is it possible to extend our possibility result for FSYNC SUIR in node-node symmetric configurations to general SUIG?
\item Is it possible to solve FSYNC SUIG starting from a non-periodic, non-symmetric configuration?  
\item Aside from line-shaped networks (already studied by Bramas et al.~\cite{techreport_line}), is the problem solvable in other topologies?
\item Can additional capabilities help the robots solve the problem in the cases we identified as impossible? 
\end{itemize}
\newpage
\bibliography{ref}

%\newpage
%\appendix

%\section{Omitted Proofs}

%\printProofs

\end{document}